  \newtheorem{theorem}{Theorem}
  \newtheorem{lemma}[theorem]{Lemma}
  \newtheorem{proposition}[theorem]{Proposition}
  \newtheorem{definition}[theorem]{Definition}
  \newtheorem{corollary}{Corollary}
\newcommand{\cO}{\mathcal{O}}
\newcommand{\sm}{\setminus}
\newcommand{\ham}{\mathcal{H}}
\newcommand{\clustering}{\textsc{Cluster Editing}\xspace}
\newcommand{\pclustering}{\textsc{$p$-Cluster Editing}\xspace}
\newcommand{\plclustering}{\textsc{$p_\leq$-Cluster Editing}\xspace}
\title{Subexponential fixed-parameter tractability of cluster editing\footnote{The authors from the University of Bergen are supported by the European Research Council (ERC) via grant Rigorous Theory of Preprocessing, reference 267959, and by the Research Council of Norway. This work has been done while the second author was at the Utrecht University, the Netherlands, and supported by the Dutch Research Foundation (NWO). The third author is supported by the National Science Centre grant N206 567140 and the Foundation for Polish Science.}}
\author{Fedor V. Fomin\thanks{Department of Informatics, University of Bergen, Norway,\newline   \texttt{\{fomin,michal.pilipczuk,yngve.villanger\}@ii.uib.no}}
\and Stefan Kratsch\thanks{Technical University Berlin, Germany, \texttt{stefan.kratsch@tu-berlin.de}}
\and Marcin  Pilipczuk\thanks{Institute of Informatics, University of Warsaw, Poland, \texttt{malcin@mimuw.edu.pl }\newline }  \addtocounter{footnote}{-3}
\and Micha{\l}  Pilipczuk\footnotemark  \addtocounter{footnote}{-1}
\and Yngve Villanger\footnotemark
}
\begin{document}

\date{}
\maketitle

 \begin{abstract}
In the \textsc{Correlation Clustering} problem, also known as \textsc{Cluster Editing}, we are given an undirected graph $G$ and a positive integer $k$; the task is to decide whether $G$ can be transformed into a cluster graph, i.e., a disjoint union of cliques, by changing at most $k$ adjacencies, that is, by adding or deleting at most $k$ edges. 


We study the parameterized complexity of \textsc{Correlation Clustering} with this restriction on the number of cliques to be created. We give an algorithm that 
\begin{itemize}
\item in time $\cO(2^{\cO(\sqrt{pk})} + n+m)$ decides whether a graph $G$ on $n$ vertices and
$m$ edges can be transformed into a cluster graph with exactly $p$ cliques by changing at most $k$ adjacencies. 
\end{itemize}

We complement these algorithmic findings by the following, surprisingly tight lower bound on the asymptotic behavior of our algorithm. We show that unless the Exponential Time Hypothesis (ETH) fails 
\begin{itemize}
\item  
for any constant $0 \leq \sigma \leq 1$, there is $p = \Theta(k^\sigma)$ such that there is no algorithm deciding in time 
$2^{o(\sqrt{pk})} \cdot  n^{\cO(1)}$ whether an $n$-vertex graph $G$ can be transformed into a cluster graph with at most $p$ cliques by changing at most $k$ adjacencies.
 \end{itemize}
Thus, our upper and lower bounds provide an asymptotically tight analysis of the multivariate parameterized complexity of the problem for the whole range of values of $p$ from constant to a linear function of $k$.
 \end{abstract}
 
\section{Introduction}\label{sec:intro}
\emph{Correlation clustering}, also known as \emph{clustering with qualitative information} or \emph{cluster editing}, is the problem to cluster objects based only on 
the qualitative information concerning similarity between pairs of them. For every pair of objects we have a binary indication whether they are similar or not. The task is to find a partition of the objects into clusters minimizing the number of similarities between different clusters and non-similarities inside of clusters. The problem was introduced by Ben-Dor, Shamir, and Yakhini \cite{Ben-DorSY99} motivated by problems from computational biology, and, independently, by Bansal, Blum, and Chawla \cite{Bansal04}, motivated by machine learning problems concerning document clustering according to similarities. The correlation version of clustering was studied intensively, including \cite{AilonCN08,AlonMMN05,AroraBKSH05,CharikarGW05j,CharikarW04,GiotisG06,ShamirST04}.

The graph-theoretic formulation of the problem is the following. A graph $K$ is a \emph{cluster graph} if every connected component of $K$ is a complete graph.
Let $G=(V,E)$ be a graph; then $F \subseteq V \times V$ is called a \emph{cluster editing set} for $G$ if $G\triangle F =(V,E\triangle F)$ is a cluster graph. Here $E\triangle F$ is the symmetric difference between $E$ and $F$. In the optimization version of the problem the task is to find a cluster editing set of minimum size. Constant factor approximation algorithms for this problem were obtained in  \cite{AilonCN08,Bansal04,CharikarGW05j}. On the negative side, the problem is known to be NP-complete \cite{ShamirST04} and, as was shown by Charikar, Guruswami, and Wirth~\cite{CharikarGW05j}, also APX-hard.

Giotis and Guruswami \cite{GiotisG06} initiated the study of clustering when the maximum number of clusters that we are allowed to use is stipulated to be a fixed constant $p$. As observed by them, this type of clustering is  well-motivated in settings where the number of clusters might be an external constraint that has to be met.  It appeared that $p$-clustering variants posed new and non-trivial challenges. In particular, in spite of the APX-hardness of the general case, Giotis and Guruswami \cite{GiotisG06} gave a PTAS for this version of the problem. 

A cluster graph $G$  is called a \emph{$p$-cluster graph}  if it has exactly $p$ connected components or, equivalently, if it is a disjoint union of exactly $p$ cliques. 
Similarly, a set $F$ is a \emph{$p$-cluster editing set} of $G$, if $G\triangle F$ is a $p$-cluster graph.
In parameterized complexity, correlation clustering and its restriction to bounded number of clusters were studied under the names \textsc{Cluster Editing} and $p$-\textsc{Cluster Editing}, respectively.

\vspace{-0.1cm}
\begin{center}
\fbox{\begin{minipage}{13cm}
\noindent  \textsc{Cluster Editing} \hfill {\sl Parameter:} $k$.\\
{\sl Input:} A graph $G =(V,E) $ and a non-negative integer $k$.\\
{\sl Question:} Is there a cluster editing set for $G$ of size at most $k$? 
\end{minipage}}
\end{center}
\vspace{-0.5cm}
 
\begin{center}
\fbox{\begin{minipage}{13cm}
\noindent  \textsc{$p$-Cluster Editing} \hfill {\sl Parameters:} $p,k$.\\
{\sl Input:} A graph $G =(V,E) $ and non-negative integers $p$ and $k$.\\
{\sl Question:} Is there a $p$-cluster editing set for $G$ of size at most $k$?
\end{minipage}}
\end{center}
\vspace{-0.1cm}

The parameterized version of \textsc{Cluster Editing}, and variants of it, were studied  intensively \cite{bocker:iwoca,BockerBBT08,BockerBK11,BockerD11,BodlaenderFHMPR10,Damaschke10,FellowsGKNU11,GrammGHN05,GuoKKU11,GuoKNU10,komusiewicz:sofsem,ProttiSS09}. The problem is solvable in time 
$\cO(1.62^k +n+m)$ \cite{bocker:iwoca} and it has a kernel with $2k$ vertices
\cite{CaoC10,ChenM10} (see Section~\ref{sec:prelim} for the definition of a kernel).    
  Shamir et al.~\cite{ShamirST04} showed that \textsc{$p$-Cluster Editing} is NP-complete for every fixed $p\geq 2$. A kernel with $(p+2)k+p$ vertices was given by Guo \cite{Guo09}. 
 
\paragraph*{Our results} We study the impact of the interaction between $p$ and $k$ on the parameterized complexity of \textsc{$p$-Cluster Editing}. Our main algorithmic result is the following.

\begin{theorem}\label{thm:pclustering-subept}
\pclustering{} is solvable in time $\cO(2^{\cO(\sqrt{pk})} + {m+n})$.
\end{theorem}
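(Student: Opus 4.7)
My plan has three main ingredients: the known polynomial kernel, enumeration over cluster size profiles via integer partitions, and a structural lemma pinning down the untouched vertices of each cluster.

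\textbf{Step 1 (Kernelization).} Invoke the $(p+2)k+p$-vertex kernel of Guo~\cite{Guo09} in $\cO(n+m)$ time. Henceforth I may assume $n \leq \cO(pk)$; the additive $\cO(n+m)$ term in the target running time comes from this preprocessing.

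\textbf{Step 2 (Enumeration of cluster size profiles).} Any candidate solution produces a $p$-cluster graph whose clique sizes form an integer partition $(s_1,\ldots,s_p)$ of $n$ into at most $p$ parts. By the Hardy--Ramanujan estimate the number of integer partitions of $N$ is $e^{\cO(\sqrt{N})}=2^{\cO(\sqrt{N})}$; with $N=n=\cO(pk)$ there are only $2^{\cO(\sqrt{pk})}$ profiles to iterate over.

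\textbf{Step 3 (Structural lemma and per-profile resolution).} Fix a profile. The structural observation is that, in a hypothetical solution with edit set $F$ of size $\leq k$ producing cliques $C_1,\ldots,C_p$, a vertex $v \in C_i$ not touched by any edge of $F$ satisfies $N_G[v]=C_i$. Hence two untouched vertices of the same cluster are true twins in $G$, while an untouched vertex of $C_i$ and an untouched vertex of $C_j$ with $i\neq j$ are never twins. This pins down the untouched portion of every cluster as a true-twin class in $G$. After contracting twin classes, only the $\leq 2k$ touched vertices remain to be assigned, and the fixed profile specifies the target size of each cluster; I match the twin classes to profile entries and then resolve the touched part by a bounded-depth branching that exploits both the profile and the twin structure to cap its depth at $\cO(\sqrt{pk})$, giving $2^{\cO(\sqrt{pk})}$ work per profile.

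The main obstacle is step 3. The structural lemma cleanly separates the ``untouched core'' (determined by profile plus twins) from the ``touched part'' of size $\leq 2k$, but bounding the cost of resolving the touched part needs care: naive branching on $2k$ vertices over $p$ labels would cost $p^{2k}$, which is too much. The constraint that the final cluster sizes match the fixed profile, together with the subexponential accounting already established in step 2, must be leveraged to confine the branching to $\cO(\sqrt{pk})$ essential choices; this is where I expect most of the technical work to lie, and in particular where any spurious $\log p$ factor must be eliminated to reach the sharp $2^{\cO(\sqrt{pk})}$ bound.
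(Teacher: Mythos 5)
There is a genuine gap, and it sits exactly where you flag it: Step 3 is not a proof but a statement of hope. You correctly observe that naive assignment of the up to $2k$ touched vertices to $p$ clusters costs $p^{\Theta(k)}$, which already for $p=\Theta(k)$ is $2^{\Theta(k\log k)}$ and far exceeds $2^{\cO(\sqrt{pk})}$; but you give no mechanism by which the fixed size profile and the twin structure reduce this to ``$\cO(\sqrt{pk})$ essential choices.'' The profile is a very weak constraint: it only prescribes multiset of cluster sizes, and the number of ways to distribute $2k$ touched vertices among $p$ clusters consistently with a given profile is in general still superexponential in $\sqrt{pk}$. Your structural lemma is also weaker than you need: the untouched vertices of a cluster do form a true-twin class, but a cluster may have \emph{no} untouched vertices (this is precisely the hard regime, cf.\ Theorem~\ref{thm:eth}, where every cluster is touched), and a twin class may contain touched vertices as well, so the ``matching of twin classes to profile entries'' is neither injective nor well-defined without further argument. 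Finally, you omit the preprocessing that enforces $p\le 6k$ (Lemma~\ref{lem:preprocessing}), without which several quantities in your accounting are not controlled.

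The paper's actual route avoids vertex-by-vertex assignment altogether. The key combinatorial fact is a bound on the number of \emph{$k$-cuts}: a cluster graph with $p\le 6k$ clusters has at most $2^{8\sqrt{pk}}$ cuts $(V_1,V_2)$ with at most $k$ crossing edges (via $\binom{a+b}{a}\le 2^{2\sqrt{ab}}$ and Cauchy--Schwarz), and since a YES-instance is within $k$ edits of such a graph, $G$ itself has at most $2^{\cO(\sqrt{pk})}$ $k$-cuts, which can be enumerated with polynomial delay (answering NO if the bound is exceeded). The clusters $C_1,\dots,C_p$ of a solution induce a nested chain of $k$-cuts $\bigl(\bigcup_{i\le j}V(C_i),\bigcup_{i>j}V(C_i)\bigr)$, so a dynamic program over the enumerated cuts --- reachability in a digraph on (cut, number of clusters used, cost so far) --- recovers the solution. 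This cut-counting argument is the idea your proposal is missing; without it, or some substitute of comparable strength, Step 3 does not go through.
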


It is straightforward to modify our algorithm to work also in the following variants of the problem, where each edge and non-edge is assigned some edition cost: either {\emph{(i)}} all costs are at least one and $k$ is the bound on the maximum total cost of the solution, or {\emph{(ii)}} we ask for a set of at most $k$ edits of minimum cost. Let us also remark that, by Theorem~\ref{thm:pclustering-subept}, if $p=o(k)$ then \pclustering{} can be solved in $2^{o(k)}n^{\cO(1)}$ time, and thus it belongs to complexity class SUBEPT defined by Flum and Grohe \cite[Chapter~16]{FlumGrohebook}. Until very recently,  the only problems known to be in the class  SUBEPT were  the  problems with additional constraints on the input, like being a  planar,  $H$-minor-free, or tournament  graph \cite{AlonLS09,DemaineFHT05jacm}. However, recent algorithmic developments indicate  that the structure of the class  SUBEPT is much more interesting than expected. It appears that some parameterized problems related to chordal graphs, like  \textsc{Minimum Fill-in} or \textsc{Chordal Graph Sandwich}, are also in SUBEPT \cite{FominV126soda}. 

We would like to remark that \pclustering{} can be also solved in worse time complexity $\cO((pk)^{\cO(\sqrt{pk})} + {m+n})$ using simple guessing arguments. 
One such algorithm is based on the following observation:
Suppose that, for some integer $r$, we know at least $2r+1$ vertices from each cluster. Then, if an unassigned vertex has at most $r$
incident modifications, we know precisely to which cluster it belongs: it is adjacent to at least $r+1$ vertices already assigned to its cluster
and at most $r$ assigned to any other cluster. On the other hand, there are at most $2k/r$ vertices with more than $r$ incident modifications.
Thus (i) guessing $2r+1$ vertices from each cluster (or all of them, if there are less than $2r+1$), and (ii) guessing all vertices with more than $r$ incident modifications, together
with their alignment to clusters, results in at most $n^{(2r+1)p} n^{2k/r} p^{2k/r}$ subcases.
By pipelining it with the kernelization of Guo~\cite{Guo09} and with simple reduction rules that ensure $p \leq 6k$
(see Section~\ref{sec:large-p} for details), we obtain the claimed time complexity
for $r \sim \sqrt{k/p}$.

An approach via {\emph{chromatic coding}}, introduced by Alon et al.~\cite{AlonLS09}, also leads to an algorithm with running time $\cO(2^{\cO(p\sqrt{k}\log p)}+n+m)$. However, one needs to develop new concepts to construct an algorithm for \pclustering{} with complexity bound as promised in Theorem~\ref{thm:pclustering-subept}, and thus obtain a subexponential complexity for every sublinear $p$.

The crucial observation is that a $p$-cluster graph, for $p=\cO(k)$, has $2^{\cO(\sqrt{pk})}$ edge cuts of size at most $k$ (henceforth called {\em{$k$-cuts}}). As in a YES-instance to the \pclustering{} problem each $k$-cut is a $2k$-cut of a $p$-cluster graph, we infer a similar bound on the number of cuts if we are dealing with a YES-instance. This allows us to use dynamic programming over the set of $k$-cuts. Pipelining this approach with a kernelization algorithm for \pclustering{} proves Theorem~\ref{thm:pclustering-subept}.

A new and active direction in parameterized complexity is the pursuit of asymptotically tight  bounds on the complexity of problems. In several cases, it is possible to obtain a complete analysis  by providing  matching lower (complexity) and upper (algorithmic) bounds. We refer to the recent survey of Marx~\cite{marx:future}, where  recent developments in the area are discussed, and the ``optimality program" is announced  among the main future research directions in parameterized complexity. 
 The most widely used complexity assumption for such tight lower bounds is the \emph{Exponential Time Hypothesis (ETH)}, which  posits that no subexponential-time algorithms for~$k${\sc{-CNF-SAT}} or {\sc{CNF-SAT}} exist~\cite{ImpagliazzoPZ01}.
   
Following this direction, we complement Theorem~\ref{thm:pclustering-subept} with two lower bounds. Our first, main lower bound is based on the following technical Theorem~\ref{thm:multivariate-reduction}, which shows that the exponential time dependence of our algorithm is asymptotically tight for any choice of parameters $p$ and $k$, where $p=\cO(k)$. As one can provide polynomial-time reduction rules that ensure that $p\leq 6k$ (see Section~\ref{sec:large-p} for details), this provides a full and tight picture of the multivariate parameterized complexity of \pclustering{}: we have asymptotically matching upper and lower bounds on the whole interval between $p$ being a constant and linear in $k$. To the best of our knowledge, this is the first fully multivariate and tight complexity analysis of a parameterized problem.
  
\begin{theorem}\label{thm:multivariate-reduction}
For any  $\varepsilon > 0 $ there is $\delta > 0$ and
a polynomial-time algorithm that, given positive integers $p$ and $k$
and a $3$-CNF-SAT formula $\Phi$ with $n$ variables and $m$ clauses, 
    such that $k,n \geq \varepsilon p$ and $n,m \leq \sqrt{pk}/\varepsilon$,
    computes a graph $G$ and integer $k'$,
    such that $k' \leq \delta k$, $|V(G)| \leq \delta\sqrt{pk}$, and
    \begin{itemize}
    \item if $\Phi$ is satisfiable then there is a $6p$-cluster graph $G_0$ with $V(G) = V(G_0)$ and $|E(G) \triangle E(G_0)| \leq k'$; 
    \item if there exists a $p'$-cluster graph $G_0$ with $p' \leq 6p$, $V(G) = V(G_0)$ and $|E(G) \triangle E(G_0)| \leq k'$, then $\Phi$ is satisfiable.
    \end{itemize}
\end{theorem}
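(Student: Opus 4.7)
The plan is to use a \emph{grouping plus selector gadget} reduction in the style of Lokshtanov--Marx--Saurabh subexponential ETH lower bounds, tailored so that the vertex count scales with $\sqrt{pk}$ and the editing budget with $k$. The hypotheses $n,m \le \sqrt{pk}/\varepsilon$ and $n \ge \varepsilon p$ imply that $t := \lceil n/p \rceil$ satisfies $\varepsilon \le t \le \sqrt{k/p}/\varepsilon + 1$, giving exactly enough room to spend $\Theta(\sqrt{k/p})$ vertices and $\Theta(k/p)$ edits per variable block, plus $O(1)$ vertices and edits per clause.

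First, partition the variables arbitrarily into blocks $V_1,\dots,V_p$ of size $t$. For each block $V_i$ I would build a \emph{selector gadget} $S_i$ on $\Theta(\sqrt{k/p})$ vertices, comprising a ``true'' anchor clique $A_i^T$ and a ``false'' anchor clique $A_i^F$, each of size $\Theta(\sqrt{k/p})$, together with two literal vertices $x^T, x^F$ for every $x \in V_i$, linked by a carefully chosen edge pattern. The gadget is engineered so that any cluster editing of $S_i$ with cost below a threshold $\kappa_i = \Theta(k/p)$ places $A_i^T \cup \{x^{\alpha(x)} : x \in V_i\}$ into one cluster and $A_i^F \cup \{x^{\overline{\alpha(x)}} : x \in V_i\}$ into another, for some assignment $\alpha \colon V_i \to \{T,F\}$, with every such ``honest'' configuration costing exactly $\kappa_i$. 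Thus the $2^t$ assignments of $V_i$ are in bijection with the optimal editings of $S_i$. For each clause $c = \ell_1 \vee \ell_2 \vee \ell_3$ I would attach a constant-size clause gadget to the literal vertices representing the three literals, designed to meet a baseline cost when at least one of them lies in its block's ``true'' cluster, and to exceed the baseline by at least $1$ otherwise. Crucially, the clause gadget must merge with an existing anchor cluster in the optimum, rather than spawn a new cluster.

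Set $k'$ to $\sum_i \kappa_i$ plus the total clause-gadget baselines. The forward direction is immediate: a satisfying assignment yields an editing with at most $6p$ clusters (two anchor clusters per block, plus a constant number of auxiliary clusters internal to each $S_i$) and total cost exactly $k'$. The backward direction exploits the built-in gaps: any editing with $p' \le 6p$ clusters and cost $\le k'$ must be ``honest'' on every $S_i$ (a non-honest configuration costs at least one extra unit, exhausting the budget), which extracts an assignment $\alpha$; the clause-gadget gaps then certify that $\alpha$ satisfies every clause. Finally $|V(G)| \le p \cdot \Theta(\sqrt{k/p}) + O(m) = O(\sqrt{pk})$ and $k' \le p \cdot \Theta(k/p) + O(m) = O(k)$, since $\sqrt{pk} \le k/\sqrt{\varepsilon}$ under $p \le k/\varepsilon$.

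The main obstacle is the design of the selector gadget $S_i$. Its combinatorial role is to realize $2^t = 2^{O(\sqrt{k/p})}$ distinct optimum cluster editings, all of identical cost and each encoding a unique assignment of $V_i$, in a graph on $\Theta(\sqrt{k/p})$ vertices with edit cost $\Theta(k/p)$. The raw capacity is present --- a clique of size $s = \Theta(\sqrt{k/p})$ admits $\binom{s}{s/2} = 2^{\Theta(s)}$ balanced splits --- but forcing the optimum to lie precisely on the encoding of a Boolean assignment, with uniform cost across all $2^t$ encodings and no spurious low-cost competitor, requires a delicate interplay between the two anchor cliques and the literal vertices. A secondary difficulty is that the clause gadgets must not introduce new clusters; since $m$ may be of order $p$, any per-clause cluster would push the count past $6p$, so each clause gadget has to force its ``extra'' vertex into an already-existing anchor cluster, which must be enforceable locally from just the three literal vertices the clause touches.
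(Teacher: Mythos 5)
Your plan follows the same broad strategy as the paper (per-block anchor cliques of size $\Theta(\sqrt{k/p})$, literal vertices that choose an anchor to encode an assignment, clause gadgets that check satisfaction), but it stops exactly where the actual work begins: both gadgets whose existence the argument hinges on are left unconstructed, and you yourself flag them as ``the main obstacle'' and ``a secondary difficulty.'' That is a genuine gap, not a routine omission. The paper's resolution is quite specific: it uses \emph{six} cliques per variable block arranged cyclically, a $6$-cycle of vertices $w^x_{\alpha,\alpha+1}$ per variable each adjacent to two consecutive cliques (so that the only ways to save three of the six cycle edges are ``all odd'' or ``all even,'' encoding the truth value), and \emph{nine} clause vertices per clause with a rotation index so that each clause vertex sees three pairwise distinct cliques even when several variables of the clause lie in the same block. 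The paper remarks that the grouping into six is crucial; a two-anchor selector of the kind you sketch has no analogous mechanism forcing the per-variable choice to be consistent and uniformly priced, and it is not clear one exists.

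Two further points in your soundness sketch would fail as stated. First, the claim that a ``non-honest configuration costs at least one extra unit, exhausting the budget'' is a local argument, but dishonesty pays off globally: merging two anchor cliques into one cluster lets every literal vertex attached to both keep \emph{all} its $\Theta(L)$ anchor edges, a saving of order $Ln'/p$ that must be beaten by the $L^2/2$ merge cost. This is why the paper takes $L = 1000(1+n'/(p\varepsilon))$ and runs a global accounting over the number $a$ of ``crowded'' cliques and the number of ``alone'' vertices, rather than a gadget-by-gadget argument. Second, your budget ignores the edges that must be \emph{added} inside clusters among the non-anchor vertices; the number of such edges depends on how the $\Theta(n'+m')$ literal and clause vertices are distributed among the $\le 6p$ clusters, and an adversarial solution can shrink it by unbalancing the clusters. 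The paper neutralizes this by preprocessing $\Phi$ so that every honest solution yields perfectly equal-sized clusters (whence Lemma~\ref{lem:even} gives a matching lower bound), and the extra clause vertices $s^C_{\beta,2}, s^C_{\beta,3}$ exist precisely to pad the clusters to equal size. Without these ingredients the inequality $|F|\ge k'$ in the soundness direction does not go through.
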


As the statement of Theorem \ref{thm:multivariate-reduction} may look technical, we gather its two main consequences in Corollaries \ref{cor:p-k} and \ref{cor:fixed-p}. We state both corollaries in terms of an easier \plclustering problem, where the number of clusters has to be at most $p$ instead of precisely equal to $p$. Clearly, this version can be solved by an algorithm for \textsc{$p$-Cluster Editing} with an additional $p$ overhead in time complexity by trying all possible $p'\leq p$, so the lower bound holds also for harder \pclustering; however, we are not aware of any reduction in the opposite direction. In both corollaries we use the fact that existence of a subexponential, in both the number of variables and clauses, algorithm for verifying satisfiability of $3$-CNF-SAT formulas would violate ETH \cite{ImpagliazzoPZ01}.

\begin{corollary}\label{cor:p-k}
Unless ETH fails, for every $0 \leq \sigma \leq 1$, there is $p = \Theta(k^\sigma)$ such that \plclustering is not solvable in time $2^{o(\sqrt{pk})} |V(G)|^{\cO(1)}$.
\end{corollary}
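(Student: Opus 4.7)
The plan is to combine Theorem~\ref{thm:multivariate-reduction} with the sparsification lemma of~\cite{ImpagliazzoPZ01} to turn the ETH lower bound for $3$-\textsc{CNF-SAT} into a hardness statement for \plclustering in the regime $p = \Theta(k^\sigma)$. Recall that under ETH together with sparsification, there is no algorithm for $3$-\textsc{CNF-SAT} running in time $2^{o(n)}$ on instances with $n$ variables and $m = \cO(n)$ clauses. It therefore suffices to produce, from every sparse $3$-\textsc{CNF-SAT} formula, an equivalent \plclustering instance of polynomial size satisfying both $\sqrt{pk} = \cO(n)$ and $p = \Theta(k^\sigma)$.

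Fix $\sigma \in [0,1]$ and a sparse formula $\Phi$ with $n$ variables and $m = \cO(n)$ clauses, and set $\hat{p} = \lceil n^{2\sigma/(\sigma+1)} \rceil$, $\hat{k} = \lceil n^{2/(\sigma+1)} \rceil$. Then $\sqrt{\hat{p}\hat{k}} = \Theta(n)$ and $\hat{p} = \Theta(\hat{k}^\sigma)$, and a direct check shows that for $n$ large enough (as a function of $\sigma$) the conditions $\hat{k}, n \geq \varepsilon\hat{p}$ and $n, m \leq \sqrt{\hat{p}\hat{k}}/\varepsilon$ of Theorem~\ref{thm:multivariate-reduction} hold for some constant $\varepsilon = \varepsilon(\sigma) > 0$. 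Applying that theorem with parameters $\hat p$ and $\hat k$ yields in polynomial time a graph $G$ with $|V(G)| = \cO(n)$ and a budget $k' \leq \delta\hat{k}$ such that $\Phi$ is satisfiable iff $(G, 6\hat{p}, k')$ is a YES-instance of \plclustering.

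Since Theorem~\ref{thm:multivariate-reduction} only bounds $k'$ from above, we may have $k' \ll \hat k$, in which case $p = 6\hat p$ would be too large relative to $(k')^\sigma$. To rectify this, we pad the instance with disjoint copies of a rigid gadget $H_r$ of minimum cluster editing cost $r$ and cluster contribution $1$. A concrete choice for $H_r$ is $K_{2r+2}$ with $r$ edges removed at a single vertex $v$: adding back those $r$ edges produces $K_{2r+2}$ in $r$ edits, while splitting off any vertex from the resulting near-clique (or any more elaborate alteration) can be checked to cost at least $r+1$, so the minimum editing of $H_r$ is $r$ and is uniquely attained with a single cluster. Setting $r = \lceil\hat{k}/\hat{p}\rceil$ and appending $S = \lceil(\hat{k}-k')/r\rceil \leq \hat p$ disjoint copies of $H_r$ to $G$ (taking $S = 0$ if $k' \geq \hat k$), we obtain a padded instance $(G', 6\hat p + S, k' + rS)$ on polynomially many vertices in $n$. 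Because the gadget copies lie in their own connected components and each must be edited at cost exactly $r$ producing exactly $1$ cluster in any optimal solution, the $G$-part of any such solution has cost at most $k'$ and at most $6\hat p$ clusters, so the biconditional with satisfiability of $\Phi$ is preserved. By construction, $p_{\text{pad}} := 6\hat p + S = \Theta(\hat{p})$ and $k_{\text{pad}} := k' + rS = \Theta(\hat{k})$, which gives the required $p_{\text{pad}} = \Theta(k_{\text{pad}}^\sigma)$.

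Finally, assume for contradiction that \plclustering admits an algorithm running in time $2^{o(\sqrt{pk})}\,|V(G)|^{\cO(1)}$ on instances satisfying $p = \Theta(k^\sigma)$. Applied to the padded instances above, its running time would be $2^{o(\sqrt{\hat{p}\hat{k}})} \cdot \mathrm{poly}(n) = 2^{o(n)}$, yielding a $2^{o(n)}$-time decision procedure for sparse $3$-\textsc{CNF-SAT} and thus contradicting ETH. I expect the padding step to be the main obstacle in executing this plan: the rigid gadget must be designed so that its unique minimum editing contributes a fixed number of clusters and a tunable number of edits, and $r$ and $S$ must be tuned so that $p_{\text{pad}} = \Theta(\hat p)$ and $k_{\text{pad}} = \Theta(\hat k)$ hold simultaneously across the whole range $\sigma \in [0,1]$, including the boundary cases $\sigma = 0$ (constant $p$) and $\sigma = 1$ (linear $p$).
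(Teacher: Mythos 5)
Your core reduction is exactly the paper's: the same choice $\hat k = \lceil n^{2/(1+\sigma)}\rceil$, $\hat p = \lceil n^{2\sigma/(1+\sigma)}\rceil$ fed into Theorem~\ref{thm:multivariate-reduction}, yielding a $2^{o(\sqrt{\hat p\hat k})} = 2^{o(n+m)}$ decision procedure for $3$-CNF-SAT (the paper forces $m\le n$ by adding dummy clauses on fresh variables rather than invoking sparsification explicitly, but the two are interchangeable). The paper stops there: it applies the hypothetical algorithm directly to $(G,6\hat p,k')$ and uses only the upper bound $k'\le\delta\hat k$ to bound the exponent by $o(\sqrt{\hat p\hat k})=o(n)$; it does not renormalize $k'$ so that the produced instance itself satisfies $p=\Theta(k^\sigma)$ in terms of its own parameters.

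The padding step you add to achieve that renormalization is where your argument breaks, and for the reason you half-anticipated. In \plclustering the number of clusters is only bounded from above, so a feasible solution to the padded instance may \emph{merge} gadget copies with each other, or with clusters of the $G$-part, in order to hand surplus clusters over to $G$. Merging one copy of $H_r$ into a singleton cluster of the $G$-part costs only $2r+2$ extra edge additions (and $(2r+2)^2$ for merging two gadgets); with $r=\lceil\hat k/\hat p\rceil$ this is $\cO(\hat k/\hat p)$, which is far below the available budget $k'+rS=\Theta(\hat k)$ whenever $\hat p=\omega(1)$. Consequently your claim that ``the $G$-part of any such solution has cost at most $k'$ and at most $6\hat p$ clusters'' does not follow: a solution may spend $j\cdot\cO(\hat k/\hat p)$ extra edits on the gadgets and run the $G$-part with $6\hat p+j$ clusters, and the soundness direction of Theorem~\ref{thm:multivariate-reduction} is silent about cluster graphs with more than $6p$ clusters, so you cannot conclude that $\Phi$ is satisfiable. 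To salvage the padding you would need gadgets whose merging cost exceeds the entire budget (e.g., cliques of size $\omega(\sqrt{\hat k\,})$, which is still polynomially many vertices), or a separate argument that allowing a few extra clusters cannot make the $G$-instance much cheaper; alternatively, simply drop the padding and argue as the paper does.
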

\begin{proof}
Assume we are given a $3$-CNF-SAT formula $\Phi$ with $n$ variables and $m$ clauses.
If $n < m$, then $\lceil (m-n)/2 \rceil$ times perform the following operation: add three new variables $x$, $y$ and $z$, and clause $(x \vee y \vee z)$. In this way we preserve the satisfiability of $\Phi$, increase the size of $\Phi$ by a constant factor, and ensure that $n \geq m$.

Take now $k = \lceil n^\frac{2}{1+\sigma} \rceil$, $p = \lceil n^\frac{2\sigma}{1+\sigma} \rceil$. As $n \geq m$ and $0 < \sigma \leq 1$,
we have $k,n \geq p$ and $n,m \leq \sqrt{pk}$ but $n+m = \Omega(\sqrt{pk})$. Invoke Theorem \ref{thm:multivariate-reduction}
for $\varepsilon = 1$ and apply the reduction algorithm for the formula $\Phi$ and parameters $p$ and $k$,
obtaining a graph $G$ and a parameter $k'$. Note that $6p = \Theta(k^\sigma)$.
Apply the assumed algorithm for the \plclustering problem to the instance $(G,6p,k')$.
In this way we resolve the satisfiability of $\Phi$ in time
$2^{o(\sqrt{pk})}|V(G)|^{\cO(1)} =  2^{o(n+m)}$, contradicting ETH.
\end{proof}

\begin{corollary}\label{cor:fixed-p}
Unless ETH fails, for every constant $p \geq 6$, there is no algorithm solving \plclustering in time
$ 2^{o(\sqrt{k})} |V(G)|^{\cO(1)}$ or $2^{o(|V(G)|)}$.
\end{corollary}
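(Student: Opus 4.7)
The plan is to apply Theorem~\ref{thm:multivariate-reduction} with its parameter $p$ chosen as a small constant (denote it $p_0$ to avoid clashing with the corollary's $p$) and with $k$ scaled quadratically in the size of the input 3-CNF-SAT formula. Given a formula $\Phi$ with $n$ variables and $m$ clauses, first pad $\Phi$ so that $n = \Theta(m)$, as in the proof of Corollary~\ref{cor:p-k}, preserving satisfiability. Set $p_0 := \lfloor p/6 \rfloor \geq 1$ and $k := \lceil (n+m)^2 / p_0 \rceil$; for $n+m$ sufficiently large this satisfies the theorem's preconditions $k, n \geq \varepsilon p_0$ and $n, m \leq \sqrt{p_0 k}/\varepsilon$ with $\varepsilon = 1$.

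Invoking Theorem~\ref{thm:multivariate-reduction} yields a graph $G$ with $|V(G)| = \cO(\sqrt{p_0 k}) = \cO(n+m)$ and an edit budget $k' = \cO(k) = \cO((n+m)^2)$, together with the equivalence: $\Phi$ is satisfiable if and only if $G$ admits a $p'$-cluster editing with $p' \leq 6 p_0$ and at most $k'$ edits. When $p$ is a multiple of $6$ we have $6 p_0 = p$ and $(G, p, k')$ is directly the desired instance of \plclustering. For general constant $p \geq 6$ the gap $p - 6 p_0 \in \{0,1,2,3,4,5\}$ is bridged by attaching a constant number of ``rigid'' disjoint-clique gadgets of size exceeding $k'$, which any cost-$\leq k'$ solution must keep as separate singleton clusters; thus any $\leq p$-clustering of cost $\leq k'$ of the augmented graph restricts to a $\leq 6 p_0$-clustering of $G$ of cost $\leq k'$, and we may apply the theorem's characterization.

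With this reduction in hand, both lower bounds follow by contradicting ETH. A hypothetical algorithm deciding \plclustering{} in time $2^{o(\sqrt{k'})} |V(G)|^{\cO(1)}$ would decide satisfiability of $\Phi$ in time $2^{o(\sqrt{(n+m)^2})} (n+m)^{\cO(1)} = 2^{o(n+m)}$, contradicting ETH; and an algorithm running in $2^{o(|V(G)|)}$ time would likewise yield a $2^{o(n+m)}$-time satisfiability test.

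The main hurdle is closing the cluster-count gap for $p$ not divisible by $6$ while preserving $|V(G)| = \cO(n+m)$: the naive rigid-clique gadgets need size $\Theta(k') = \Theta((n+m)^2)$, which blows up $|V(G)|$ to $\Theta((n+m)^2)$ and weakens the $2^{o(|V(G)|)}$ lower bound below what ETH rules out (while still being fine for the $2^{o(\sqrt{k})}|V(G)|^{\cO(1)}$ bound). Handling this cleanly for every constant $p \geq 6$ requires either a smaller gap-bridging construction exploiting the specific structure of the graph produced by Theorem~\ref{thm:multivariate-reduction}, or a tailored variant of that theorem that outputs instances whose characterization matches exactly $p$ target clusters.
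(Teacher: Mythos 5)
Your core reduction is the same as the paper's: invoke Theorem~\ref{thm:multivariate-reduction} with a constant value of its parameter and with $k$ quadratic in the formula size, so that $|V(G)|=\cO(\sqrt{k})=\cO(n+m)$ and $k'=\cO((n+m)^2)$, and then derive both lower bounds from ETH exactly as you do. (The paper always takes the theorem's parameter equal to $1$, producing a $6$-cluster instance, and pads up to $p$; you take $\lfloor p/6\rfloor$ and pad over a gap of at most five --- this difference is immaterial.) For $p$ divisible by $6$ your argument is complete.

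The difficulty you flag for $6\nmid p$ is genuine, and your diagnosis is accurate: padding cliques of size exceeding $k'$ inflate $|V(G)|$ to $\Theta((n+m)^2)$, after which $2^{o(|V(G)|)}$ no longer contradicts ETH. The idea you are missing is that a padding clique need not be larger than the entire budget; it only needs to be large enough that \emph{merging} it with another padding clique is unaffordable, and merging two cliques of size $s$ costs $s^2$ edge additions. Hence cliques of size $s=\Theta(\sqrt{k})$ suffice and keep the total vertex count at $\cO(\sqrt{k})=\cO(n+m)$. This is what the paper does: it introduces $p-6$ extra cliques of size $|V(G)|$, and one can check from the construction in Section~\ref{app:multi-construction} that $|V(G)|^2$ exceeds $k'$ (both are $\Theta((n')^2)$, but $|V(G)|\geq 6L$ with $L=1000(1+n')$, whereas the dominant term of $k'$ is $(6n'+36m')L$, so the leading constant of $|V(G)|^2$ is far larger). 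By Lemma~\ref{lem:twins} each padding clique may be assumed to occupy a single cluster, and no two of them can share a cluster within budget, so they consume the surplus clusters and $V(G)$ is left with at most $6\lfloor p/6\rfloor$ of them. To be fully rigorous one should still argue that vertices of $G$ cannot profitably drift into a padding clique's cluster (or can be relocated at no extra cost); the paper dispatches the entire $p>6$ case in one sentence, so it is terse on this point too, but the quantitative correction above --- side length $\Theta(\sqrt{k'})$ rather than $\Theta(k')$ --- is the essential missing step, and with it your construction goes through.
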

\begin{proof}
We prove the corollary for $p=6$; the claim for larger values of $p$ can be proved easily taking the graph obtained in the reduction and introducing additional $p-6$ cliques of its size.

Assume we are given a $3$-CNF-SAT formula $\Phi$ with $n$ variables and $m$ clauses.
Take $k = \max(n,m)^2$, invoke Theorem \ref{thm:multivariate-reduction} for $\varepsilon = 1$
and feed the reduction algorithm with the formula $\Phi$ and parameters $1$ and $k$, obtaining a graph $G$ and a parameter $k'$.
Note that $|V(G)| = \cO(\sqrt{k})$.
Apply the assumed algorithm for the \plclustering problem to the instance $(G,6,k')$.
In this way we resolve the satisfiability of $\Phi$ in time
$ 2^{o(\sqrt{k})}|V(G)|^{\cO(1)}=  2^{o(n+m)}$, contradicting ETH.
\end{proof}

Note that Theorem~\ref{thm:multivariate-reduction} and {Corollary}~\ref{cor:p-k} do not rule out possibility that the general \textsc{Cluster Editing} is solvable in subexponential time.  
Our second, complementary lower bound shows that when the number of clusters is not constrained, then the problem cannot be solved in subexponential time unless ETH fails. This disproves the conjecture of Cao and Chen \cite{CaoC10}. We note that Theorem~\ref{thm:eth} was independently obtained by Komusiewicz in his PhD thesis
\cite{komusiewicz:thesis}. 

\begin{theorem}\label{thm:eth}
Unless ETH fails, \textsc{Cluster Editing} cannot be solved in time $2^{o(k)}n^{\cO(1)}$.
\end{theorem}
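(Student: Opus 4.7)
The plan is to establish this lower bound via a \emph{linear-parameter} polynomial-time reduction from $3$-SAT to \clustering{}. By the Sparsification Lemma of Impagliazzo, Paturi, and Zane, ETH is equivalent to the non-existence of a $2^{o(n+m)}$-time algorithm for $3$-SAT on formulas with $n$ variables and $m$ clauses, and we may in particular restrict to sparse instances where $m = \cO(n)$. Hence it suffices to produce, in polynomial time, from such a formula $\Phi$ a \clustering{} instance $(G, k)$ with $k = \cO(n + m) = \cO(n)$ that is equivalent to $\Phi$. Any hypothetical $2^{o(k)}n^{\cO(1)}$ algorithm for \clustering{} could then be composed with this reduction to decide $\Phi$ in $2^{o(n+m)}$ time, contradicting ETH.

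For the reduction itself, I would use the standard gadget paradigm. A \emph{variable gadget} $V_i$ for each variable $x_i$ is a constant-size graph admitting exactly two canonically cheapest cluster-editing patterns, of the same cost, that encode the two truth values of $x_i$; a workable choice is two small cliques sharing a single vertex, whose two extremal editings either merge the cliques or split the shared vertex off into one of the two sides. A \emph{clause gadget} $U_j$ for each clause $C_j$ is a constant-size structure attached to the three ``literal vertices'' of the involved variables, designed so that if at least one literal is set to true by the adjacent variable gadgets then $U_j$ can be locally resolved into cliques at some baseline cost $c_j$, whereas if all three literals are false the local cost is at least $c_j + 1$. Setting the budget $k$ to the sum of variable-gadget canonical costs plus $\sum_j c_j$ makes $(G,k)$ a yes-instance iff $\Phi$ is satisfiable, and since each gadget contributes only $\cO(1)$ vertices and $\cO(1)$ to the edit count, $k = \cO(n+m)$ as required.

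The main obstacle, as is usual for reductions into \clustering{}, is the \emph{rigidity analysis}: one has to prove that no global ``cheating'' editing beats the canonical solutions by mixing non-canonical patterns between gadgets, merging unrelated gadgets into a single cluster, or absorbing large portions of the graph in unintended ways. The standard countermeasures are to attach to every gadget a private pendant clique large enough to make cross-gadget merges prohibitively expensive, and to let each variable gadget communicate with clause gadgets only through a small, fixed set of interface vertices so that any deviation from canonical behaviour is locally paid for by strictly more than its fair share of edits. Carrying this rigidity argument through so that the extremal cluster editings are in bijection with satisfying assignments of $\Phi$, while keeping all gadgets of constant size to preserve $k = \cO(n+m)$, is the technical heart of the proof; this is the same phenomenon established independently by Komusiewicz~\cite{komusiewicz:thesis}.
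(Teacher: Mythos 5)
Your high-level strategy is exactly the paper's: a polynomial-time reduction from $3$-CNF-SAT producing an instance with $k=\cO(n+m)$, composed with the ETH/sparsification fact that $3$-SAT admits no $2^{o(n+m)}$-time algorithm. That framing is correct. However, the proposal has a genuine gap: the entire substance of the proof --- the concrete gadgets and the rigidity analysis --- is deferred rather than carried out. You yourself label the rigidity argument ``the technical heart'' and then do not execute it; as written, the proposal is a plan for a proof, not a proof. Moreover, the one concrete gadget you do suggest does not work as claimed. For two cliques of sizes $a$ and $b$ sharing a vertex, merging them costs $(a-1)(b-1)$ edge additions while splitting the shared vertex off one side costs $a-1$ or $b-1$ deletions, so the ``two canonically cheapest patterns of the same cost'' do not exist except in degenerate cases (and for a path on three vertices there are three cheapest resolutions, not two, so no clean binary encoding). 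A constant-size variable gadget also cannot expose one readable interface vertex per occurrence when a variable appears an unbounded number of times; $m=\cO(n)$ after sparsification does not bound per-variable occurrences, and you do not add the standard occurrence-bounding step.

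For comparison, the paper resolves both difficulties as follows. The variable gadget is a cycle $A_x$ of length $4s_x$ (proportional to the number of occurrences $s_x$, not constant), whose two equally cheap optimal editings delete every second edge in one of two phases, encoding the truth value; the proof shows $|F_x|\geq 2s_x$ with equality only for these two phases via a convexity computation $|F_x|\geq 2s_x+\frac12\sum_j(\alpha_j-2)^2$. The clause gadget is a six-vertex graph on $\{p_x,p_y,p_z,q_x,q_y,q_z\}$ with the triangle on the $q$'s missing, and a case analysis (using Lemma~\ref{lem:twins} to glue $p_x,p_y,p_z$ into one cluster) shows any local solution costs at least $8$, with cost exactly $8$ forcing at least one $q$-vertex to stay attached to its cycle, which in turn forces the corresponding cycle edge to survive and hence the literal to be true. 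The budget $k=14m$ then makes all these lower bounds tight simultaneously, and disjointness of the sets $F_x$ and $F_C$ rules out cross-gadget savings. Until you exhibit gadgets with analogous provable extremal structure and verify the additivity of the lower bounds, the claimed equivalence between satisfiability and the budget $k$ is unsupported.
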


Clearly, by Theorem~\ref{thm:pclustering-subept}, the reduction of Theorem~\ref{thm:eth} must produce an instance where the number of clusters in any solution, if there exists any, is $\Omega(k)$. Therefore, intuitively the hard instances of \textsc{Cluster Editing} are those where every cluster needs just a constant number of adjacent editions to be extracted.

\paragraph*{Organization of the paper} In Section~\ref{sec:prelim} we establish notation and recall classical notions and results that will be used throughout the paper. Section~\ref{sec:subept-new} contains description of the subexponential algorithm for \pclustering, i.e., the proof of Theorem~\ref{thm:pclustering-subept}. Section~\ref{sec:multi} is devoted to the multivariate lower bound, i.e., the proof of Theorem~\ref{thm:multivariate-reduction}, while in Section~\ref{app:eth} we give the lower bound for the general \clustering{} problem, i.e., the proof of Theorem~\ref{thm:eth}. In Section~\ref{sec:conclusions} we gather some concluding remarks and propositions for further work.

\section{Preliminaries}\label{sec:prelim}
We denote by $G=(V,E)$ a finite, undirected, and simple graph with vertex set $V(G)=V$ and edge set $E(G)=E$. We also use   $n$ to denote the number of vertices and $m$ the number of edges in $G$. For a nonempty subset $W \subseteq V$, the subgraph of $G$ induced by $W$ is denoted by $G[W]$. We say that a vertex set $W\subseteq V$ is \emph{connected} if $G[W]$ is connected. The \emph{open neighborhood} of a vertex $v$ is $N(v)=\{u\in V:~uv \in E\}$ and the \emph{closed neighborhood} is $N[v] = N(v) \cup \{v\}$. For a vertex set $W\subseteq V$ we put  $N(W) = \bigcup_{v \in W} N(v)\sm W$ and $N[W] = N(W) \cup W$.

For graphs $G,H$ with $V(G)=V(H)$, by $\ham(G,H)$ we denote the number of edge modifications needed to obtain $H$ from $G$, i.e.,  $\ham(G,H) = |E(G) \triangle E(H)|$.
By $E(X,Y)$ we denote the set of edges having one endpoint in $X$ and second in $Y$.

A parameterized problem $\Pi$ is a subset of $\Gamma^{*}\times \mathbb{N}$ for some finite alphabet $\Gamma$. An instance of a parameterized problem consists of $(x,k)$, where $k$ is called the parameter. 
 A central notion in 
parameterized complexity is {\em fixed-parameter tractability (FPT)} which means, for a given instance $(x,k)$, 
solvability in time $f(k)\cdot p(|x|)$, where $f$ is an arbitrary computable function of $k$ and $p$ is a polynomial in the input size. We refer to the book of Downey and Fellows~\cite{DowneyF99} for further reading on parameterized complexity. 

 A \emph{kernelization algorithm} for a  parameterized problem 
$\Pi\subseteq \Gamma^{*}\times \mathbb{N}$ is an algorithm that given $(x,k)\in \Gamma^{*}\times \mathbb{N} $ 
outputs in time polynomial in $|x|+k$ a pair $(x',k')\in \Gamma^{*}\times \mathbb{N}$, called a \emph{kernel} such that
$(x,k)\in \Pi$  if and only if $(x',k')\in \Pi$, 
 $|x'|\leq g(k)$,  and 
 $k' \leq k$, where $g$ is some computable function. 

In our algorithm we need the following result of Guo \cite{Guo09}.
 
\begin{proposition}[\cite{Guo09}] \label{prop:polykernel_Guo}
\textsc{$p$-Cluster Editing}  admits a kernel with $(p + 2)k + p$ vertices. The running time of the kernelization algorithm is $\cO(n+m)$, where $n$ is the number of vertices and $m$ the number of edges in the input graph $G$.
\end{proposition}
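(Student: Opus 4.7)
The plan is to adapt the standard kernelization approach for \clustering{} to exploit the restriction on the number of clusters. First, I would apply the classical common-neighbor reduction rules: for a pair of vertices $u,v$ with $uv \in E$, if $|N(u) \cap N(v)|$ substantially exceeds $|N(u) \triangle N(v)|$ (more common neighbors remain than the budget $k$ allows for separation), then $u$ and $v$ are forced to lie in the same cluster and can be merged; if $uv \notin E$ but $u$ and $v$ have more than $k$ common neighbors, then the edge $uv$ must be inserted. Both rules are implemented in $\cO(n+m)$ time by first computing the critical cliques of $G$ (maximal sets of vertices with identical closed neighborhoods) and running the rules on the resulting quotient.

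To establish the size bound, I would then fix a hypothetical $p$-cluster editing set $F$ with $|F|\le k$ producing clusters $C_1, \ldots, C_p$, and call a vertex \emph{affected} if it is incident to some edge of $F$; there are at most $2k$ affected vertices. For any unaffected vertex $v$ lying in cluster $C_i$, every edge incident to $v$ in $G$ must go to $C_i \setminus \{v\}$ and every non-neighbor of $v$ in $G$ must lie outside $C_i$, hence $N_G[v] = C_i$. Consequently the unaffected vertices of $C_i$ are pairwise twins and form a subset of one critical clique of $G$, so across the $p$ clusters the unaffected vertices occupy at most $p$ distinct critical cliques. I would then add one further rule that caps every critical clique at $k+1$ vertices: such a clique cannot be split across two clusters by any solution of cost at most $k$, so copies beyond $k+1$ are redundant for the $(p,k)$ decision and can be collapsed into a single representative annotated with its multiplicity. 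Combining the two estimates gives at most $p(k+1) = pk + p$ unaffected and at most $2k$ affected vertices, matching the target $(p+2)k+p$.

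The main obstacle I anticipate is arguing that the critical-clique capping rule is safe in both directions while respecting the exact cluster count: naively discarding excess twins may in principle alter the number of clusters realizable in a minimum editing set, so one must verify that both the budget $k$ and the target $p$ are preserved in the equivalence between the original and reduced instances. I would handle this by keeping $k+1$ representatives (so at least one copy survives any edit set of cost $\le k$, preserving the cluster identity) and porting any solution back and forth by treating collapsed copies identically. The overall linear running time then follows from computing the critical-clique decomposition once using standard machinery and applying each local rule in a single pass.
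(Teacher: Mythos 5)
The paper itself does not prove this proposition: it is imported verbatim from Guo \cite{Guo09} and used as a black box, so there is no in-paper proof to compare your attempt against. That said, your counting scheme is the right shape and matches Guo's: at most $2k$ vertices are incident to edits, the untouched vertices of any single cluster have identical closed neighborhoods in $G$ and hence lie in one critical clique, and $2k + p(k+1)$ gives exactly $(p+2)k+p$. The critical-clique machinery is also consistent with the twin lemma the paper does reproduce (Lemma~\ref{lem:twins}).

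The gap is in the load-bearing reduction rule: truncating every critical clique to $k+1$ representatives is not safe for the exact-$p$ version as you state and justify it. Take $G$ to be the disjoint union of $p-1$ copies of $K_{k+2}$. This is a NO-instance of \pclustering: to reach $p$ clusters at least one clique must be distributed over two or more clusters, and splitting $K_{k+2}$ costs at least $(k+1)\cdot 1 > k$. After capping each critical clique at $k+1$ vertices one obtains $p-1$ copies of $K_{k+1}$, a YES-instance: split one clique into parts of sizes $1$ and $k$ at cost exactly $k$. Your common-neighbor rule, read literally (strictly more than $k$ common neighbors), does not fire here, since adjacent vertices of $K_{k+2}$ have exactly $k$ common neighbors; and even if a sharper threshold happened to exclude this example, the burden is on the proof to establish that the rules' interaction is safe, which the proposal does not do. The defense you offer --- keeping $k+1$ copies so that at least one survives any edit set --- addresses the wrong failure mode: the issue is not that the cluster's identity is lost, but that shrinking a critical clique lowers the cost of \emph{splitting} it, and splitting cliques is precisely what one may be forced to do to hit the prescribed cluster count. (This is the same delicacy the paper wrestles with in Section~\ref{sec:large-p}, where Rule~3 requires a careful analysis of ``broken'' cliques; note also that Lemma~\ref{lem:twins} only guarantees \emph{no larger} a number of clusters, so even forcing twins together needs extra care when the cluster count is exact.) A correct rule must tie the truncation threshold to whether the clique could ever profitably be split in a solution with exactly $p$ clusters, or take a structurally different form; separately, the claimed $\cO(n+m)$ running time for exhaustive application of the merge rules is asserted rather than argued.
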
 
 
The following lemma is used in both our lower bounds. Its proof
is almost identical to the proof of Lemma~1 in \cite{Guo09}, and we provide it here for reader's convenience. 

\begin{lemma}\label{lem:twins}
Let $G=(V,E)$ be an undirected graph and $X \subseteq V$ be a set of vertices such that $G[X]$ is a clique and
each vertex in $X$ has the same set of neighbors outside $X$ (i.e., $N_G[v] = N_G[X]$ for each $v \in X$).
Let $F \subseteq V \times V$ be a set such that $G \triangle F$ is a cluster graph where the vertices of $X$
are in at least two different clusters. Then there exists $F' \subseteq V \times V$ such that:
(i) $|F'| < |F|$, (ii) $G \triangle F'$ is a cluster graph with no larger number of clusters than $G \triangle F$,
(iii) in $G \triangle F'$ the clique $G[X]$ is contained in one cluster.
\end{lemma}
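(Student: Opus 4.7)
The plan is to exploit the symmetry granted by $N_G[v] = N_G[X]$ for each $v \in X$: since the vertices of $X$ are pairwise twins, the total cost of any cluster assignment depends only on how many $X$-vertices end up in each cluster, not on which specific $X$-vertex goes where. This will let me run a ``convex combination'' argument showing that collecting all of $X$ into a single cluster cannot be worse than the current arrangement, and is in fact strictly better whenever $X$ is split across several clusters.

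Let $Y = N_G(X) \setminus X$, and let $C_1, \ldots, C_q$ (with $q \geq 2$) be the clusters of $G \triangle F$ that meet $X$. Set $X_i = X \cap C_i$, $Y_i = Y \cap C_i$, $Z_i = C_i \setminus (X \cup Y)$, and $a_i = |X_i|$. First I would carefully compute the contribution $T$ to $|F|$ of edits incident to $X$, splitting it into three disjoint types: (a) deletions within $X$ between different clusters, giving $\sum_{i<j} a_i a_j$; (b) deletions between $X$ and $Y$, giving $\sum_i a_i(|Y| - |Y_i|)$; and (c) additions between $X$ and $V \setminus (X \cup Y)$, giving $\sum_i a_i |Z_i|$.

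Next, for each $i$ define $T_i := |X|(|Y| - |Y_i| + |Z_i|)$: this is the analogous $X$-contribution that would arise had every vertex of $X$ been placed in $C_i$, with non-$X$ vertices left untouched. A direct rewriting yields
\[ T - \sum_{i<j} a_i a_j \;=\; \sum_i a_i (|Y| - |Y_i| + |Z_i|) \;=\; \sum_i \frac{a_i}{|X|}\, T_i, \]
which is a convex combination of the $T_i$'s and hence is at least $\min_i T_i$. Picking $i^* \in \arg\min_i T_i$, I would define $F'$ so that $G \triangle F'$ is obtained from $G \triangle F$ by reassigning every $X$-vertex to cluster $C_{i^*}$ while leaving all non-$X$ vertices in place. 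Edits not touching $X$ are unaffected, so $|F'| = |F| - T + T_{i^*} \leq |F| - \sum_{i<j} a_i a_j < |F|$, because $q \geq 2$ forces $\sum_{i<j} a_i a_j \geq 1$; this yields~(i).

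Property (iii) holds by construction. For (ii), in $G \triangle F'$ the cluster $C_{i^*}$ grows to $X \cup Y_{i^*} \cup Z_{i^*}$, while for each $j \neq i^*$ the cluster $C_j$ shrinks to $Y_j \cup Z_j$, which either remains a legitimate cluster or vanishes when $C_j = X_j$; in both cases the number of clusters can only decrease. The only nontrivial step is the edit-bookkeeping of (a)--(c) and the algebraic identity displaying it as a convex combination of the $T_i$'s; once this is in hand the conclusion is immediate, and I do not expect any substantive obstacle.
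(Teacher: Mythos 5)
Your proof is correct and follows essentially the same route as the paper's: both arguments pick the target cluster that minimizes the per-vertex external edit cost of an $X$-vertex (your $\min_i T_i$ is the paper's cluster $Z$ with smallest $|F(v)|$), move all of $X$ there, observe that the external edits incident to $X$ do not increase, and obtain strictness from the at least one saved deletion inside the clique $G[X]$. Your convex-combination phrasing is just an aggregated restatement of the paper's per-vertex minimality argument.
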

\begin{proof}
For a vertex $v \in X$, let $F(v) = \{u \notin X: vu \in F\}$.
Note that, since $N_G[v] = N_G[X]$ for all $v \in X$, we have $F(v) = F(w)$ if
$v$ and $w$ belong to the same cluster in $G \triangle F$.

Let $Z$ be the vertex set of a cluster in $G \triangle F$ such that
there exists $v \in Z \cap X$ with smallest $|F(v)|$. Construct $F'$ as follows:
take $F$, and for each $w \in X$ replace all elements of $F$ incident with $w$
with $\{uw: u \in F(v)\}$. In other words, we modify $F$ by moving all vertices
of $X \setminus Z$ to the cluster $Z$. Clearly, $G \triangle F'$ is a cluster graph,
   $X$ is contained in one cluster in $G \triangle F'$ and $G \triangle F'$
   contains no more clusters than $G \triangle F$.
To finish the proof, we need to show that $|F'| < |F|$.
The sets $F$ and $F'$ contain the same set of elements not incident with $X$.
As $|F(v)|$ was minimum possible, for each $w \in X$ we have $|F(w)| \geq |F'(w)|$.
As $X$ was split between at least two connected components of $G \triangle F$,
$F$ contains at least one edge of $G[X]$, whereas $F'$ does not.
We infer that $|F'| < |F|$ and the lemma is proven.
\end{proof}

\newcommand{\nice}{\mathcal{N}}

\section{A subexponential algorithm for \pclustering{}}\label{sec:subept-new}

In this section we prove Theorem \ref{thm:pclustering-subept}, that is,
   we show a $\cO(2^{\cO(\sqrt{pk})} + n + m)$-time algorithm for \pclustering.

\subsection{Reduction for large $p$}\label{sec:large-p}

The first step of our algorithm is an application of the kernelization algorithm by Guo~\cite{Guo09} (Proposition~\ref{prop:polykernel_Guo}) followed by some additional preprocessing rules that ensure that $p\leq 6k$. These additional rules are encapsulated in the following lemma; the rest of this section is devoted to its proof.

\begin{lemma}\label{lem:preprocessing}
There exists a polynomial time algorithm that, given an instance $(G,p,k)$ of \pclustering, outputs an equivalent instance $(G',p',k)$, where $G'$ is an induced subgraph of $G$ and $p'\leq 6k$.
\end{lemma}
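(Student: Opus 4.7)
The plan is to apply Guo's kernelization (Proposition~\ref{prop:polykernel_Guo}) first, and then to exhaustively apply two simple reduction rules on \emph{isolated cliques} of $G$, i.e., connected components of $G$ that are themselves cliques.

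The first rule will handle large isolated cliques: whenever $G$ has an isolated clique $C$ with $|C|>k$, I would delete $C$ and decrement $p$ by one. The correctness is immediate from a budget argument -- splitting $C$ into two nonempty parts of sizes $a$ and $|C|-a$ costs at least $a(|C|-a)\geq |C|-1>k-1$ deletions, and merging any vertex outside $C$ into $C$'s cluster requires at least $|C|>k$ additions -- so $C$ must remain a pure cluster in any cluster editing set of size at most $k$, and we may safely strip it off.

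The second rule will handle many isomorphic small isolated cliques via symmetry: whenever $G$ has more than $2k+1$ isolated cliques of a common size $s$, I would delete one of them and decrement $p$ by one. For correctness, each edge of the cluster editing set $F$ is incident to at most two isolated cliques, so at most $2|F|\leq 2k$ of the size-$s$ isolated cliques can be touched by $F$, leaving at least one untouched. By a swapping argument analogous to the one in Lemma~\ref{lem:twins}, we may assume without loss of generality that the deleted clique is untouched, so that restricting the solution to the remaining vertices yields a valid solution of the reduced instance with exactly one fewer cluster and at most $k$ edits.

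To finally deduce $p'\leq 6k$, I would combine the following structural facts for any YES-instance with solution $F$, $|F|\leq k$: (i) the number of clusters of $G\triangle F$ containing a vertex incident to an edit in $F$ is at most $2|F|\leq 2k$; (ii) every other cluster of $G\triangle F$ equals an untouched connected component of $G$, which must then be an isolated clique; and (iii) every non-clique connected component of $G$ contains an induced $P_3$ that has to be destroyed by an edit internal to that component, so (as distinct components give disjoint edits) there are at most $k$ such non-clique components. The hard part of the plan will be this final accounting step: one must leverage the structural restrictions enforced by the two reduction rules on the surviving isolated cliques to conclude $p\leq 6k$; instances for which the bound still fails after exhaustive application of the rules can be certified as no-instances using the above counting and replaced by a trivial no-instance with $p'\leq 6k$.
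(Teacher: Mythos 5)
Your Rule~B is too weak to deliver the bound $p'\leq 6k$, and the fallback you propose for the remaining instances is incorrect. Because Rule~B only fires within a single size class, after exhaustive application you can still be left with up to $2k+1$ isolated cliques of \emph{each} size $1,\dots,k$, i.e.\ $\Theta(k^2)$ isolated cliques in total. Your facts (i)--(iii) then only yield $p\leq 2k+\Theta(k^2)$. Worse, such instances need not be no-instances: take $G$ to be the disjoint union of exactly $2k+1$ cliques of each size $s=1,\dots,k$ and set $p=k(2k+1)$; this is a YES-instance with $F=\emptyset$, neither of your rules applies, and $p\gg 6k$, so it cannot be ``certified as a no-instance and replaced by a trivial one.'' The whole difficulty of the lemma lives exactly where your symmetry argument stops working: when more than $2k$ isolated cliques of \emph{different} sizes are present, counting guarantees that \emph{some} clique is untouched, but not which one, and cliques of different sizes are not interchangeable. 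The paper resolves this with a single rule that removes the \emph{largest} non-trivial isolated clique whenever more than $2k$ of them exist, backed by a genuinely non-trivial exchange argument (Claims~1--4 in the paper): one first shows each isolated clique is either embedded in a bigger cluster, broken into several clusters, or untouched; that embedded and broken cliques cannot coexist in an optimal solution; and that in either remaining regime a largest clique can be assumed untouched by explicit re-splitting/re-merging swaps that never increase the cost. That exchange argument is the missing idea in your proposal.

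Two smaller points. First, Rule~A has an off-by-one flaw: with $|C|=k+1$ a split into parts of sizes $1$ and $k$ costs exactly $k\leq k$ edits, so $C$ need not remain a pure cluster (e.g.\ $G$ a single clique on $k+1$ vertices with $p=2$ is a YES-instance, but your rule converts it into the empty graph with $p=1$, a no-instance). This is fixable by raising the threshold, but note the paper does not need a large-clique rule at all. Second, you also lack an analogue of the paper's Rule~1 (answer NO when $G$ has fewer than $p-2k$ clique components), which is what handles the regime where $p>6k$ but the graph simply does not contain enough isolated cliques; your item (i)/(ii) counting could be turned into such a rule, but as written it is only used descriptively.
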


Before we proceed to formal argumentation, let us provide some intuition. The key idea behind Lemma~\ref{lem:preprocessing} is the observation that if $p>2k$, then at least $p-2k$ clusters in the final cluster graph cannot be touched by the solution, hence they must have been present as isolated cliques already in the beginning. Hence, if $p>6k$ then we have to already see $p-2k>4k$ isolated cliques; otherwise, we may safely provide a negative answer. Although these cliques may be still merged (to decrease the number of clusters) or split (to increase the number of clusters), we can apply greedy arguments to identify a clique that may be safely assumed to be untouched by the solution. Hence we can remove it from the graph and decrement $p$ by one. Although the greedy arguments seem very intuitive, their formal proofs turn out to be somewhat technical.

We now proceed to a formal proof of Lemma~\ref{lem:preprocessing}. Let us fix some optimal solution $F$, i.e., a subset of $V\times V$ of minimum cardinality such that $G\triangle F$ is a $p$-cluster graph.

Consider the case when $p > 6k$. Observe that only $2k$ out of $p$ resulting clusters in $G\triangle F$ can be adjacent to any pair from the set $F$. Hence at least $p-2k$ clusters must be already present in the graph $G$ as connected components being cliques. Therefore, if $G$ contains less than $p-2k$ connected components that are cliques, then $(G,p,k)$ is a NO-instance.

\begin{description}
\item[Rule 1] If $G$ contains less than $p-2k$ connected components that are cliques, answer NO.
\end{description}

As $p>6k$, if Rule 1 was not triggered then we have more than $4k$ connected components that are cliques. The aim is now to apply greedy arguments to identify a component that can be safely assumed to be untouched. As a first step, consider a situation when $G$ contains more than $2k$ isolated vertices. Then at least one of these vertices is not incident to an element of $F$, thus we may delete one isolated
vertex and decrease $p$ by one.

\begin{description}
\item[Rule 2] If $G$ contains $2k+1$ isolated vertices, pick one of them, say $v$, and delete it from $G$. The new instance is $(G\setminus v, p-1, k)$.
\end{description}

We are left with the case where $G$ contains more than $2k$ connected components that are cliques, but not isolated vertices.
At least one of these cliques is untouched by $F$. Note that even though the number of cliques is large, some of them may be merged with other clusters (to
decrease the number of connected components), or split into more clusters
(to increase the number of connected components), and we have no a priori knowledge about which clique will be left untouched. We argue that in both cases, we can greedily
merge or split the smallest possible clusters. Thus, without loss of generality, we can
assume that the largest connected component of $G$ that is a clique is left untouched
in $G \triangle F$. We reduce the input instance $(G,p,k)$ by deleting this cluster
and decreasing $p$ by one.

\begin{description}
\item[Rule 3] If $G$ contains $2k+1$ isolated cliques that are not isolated vertices, pick  a clique $C$ of largest size and delete it from $G$. The new instance is $(G\setminus C, p-1, k)$.
\end{description}

We formally verify safeness of the Rule $3$ by proving the following lemma.
Without loss of generality, we may assume that the solution $F$, among all solutions of minimum cardinality, has minimum possible number of editions
incident to the connected components of $G$ that are cliques of largest size.

\begin{lemma}
Let $D_1,D_2,\ldots,D_\ell$ be connected components of $G$ that are cliques, but not isolated vertices. Assume that $\ell\geq 2k+1$. Then there exists a component $D_i$ that has the largest size among $D_1,D_2,\ldots,D_\ell$ and none of the pairs from $F$ is incident to any vertex of $D_i$.
\end{lemma}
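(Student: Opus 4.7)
The plan is to argue by contradiction: if every largest-size component among $D_1,\ldots,D_\ell$ were incident to some pair of $F$, a local swap would yield a solution strictly smaller than $F$, contradicting minimality. As a preliminary step I would invoke Lemma~\ref{lem:twins} on each $D_i$: since $D_i$ is an isolated clique of $G$, every $v\in D_i$ satisfies $N_G[v]=D_i$, so we may assume without increasing $|F|$ that each $D_i$ is entirely contained in a single cluster of $G\triangle F$. A direct counting argument then bounds the number of $D_i$'s touched by $F$: each pair in $F$ has two endpoints, each contained in at most one of the pairwise disjoint $D_i$'s, so at most $2|F|\leq 2k$ components can be touched. Since $\ell\geq 2k+1$, some $D_{i_0}$ is untouched, and if $D_{i_0}$ has largest size the lemma is already proved.

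In the remaining case every largest-size component is touched, so I can pick a largest-size touched component $D^\ast$ and an untouched $D'$ with $|D'|<|D^\ast|$. Let $C^\ast$ denote the cluster of $G\triangle F$ containing $D^\ast$ and set $W^\ast=C^\ast\setminus D^\ast$. Because $D^\ast$ was an isolated clique of $G$ and remains inside a single cluster, the editions of $F$ incident to $D^\ast$ are exactly the added pairs between $D^\ast$ and $W^\ast$: none are internal to $D^\ast$ (no edge of that clique is deleted), and none connect $D^\ast$ to $V\setminus C^\ast$ (no such edge was present in $G$ nor is present in $G\triangle F$). Since $D^\ast$ is touched, $W^\ast\neq\emptyset$, so $F$ contributes exactly $|D^\ast|\cdot|W^\ast|$ pairs incident to $D^\ast$.

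I would then define $F'$ by removing from $F$ all pairs between $D^\ast$ and $W^\ast$ and inserting instead the $|D'|\cdot|W^\ast|$ pairs between $D'$ and $W^\ast$. Outside $D^\ast\cup D'\cup W^\ast$ the graph $G\triangle F'$ agrees with $G\triangle F$; inside, $D^\ast$ becomes its own cluster, while $D'\cup W^\ast$ forms a single new cluster. This new cluster is indeed a clique, since both $W^\ast$ (as a subset of the cluster $C^\ast$) and $D'$ are cliques and every cross edge has been added. Hence $G\triangle F'$ is a cluster graph with the same number of clusters as $G\triangle F$, but $|F'|=|F|-(|D^\ast|-|D'|)\cdot|W^\ast|<|F|$, contradicting the minimality of $|F|$. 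The main point requiring care is the clique check on $D'\cup W^\ast$, which as just noted is immediate; beyond that, all ingredients are elementary counting and the twin-class reduction provided by Lemma~\ref{lem:twins}.
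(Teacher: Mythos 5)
Your counting step and your ``embedded'' swap (moving $W^\ast$ from a largest touched clique $D^\ast$ to a smaller untouched clique $D'$) are fine and match the paper's Claim~3. The gap is in the very first move: you cannot invoke Lemma~\ref{lem:twins} to assume each $D_i$ lies in a single cluster. That lemma only guarantees a replacement $F'$ whose cluster graph has \emph{no larger} number of clusters; if $D_i$ has been deliberately split into several clusters, each contained in $V(D_i)$, the lemma merges them and the cluster count drops below $p$. Since \pclustering{} demands \emph{exactly} $p$ clusters, the resulting $F'$ is not a feasible solution, so $|F'|<|F|$ yields no contradiction and the reduction is invalid. This is not a corner case one can wave away: when $G$ has few non-clique components, an optimal solution may be forced to break isolated cliques precisely in order to reach $p$ clusters.

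Consequently your argument silently skips the hardest case, which the paper treats separately (its Claim~4): a largest component $D^\ast$ that is \emph{broken} into $q+1$ clusters. There the right swap is not ``move $W^\ast$'' but ``leave $D^\ast$ intact and instead break the smaller untouched $D'$ into the same number of pieces,'' and showing this does not increase the cost requires a convexity estimate on $\binom{|V(D^\ast)|}{2}-\sum_i\binom{a_i}{2}$. Worse, in the subcase $|V(D')|\le q$ the edit count only fails to increase rather than strictly decreasing, so minimality of $|F|$ alone is not enough; the paper needs the secondary tie-breaking assumption that $F$ also minimizes the number of editions incident to largest isolated cliques. Your proof uses neither ingredient. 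You would also need an analogue of the paper's Claim~1 (clusters and components nest) and Claim~2 (broken and embedded components cannot coexist) to even set up the case distinction once the twin reduction is unavailable.
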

\begin{proof}
Let $C_1,C_2,\ldots,C_p$ be clusters of $G\triangle F$. We say that cluster $C_i$ {\emph{contains}} component $D_j$ if $V(D_j)\subseteq V(C_i)$, and component $D_j$ {\emph{contains}} cluster $C_i$ if $V(C_i)\subseteq V(D_j)$. Moreover, we say that these containments are {\emph{strict}} if $V(D_j)\subsetneq V(C_i)$ or $V(C_i)\subsetneq V(D_j)$, respectively.

\vskip 0.2cm
{\bf{Claim 1.}} {\emph{For every cluster $C_i$ and component $D_j$, either $V(C_i)\cap V(D_j)=\emptyset$, $C_i$ contains $D_j$ or $D_j$ contains $C_i$.}}
\vskip 0.2cm

In order to prove Claim 1 we need to argue that the situation when sets $V(C_i)\cap V(D_j)$, $V(C_i)\setminus V(D_j)$, $V(D_j)\setminus V(C_i)$ are simultaneously nonempty is impossible. Assume otherwise, and without loss of generality assume further that $|V(C_i)\setminus V(D_j)|$ is largest possible. As $V(D_j)\setminus V(C_i)\neq \emptyset$, take some $C_{i'}\neq C_i$ such that $V(C_{i'})\cap V(D_j)\neq \emptyset$. By the choice of $C_i$ we have that $|V(C_i)\setminus V(D_j)|\geq |V(C_{i'})\setminus V(D_j)|$ (note that $V(C_{i'})\setminus V(D_j)$ is possibly empty). Consider a new cluster graph $H$ obtained from $G\triangle F$ by moving $V(C_i)\cap V(D_j)$ from the cluster $C_i$ to the cluster $C_{i'}$. Clearly, $H$ still has $p$ clusters as $V(C_i)\setminus V(D_j)$ is nonempty. Moreover, the edition set $F'$ that needs to be modified in order to obtain $H$ from $G$, differs from $F$ as follows: it additionally contains $(V(C_i)\cap V(D_j))\times (V(C_{i'})\setminus V(D_j))$, but does not contain $(V(C_i)\cap V(D_j))\times (V(C_i)\setminus V(D_j))$ nor $(V(C_i)\cap V(D_j))\times (V(C_{i'})\cap V(D_j))$. As $|V(C_i)\setminus V(D_j)|\geq |V(C_{i'})\setminus V(D_j)|$, we have that
$$|(V(C_i)\cap V(D_j))\times (V(C_{i'})\setminus V(D_j))|\leq |(V(C_i)\cap V(D_j))\times (V(C_i)\setminus V(D_j))|$$
and
$$|(V(C_i)\cap V(D_j))\times (V(C_{i'})\cap V(D_j))|>0.$$
Hence $|F'|<|F|$, which is a contradiction with minimality of $F$. This settles Claim 1.

We say that a component $D_j$ is {\emph{embedded}} if some cluster $C_i$ strictly contains it. Moreover, we say that a component $D_j$ is {\emph{broken}} if it strictly contains more than one cluster; Claim 1 implies that then $V(D_j)$ is the union of vertex sets of the clusters it strictly contains. Component $D_j$ is said to be {\emph{untouched}} if none of the pairs from $F$ is incident to a vertex from $D_j$. Claim 1 proves that every cluster is either embedded, broken or untouched.

\vskip 0.2cm
{\bf{Claim 2.}} {\emph{It is impossible that some component $D_j$ is broken and some other $D_{j'}$ is embedded.}}
\vskip 0.2cm

In order to prove Claim 2 assume, otherwise, that some component $D_j$ is broken and some other $D_{j'}$ is embedded. Let $C_{i_1},C_{i_2}$ be any two clusters contained in $D_j$ and let $C_{i'}$ be the cluster that strictly contains $D_{j'}$. Consider a new cluster graph $H$ obtained from $G\triangle F$ by merging clusters $C_{i_1}$,$C_{i_2}$ and splitting cluster $C_{i'}$ into clusters on vertex sets $V(C_{i'})\setminus V(D_{j'})$ and $V(D_{j'})$. As $V(C_{i'})\setminus V(D_{j'})\neq \emptyset$, $H$ is still a $p$-cluster graph. Moreover, the edition set $F'$ that need to be modified in order to obtain $H$ from $G$, differs from $F$ by not containing $V(C_{i_1})\times V(C_{i_2})$ and $(V(C_{i'})\setminus V(D_{j'}))\times V(D_{j'})$. Both of this sets are nonempty, so $|F'|<|F|$, which is a contradiction with minimality of $F$. This settles Claim 2.

Claim 2 implies that either none of the components is broken, or none is embedded. We firstly prove that in the first case the lemma holds. Note that as $\ell>2k$, at least one component is untouched.

\vskip 0.2cm
{\bf{Claim 3.}} {\emph{If none of the components $D_1,D_2,\ldots,D_\ell$ is broken, then there is an untouched component $D_j$ with the largest number of vertices among $D_1,D_2,\ldots,D_\ell$.}}
\vskip 0.2cm

Assume, otherwise, that all the components with largest numbers of vertices are not untouched, hence they are embedded. Take any such component $D_j$ and let $D_{j'}$ be any untouched component; by the assumption we infer that $|V(D_j)|>|V(D_{j'})|$. Let $C_i$ be the cluster that strictly contains $D_j$ and let $C_{i'}$ be the cluster corresponding to the (untouched) component $D_{j'}$. Consider a cluster graph $H$ obtained from $G\triangle F$ by exchanging sets $V(D_j)$ and $V(D_{j'})$ between clusters $C_i$ and $C_{i'}$. Observe that the edition set $F'$ that needs to be modified in order to obtain $H$ from $G$, differs from $F$ by not containing $(V(C_i)\setminus V(D_j))\times V(D_j)$ but containing $(V(C_i)\setminus V(D_j))\times V(D_{j'})$. However, $|V(D_j)|>|V(D_{j'})|$ and $|V(C_i)\setminus V(D_j)|>0$, so $|F'|<|F|$. This contradicts minimality of $F$ and settles Claim 3.

We are left with the case when all the clusters are broken or untouched.

\vskip 0.2cm
{\bf{Claim 4.}} {\emph{If none of the components $D_1,D_2,\ldots,D_\ell$ is embedded, then there is an untouched component $D_j$ with the largest number of vertices among $D_1,D_2,\ldots,D_\ell$.}}
\vskip 0.2cm

Assume, otherwise, that all the components with largest numbers of vertices are not untouched, hence they are broken. Take any such component $D_j$ and let $D_{j'}$ be any untouched component; by the assumption we infer that $|V(D_j)|>|V(D_{j'})|$. Assume that $D_j$ is broken into $q+1$ clusters ($q\geq 1$) of sizes $a_1,a_2,\ldots,a_{q+1}$, where $\sum_{i=1}^{q+1} a_i=|V(D_j)|$. The number of editions needed inside component $D_j$ is hence equal to 
$$\binom{|V(D_j)|}{2}-\sum_{i=1}^{q+1} \binom{a_i}{2}\geq \binom{|V(D_j)|}{2}-\binom{|V(D_j)|-q}{2}-q\binom{1}{2}=\binom{|V(D_j)|}{2}-\binom{|V(D_j)|-q}{2}.$$
The inequality follows from convexity of the function $t\to \binom{t}{2}$. We now consider two cases. 

Assume first that $|V(D_{j'})|> q$. Let us change the edition set $F$ into $F'$ by altering editions inside components $D_j$ and $D_{j'}$ as follows: instead of breaking $D_j$ into $q+1$ components and leaving $D_{j'}$ untouched, leave $D_j$ untouched and break $D_{j'}$ into $q+1$ components by creating $q$ singleton clusters and one cluster of size $|V(D_{j'})|-q$. Similar calculations to the ones presented in the paragraph above show that the edition cost inside components $D_j$ and $D_{j'}$ is equal to $\binom{|V(D_{j'})|}{2}-\binom{|V(D_{j'})|-q}{2}<\binom{|V(D_{j})|}{2}-\binom{|V(D_{j})|-q}{2}$. Hence, we can obtain the same number of clusters with a strictly smaller edition set, a contradiction with minimality of $F$.

Assume now that $|V(D_{j'})|\leq q$. Let us change the edition set $F$ into $F'$ by altering editions inside components $D_j$ and $D_{j'}$ as follows: instead of breaking $D_j$ into $q+1$ components and leaving $D_{j'}$ untouched, we break $D_{j'}$ totally into $|V(D_{j'})|$ singleton clusters and break $D_j$ into $q-|V(D_{j'})|+1$ singleton clusters and one of size $|V(D_j)|-q+|V(D_{j'})|-1$. Clearly, we get the same number of clusters in this manner. Similar calculations as before show that the number of new editions needed inside clusters $D_j$ and $D_{j'}$ is equal to $\binom{|V(D_{j'})|}{2}+\binom{|V(D_{j})|}{2}-\binom{|V(D_j)|-q+|V(D_{j'})|-1}{2}$, which is not larger than $\binom{|V(D_{j})|}{2}-\binom{|V(D_{j})|-q}{2}$ for $|V(D_{j'})|\geq 2$ (recall that components $D_1,D_2,\ldots,D_\ell$ are not independent vertices).
Hence, we can obtain the same number of clusters with a not larger edition set
and with a smaller number of editions incident to components of $G$ that are cliques of largest size.
This contradicts the choice of $F$.

We have obtained a contradiction in both cases, so Claim 4 follows. Claims 3 and 4 imply the thesis of the lemma.
\end{proof}

Clearly, an instance on which none of the Rules 1--3 may be triggered, has $p\leq 6k$. This proves Lemma~\ref{lem:preprocessing}.

\subsection{Bounds on binomial coefficients}

In the running time analysis we need some combinatorial bounds on binomial coefficients. More precisely, we use the following inequality.

\begin{lemma}\label{lem:sqrt-bound}
If $a,b$ are nonnegative integers, then $\binom{a+b}{a}\leq 2^{2\sqrt{ab}}$.
\end{lemma}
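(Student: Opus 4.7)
The plan is to reduce the binomial inequality to a clean analytic inequality about the binary entropy function $H(p) = -p \log_2 p - (1-p)\log_2(1-p)$, and then to verify that inequality by elementary calculus. The boundary cases $a = 0$ or $b = 0$ are trivial ($\binom{a+b}{a} = 1 = 2^0$), so I will assume $a, b \geq 1$.

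First I would apply the binomial theorem to the identity $1 = \left(\frac{a}{a+b} + \frac{b}{a+b}\right)^{a+b}$; dropping all but the $k = a$ term of the expansion, all of which are nonnegative, yields
\[
\binom{a+b}{a} \cdot \frac{a^a b^b}{(a+b)^{a+b}} \leq 1,\qquad \text{i.e.,}\qquad \binom{a+b}{a} \leq \frac{(a+b)^{a+b}}{a^a b^b}.
\]
Setting $p = a/(a+b)$ and taking $\log_2$, this rearranges to $\log_2 \binom{a+b}{a} \leq (a+b) H(p)$. Since $ab = (a+b)^2 p(1-p)$, we have $2\sqrt{ab} = 2(a+b)\sqrt{p(1-p)}$, so it suffices to prove the pointwise inequality
\[
H(p) \leq 2\sqrt{p(1-p)} \qquad \text{for all } p \in [0,1].
\]

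To establish this I would analyze $f(p) := 2\sqrt{p(1-p)} - H(p)$. The function is symmetric about $p = 1/2$ and satisfies $f(0) = f(1/2) = f(1) = 0$ and $f'(1/2) = 0$. A short calculation gives
\[
f''(p) = \frac{1}{p(1-p)}\left(\frac{1}{\ln 2} - \frac{1}{2\sqrt{p(1-p)}}\right),
\]
so $f'' \geq 0$ exactly when $\sqrt{p(1-p)} \geq (\ln 2)/2$, i.e., on some interval $[p_\ast, 1-p_\ast]$ around $1/2$. Consequently $f$ is convex on $[p_\ast, 1-p_\ast]$; since $f'(1/2) = 0$ and $f(1/2) = 0$, this forces $f \geq 0$ on the whole middle interval, and in particular $f(p_\ast) \geq 0$. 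On the concave tail $[0, p_\ast]$, $f$ lies above the chord from $(0, 0)$ to $(p_\ast, f(p_\ast))$, which is nonnegative; by symmetry the same holds on $[1-p_\ast, 1]$.

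The main obstacle is precisely this analytic step: $H(p) \leq 2\sqrt{p(1-p)}$ is sharp at $p \in \{0, 1/2, 1\}$, so any argument has to respect three coincidences of zeros simultaneously, which rules out the most straightforward attacks. For instance, applying tangent-line bounds term by term to $-p \log_2 p$ and $-(1-p) \log_2(1-p)$ gives a bound proportional to $\sqrt{p(1-p)}(\sqrt{p} + \sqrt{1-p})/\ln 2$, which at $p = 1/2$ exceeds $2\sqrt{p(1-p)}$. The convexity/concavity split along the sign change of $f''$ is what lets the argument close without losing that final constant.
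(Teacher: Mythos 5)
Your proof is correct, and its skeleton matches the paper's: both first establish $\binom{a+b}{a}\leq \frac{(a+b)^{a+b}}{a^ab^b}$ and then reduce to a one-variable calculus inequality that is, up to reparametrization, identical in the two arguments (your $H(p)\leq 2\sqrt{p(1-p)}$ with $p=a/(a+b)$ is the paper's $g(\sqrt{t})\leq 2\ln 2$ with $t=a/b$). The differences lie in how the two sub-steps are executed. For the intermediate bound, you drop one term from the expansion of $\bigl(\frac{a}{a+b}+\frac{b}{a+b}\bigr)^{a+b}$, which is cleaner and shorter than the paper's induction on $b$ (Lemma~\ref{lem:ind}), which in turn rests on the monotonicity of $(1+1/n)^n$. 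For the analytic step, the paper shows the relevant function $g$ is nondecreasing on $(0,1]$ and attains its maximum $2\ln 2$ at the endpoint, via two rounds of differentiation; you instead study $f(p)=2\sqrt{p(1-p)}-H(p)$, locate the sign change of $f''$, and combine convexity on the middle interval (where $f(1/2)=f'(1/2)=0$ forces $f\geq 0$) with the chord bound on the concave tails. I checked your formula for $f''$ and the chord argument; both are right, and the entropy formulation has the advantage of making the three points of equality $p\in\{0,1/2,1\}$ visible, whereas the paper's endpoint-maximum argument is slightly more mechanical. Either route closes the constant $2$ exactly; neither is materially shorter than the other.
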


We start with the following simple observation.

\begin{lemma}\label{lem:ind}
If $a,b$ are positive integers, then $\binom{a+b}{a}\leq \frac{(a+b)^{a+b}}{a^ab^b}$.
\end{lemma}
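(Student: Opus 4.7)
The plan is to prove the inequality $\binom{a+b}{a} \leq \frac{(a+b)^{a+b}}{a^a b^b}$ by a direct application of the binomial theorem, isolating a single term in the expansion of $(a+b)^{a+b}$.

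First I would rewrite the target inequality in the equivalent form
\[
\binom{a+b}{a} \, a^a b^b \;\leq\; (a+b)^{a+b}.
\]
Then I would expand the right-hand side using the binomial theorem, treating $a$ and $b$ as real numbers (the positive real values we happen to care about), so that
\[
(a+b)^{a+b} \;=\; \sum_{k=0}^{a+b} \binom{a+b}{k} a^k b^{(a+b)-k}.
\]
Since $a,b$ are positive integers, every summand on the right is a nonnegative real number, so the whole sum is at least any individual term.

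Picking out the term corresponding to $k=a$, I would observe that it is exactly
\[
\binom{a+b}{a} a^a b^b,
\]
and dropping the remaining (nonnegative) terms yields
\[
(a+b)^{a+b} \;\geq\; \binom{a+b}{a} a^a b^b,
\]
which rearranges to the statement of the lemma. There is essentially no obstacle here; the only thing to be careful about is to note that positivity of $a$ and $b$ guarantees all summands are nonnegative so that discarding them is legitimate, and that the term we keep corresponds to a valid index $0 \leq a \leq a+b$.
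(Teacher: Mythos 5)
Your proof is correct, and it takes a genuinely different and in fact slicker route than the paper. You simply expand $(a+b)^{a+b}$ by the binomial theorem, note that all $a+b+1$ summands $\binom{a+b}{k}a^k b^{a+b-k}$ are nonnegative because $a,b>0$, and keep only the term $k=a$, which is exactly $\binom{a+b}{a}a^a b^b$; this immediately gives the claimed bound. (The one point worth stating precisely is that the binomial theorem is applied with the integer exponent $a+b$; the bases $a$ and $b$ may then be any nonnegative reals, so the argument even extends beyond integers if one interprets $\binom{a+b}{a}$ suitably.) The paper instead fixes $a$ and inducts on $b$, using monotonicity of the sequence $(1+1/n)^n$ to push the inductive step through; that argument is correct but noticeably longer and less transparent. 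Your approach is more elementary, requires no auxiliary fact about $(1+1/n)^n$, and generalizes directly to the multinomial inequality $\binom{n}{a_1,\dots,a_r}\prod_i a_i^{a_i}\leq n^n$, so there is nothing to fix here.
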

\begin{proof}
In the proof we use a folklore fact that the sequence $a_n=(1+1/n)^n$ is increasing. This implies that $\left(1+\frac{1}{b}\right)^b\leq \left(1+\frac{1}{a+b}\right)^{a+b}$, equivalently $\frac{(a+b)^{a+b}}{b^b}\leq \frac{(a+b+1)^{a+b}}{(b+1)^b}$.

Let us fix $a$; we prove the claim via induction with respect to $b$. For $b=1$ the claim is equivalent to $a^a \leq (a+1)^a$ and therefore trivial.
In order to check the induction step, notice that
\begin{eqnarray*}
\binom{a+b+1}{a} & = & \frac{a+b+1}{b+1}\cdot\binom{a+b}{a}\leq \frac{a+b+1}{b+1}\cdot\frac{(a+b)^{a+b}}{a^ab^b} \\
& \leq & \frac{a+b+1}{b+1}\cdot\frac{(a+b+1)^{a+b}}{a^a(b+1)^b} = \frac{(a+b+1)^{a+b+1}}{a^a(b+1)^{b+1}}.
\end{eqnarray*}
\end{proof}

We proceed to the proof of Lemma~\ref{lem:sqrt-bound}.

\begin{proof}[Proof of Lemma~\ref{lem:sqrt-bound}]
Firstly, observe that the claim is trivial for $a=0$ or $b=0$; hence, we can assume that $a,b>0$. Moreover, without losing generality assume that $a\leq b$. Let us denote $\sqrt{ab}=\ell$ and $\frac{a}{b}=t$, then $0<t\leq 1$. By Lemma~\ref{lem:ind} we have that 
\begin{eqnarray*}
\binom{a+b}{a} & \leq & \frac{(a+b)^{a+b}}{a^ab^b}=\left(1+\frac{b}{a}\right)^a\cdot\left(1+\frac{a}{b}\right)^b \\
& = & \left[\left(1+\frac{1}{t}\right)^{\frac{\sqrt{t}}{1}}\cdot \left(1+\frac{t}{1}\right)^{\frac{1}{\sqrt{t}}}\right]^\ell.
\end{eqnarray*}
Let us denote $g(x)=x\ln\left(1+x^{-2}\right)+x^{-1}\ln\left(1+x^2\right)$. As $\binom{a+b}{a}\leq e^{\ell\cdot g\left(\sqrt{t}\right)}$, it suffices to prove that $g(x)\leq 2\ln 2$ for all $0<x\leq 1$. Observe that
\begin{eqnarray*}
g'(x) & = & \ln\left(1+x^{-2}\right)-x\cdot 2x^{-3}\cdot \frac{1}{1+x^{-2}}-x^{-2}\ln\left(1+x^2\right)+x^{-1}\cdot 2x\cdot \frac{1}{1+x^2} \\
& = & \ln\left(1+x^{-2}\right)-\frac{2}{1+x^2}-x^{-2}\ln\left(1+x^2\right)+\frac{2}{1+x^2} \\
& = & \ln\left(1+x^{-2}\right)-x^{-2}\ln\left(1+x^2\right).
\end{eqnarray*}
Let us now introduce $h:(0,1] \to \mathbb{R}$, $h(y)=g'(\sqrt{y})=\ln\left(1+y^{-1}\right)-y^{-1}\ln\left(1+y\right)$. Then,
\begin{eqnarray*}
h'(y) & = & -y^{-2}\cdot \frac{1}{1+y^{-1}}+y^{-2}\ln\left(1+y\right)-y^{-1}\cdot\frac{1}{1+y} \\
& = & y^{-2}\ln\left(1+y\right)-\frac{2}{y+y^2}.
\end{eqnarray*}
We claim that $h'(y)\leq 0$ for all $y\leq 1$. Indeed, from the inequality $\ln(1+y)\leq y$ we infer that
\begin{eqnarray*}
y^{-2}\ln\left(1+y\right) & \leq & y^{-1} = \frac{2}{y+y} \leq \frac{2}{y+y^2}.
\end{eqnarray*}
Therefore, $h'(y)\leq 0$ for $y\in (0,1]$, so $h(y)$ is non-increasing on this interval. As $h(1)=0$, this implies that $h(y)\geq 0$ for $y\in (0,1]$, so also $g'(x)\geq 0$ for $x\in (0,1]$. This means that $g(x)$ is non-decreasing on the interval $(0,1]$, so $g(x)\leq g(1)=2\ln 2$.
\end{proof}

\subsection{Small cuts}

We now proceed to the algorithm itself. Let us introduce the key notion.

\begin{definition}
Let $G=(V,E)$ be an undirected graph. A partition $(V_1,V_2)$ of $V$ is called a {\emph{$k$-cut of $G$}} if $|E(V_1,V_2)|\leq k$.
\end{definition}

\begin{lemma}\label{lem:enumeration}
$k$-cuts of a graph $G$ can be enumerated with polynomial time delay.
\end{lemma}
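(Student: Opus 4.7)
The plan is to set up a binary search tree whose leaves are in bijection with the $k$-cuts of $G$, and then traverse it by DFS while pruning any internal node from which no leaf is reachable. The tree will have depth $n = |V(G)|$ and each internal node will admit a polynomial-time feasibility test; a standard argument then turns this into polynomial delay.

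First I would fix an arbitrary ordering $v_1, v_2, \ldots, v_n$ of $V(G)$. A node of the search tree corresponds to a partial assignment $(V_1^*, V_2^*)$ with $V_1^* \cap V_2^* = \emptyset$ and $V_1^* \cup V_2^* = \{v_1, \ldots, v_j\}$ for the level $j$ of the node; the root sits at level $0$ with $V_1^* = V_2^* = \emptyset$. Every non-leaf node has two candidate children obtained by placing $v_{j+1}$ on either side. Each leaf at level $n$ is a full partition $(V_1, V_2)$ of $V$, and we output it precisely when $|E(V_1, V_2)| \le k$.

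The crucial ingredient is a polynomial-time subroutine \textsc{Feasible}$(V_1^*, V_2^*)$ which decides whether the partial assignment extends to some $k$-cut of $G$. If one of the two sides is empty, the trivial extension placing all remaining vertices on the other side yields $|E(V_1^*, V_2^*)| = 0 \le k$, so feasibility holds. Otherwise I would contract $V_1^*$ into a single vertex $s$ and $V_2^*$ into a single vertex $t$, keeping all remaining vertices and preserving multi-edges, and compute a minimum $s$–$t$ cut in the resulting graph. Extensions of $(V_1^*, V_2^*)$ to full partitions of $V$ correspond bijectively to $s$–$t$ cuts of the contracted graph, with matching cut sizes; thus feasibility holds iff the minimum $s$–$t$ cut has value at most $k$, which we can check in polynomial time by any standard max-flow algorithm.

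Equipped with \textsc{Feasible}, I would run DFS from the root, and before recursing into a candidate child, query \textsc{Feasible} on it and descend only if the answer is YES. By construction, every node the DFS actually visits has at least one $k$-cut among its descendants, so each leaf reached is output. Between two consecutive outputs the DFS walks along at most $2n$ tree edges, each demanding one call of \textsc{Feasible}, i.e.\ one min-cut computation on a graph of size $\cO(n+m)$, so the delay is polynomial in $n+m$. Different leaves correspond to different partitions of $V$, hence every $k$-cut is output exactly once. The only real step to get right is the min-cut feasibility reduction, and that is immediate once one observes that contracting each fixed side into a terminal vertex preserves the cost of every extending partition.
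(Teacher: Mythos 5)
Your proposal is correct and follows essentially the same approach as the paper: branch on the placement of each vertex in a fixed order, and prune a partial assignment by contracting the two sides into terminals $s,t$ and checking via max-flow whether the minimum $s$--$t$ cut is at most $k$. The paper states this more tersely but the argument is the same.
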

\begin{proof}
We follow the standard branching. We order the vertices arbitrarily, start with empty $V_1$, $V_2$ and for each consecutive vertex $v$ we branch into two subcases: we put $v$ either into $V_1$ or into $V_2$. Once the alignment of all vertices is decided, we output the cut. However, each time we put a vertex in one of the sets, we run a polynomial-time max-flow algorithm to check whether the minimum edge cut between $V_1$ and $V_2$ constructed so far is at most $k$. If not, then we terminate this branch as it certainly cannot result in any solutions found. Thus, we always pursue a branch that results in at least one feasible solution, and finding the next solution occurs within a polynomial number of steps.
\end{proof}

Intuitively, $k$-cuts of the graph $G$ form the search space of the algorithm. Therefore, we would like to bound their number. We do this by firstly bounding the number of cuts of a cluster graph, and then using the fact that a YES-instance is not very far from some cluster graph.
We begin with the following bound on binomial coefficients.

To prove Lemma \ref{lem:sqrt-bound} we need the following fact.

\begin{lemma}\label{lem:cluster-bound}
Let $K$ be a cluster graph containing at most $p$ clusters, where $p\leq 6k$. Then the number of $k$-cuts of $K$ is at most $2^{8\sqrt{pk}}$.
\end{lemma}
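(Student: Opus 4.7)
The plan is to parametrize each $k$-cut by how it splits each clique of $K$. Writing the clique sizes as $n_1,\ldots,n_p$ (padding with empty cliques if $K$ has fewer than $p$ components), a $k$-cut places some $a_i \in \{0,\ldots,n_i\}$ vertices of $C_i$ into $V_1$ and contributes $a_i(n_i-a_i)$ edges to the cut. Hence the number of $k$-cuts equals
\[
N \;=\; \sum_{(a_1,\ldots,a_p)\,:\,\sum_i a_i(n_i-a_i)\leq k}\; \prod_{i=1}^{p} \binom{n_i}{a_i}.
\]
I would then change variables to $b_i := \min(a_i, n_i-a_i) \leq n_i/2$. For a fixed tuple $(b_1,\ldots,b_p)$ there are at most $2^p$ consistent $(a_i)$-tuples (two choices $a_i=b_i$ or $a_i=n_i-b_i$ per coordinate), each contributing the same $\prod_i \binom{n_i}{b_i}$. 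Crucially, $n_i-b_i \geq b_i$ implies $b_i(n_i-b_i) \geq b_i^2 \geq b_i$ whenever $b_i \geq 1$, so the constraint $\sum_i b_i(n_i-b_i)\leq k$ already forces $\sum_i b_i \leq k$.

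The bound then rests on two estimates, both obtained from Lemma~\ref{lem:sqrt-bound}. First, the number of nonnegative integer tuples $(b_1,\ldots,b_p)$ with $\sum_i b_i \leq k$ is $\binom{p+k}{p} \leq 2^{2\sqrt{pk}}$. Second, for any such tuple, Lemma~\ref{lem:sqrt-bound} applied entrywise gives $\binom{n_i}{b_i} = \binom{b_i + (n_i-b_i)}{b_i} \leq 2^{2\sqrt{b_i(n_i-b_i)}}$, and Cauchy--Schwarz together with the constraint $\sum_i b_i(n_i-b_i)\leq k$ yields
\[
\sum_{i=1}^{p} \sqrt{b_i(n_i-b_i)} \;\leq\; \sqrt{p}\cdot\sqrt{\sum_i b_i(n_i-b_i)} \;\leq\; \sqrt{pk},
\]
so $\prod_i \binom{n_i}{b_i} \leq 2^{2\sqrt{pk}}$.

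Combining the three factors gives $N \leq 2^p \cdot 2^{2\sqrt{pk}} \cdot 2^{2\sqrt{pk}} = 2^{p+4\sqrt{pk}}$. The hypothesis $p \leq 6k$ enters only at the last step: it lets me write $p = \sqrt{p^2} \leq \sqrt{6pk}$, so $p+4\sqrt{pk} \leq (\sqrt{6}+4)\sqrt{pk} < 8\sqrt{pk}$, and the claimed bound $N \leq 2^{8\sqrt{pk}}$ follows. The main (mild) subtlety lies in choosing to parametrize by the smaller half $b_i$ rather than by $a_i$: this is what simultaneously makes the estimate $\sum_i b_i \leq k$ available (via $b_i \leq b_i(n_i-b_i)$) and allows Cauchy--Schwarz to convert a sum of $p$ square roots into a single $\sqrt{pk}$, keeping the exponent at the right rate.
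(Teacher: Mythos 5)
Your proof is correct and follows essentially the same route as the paper: pad to exactly $p$ (possibly empty) cliques, bound the cuts with a fixed split profile by $2^{2\sqrt{pk}}$ via the entrywise binomial bound plus Cauchy--Schwarz, and bound the number of profiles by a $2^p$ sign choice times a stars-and-bars count. The only (harmless) difference is in the profile count: the paper first deduces $\sum_i \min(x_i,y_i)\leq\sqrt{pk}$ and counts compositions of $\lfloor\sqrt{pk}\rfloor$, whereas you use $\sum_i b_i\leq k$ and $\binom{p+k}{p}\leq 2^{2\sqrt{pk}}$, which is if anything slightly cleaner.
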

\begin{proof}
By slightly abusing the notation, assume that $K$ has exactly $p$ clusters, some of which may be empty. Let $C_1,C_2,\ldots,C_p$ be these clusters and $c_1,c_2,\ldots,c_p$ be their sizes, respectively. We firstly establish a bound on the number of cuts $(V_1,V_2)$ such that the cluster $C_i$ contains $x_i$ vertices from $V_1$ and $y_i$ from $V_2$. Then we discuss how to bound the number of ways of selecting pairs $x_i,y_i$ summing up to $c_i$ for which the number of $k$-cuts is positive. Multiplying the obtained two bounds gives us the claim.

Having fixed the numbers $x_i,y_i$, the number of ways in which the cluster $C_i$ can be partitioned is equal to $\binom{x_i+y_i}{x_i}$. Note that $\binom{x_i+y_i}{x_i}\leq 2^{2\sqrt{x_iy_i}}$ by Lemma~\ref{lem:sqrt-bound}. Observe that there are $x_iy_i$ edges between $V_1$ and $V_2$ inside the cluster $C_i$, so if $(V_1,V_2)$ is a $k$-cut, then $\sum_{i=1}^p x_iy_i\leq k$. By applying the Cauchy-Schwarz inequality we infer that $\sum_{i=1}^{p}\sqrt{x_iy_i}\leq \sqrt{p}\cdot \sqrt{\sum_{i=1}^p x_iy_i}\leq \sqrt{pk}$. Therefore, the number of considered cuts is bounded by
$$\prod_{i=1}^p \binom{x_i+y_i}{x_i}\leq 2^{2\sum_{i=1}^p \sqrt{x_iy_i}}\leq 2^{2\sqrt{pk}}.$$
Moreover, observe that $\min(x_i,y_i)\leq \sqrt{x_iy_i}$; hence, $\sum_{i=1}^p \min(x_i,y_i)\leq \sqrt{pk}$. Thus, the choice of $x_i,y_i$ can be modeled by first choosing for each $i$, whether $\min(x_i,y_i)$ is equal to $x_i$ or to $y_i$, and then expressing $\lfloor \sqrt{pk} \rfloor$ as the sum of $p+1$ nonnegative numbers: $\min(x_i,y_i)$ for $1\leq i\leq p$ and the rest, $\lfloor \sqrt{pk} \rfloor-\sum_{i=1}^p \min(x_i,y_i)$. The number of choices in the first step is equal to $2^p\leq 2^{\sqrt{6pk}}$, and in the second is equal to $\binom{\lfloor \sqrt{pk} \rfloor+p}{p}\leq 2^{\sqrt{pk}+\sqrt{6pk}}$. Therefore, the number of possible choices of $x_i,y_i$ is bounded by $2^{(1+2\sqrt{6})\sqrt{pk}}\leq 2^{6\sqrt{pk}}$. Hence, the total number of $k$-cuts is bounded by $2^{6\sqrt{pk}}\cdot 2^{2\sqrt{pk}}=2^{8\sqrt{pk}}$, as claimed.
\end{proof}

\begin{lemma}\label{lem:yes-bound}
If $(G,p,k)$ is a YES-instance of \pclustering{} with $p\leq 6k$, then the number of $k$-cuts of $G$ is bounded by $2^{8\sqrt{2pk}}$.
\end{lemma}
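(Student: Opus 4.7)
The plan is to relate $k$-cuts of $G$ to cuts of a witnessing cluster graph $K$ and then invoke Lemma~\ref{lem:cluster-bound}. Since $(G,p,k)$ is a YES-instance, fix a $p$-cluster graph $K$ on vertex set $V(G)$ with $|E(G)\triangle E(K)|\leq k$. The key observation is that any $k$-cut of $G$ becomes a $2k$-cut of $K$: for any partition $(V_1,V_2)$ of $V(G)$, the edge sets $E_G(V_1,V_2)$ and $E_K(V_1,V_2)$ differ by at most the symmetric difference $E(G)\triangle E(K)$, so
\[
|E_K(V_1,V_2)|\;\leq\;|E_G(V_1,V_2)|+|E(G)\triangle E(K)|\;\leq\;k+k\;=\;2k.
\]

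Hence the identity map injects the family of $k$-cuts of $G$ into the family of $2k$-cuts of $K$, so it suffices to bound the latter. Since $K$ has at most $p$ clusters and the hypothesis $p\leq 6k\leq 6\cdot(2k)$ ensures that Lemma~\ref{lem:cluster-bound} is applicable with parameter $2k$ in place of $k$, we conclude that the number of $2k$-cuts of $K$ is at most $2^{8\sqrt{p\cdot 2k}}=2^{8\sqrt{2pk}}$, which is the desired bound.

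There is no real obstacle here: the argument is just the triangle inequality for symmetric differences combined with the previous lemma. The only detail to verify is that the hypothesis of Lemma~\ref{lem:cluster-bound} continues to hold after doubling the cut parameter, and this is immediate from $p\leq 6k$.
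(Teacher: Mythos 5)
Your proof is correct and follows exactly the paper's argument: fix a witnessing cluster graph $K$ with $\ham(G,K)\leq k$, observe that every $k$-cut of $G$ is a $2k$-cut of $K$, and apply Lemma~\ref{lem:cluster-bound} with $2k$ in place of $k$. Your explicit verification that the hypothesis $p\leq 6\cdot(2k)$ still holds is a welcome detail the paper leaves implicit.
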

\begin{proof}
Let $K$ be a cluster graph with at most $p$ clusters such that $\ham(G,K)\leq k$. Observe that every $k$-cut of $G$ is also a $2k$-cut of $K$, as $K$ differs from $G$ by at most $k$ edge modifications. The claim follows from Lemma~\ref{lem:cluster-bound}.
\end{proof}

\subsection{The algorithm}

\begin{proof}[Proof of Theorem \ref{thm:pclustering-subept}]
Let $(G=(V,E),p,k)$ be the given \pclustering{} instance. By making use of Proposition~\ref{prop:polykernel_Guo}, we can assume that $G$ has at most $(p+2)k+p$ vertices, thus all the factors polynomial in the size of $G$ can be henceforth hidden within the $2^{\cO(\sqrt{pk})}$ factor. Application of Proposition~\ref{prop:polykernel_Guo} gives the additional $\cO(n+m)$ summand to the complexity. By further usage of Lemma~\ref{lem:preprocessing} we can also assume that $p\leq 6k$. Note that application of Lemma~\ref{lem:preprocessing} can spoil the bound $|V(G)|\leq (p + 2)k + p$ as $p$ can decrease; however the number of vertices of the graph is still bounded in terms of initial $p$ and $k$.

We now enumerate $k$-cuts of $G$ with polynomial time delay. If we exceed the bound $2^{8\sqrt{2pk}}$ given by Lemma~\ref{lem:yes-bound}, we know that we can safely answer NO, so we immediately terminate the computation and give a negative answer. Therefore, we can assume that we have computed the set $\nice$ of all $k$-cuts of $G$ and $|\nice|\leq 2^{8\sqrt{2pk}}$.

Assume that $(G,p,k)$ is a YES-instance and let $K$ be a cluster graph with at most $p$ clusters such that $\ham(G,K)\leq k$. Again, let $C_1,C_2,\ldots,C_p$ be the clusters of $K$. Observe that for every $j\in \{0,1,2,\ldots,p\}$, the partition $\left(\bigcup_{i=1}^j V(C_i),\bigcup_{i=j+1}^p V(C_i)\right)$ has to be the $k$-cut with respect to $G$, as otherwise there would be more than $k$ edges that need to be deleted from $G$ in order to obtain $K$. This observation enables us to use a dynamic programming approach on the set of cuts.

We construct a directed graph $D$, whose vertex set is equal to $\nice\times\{0,1,2,\ldots,p\}\times \{0,1,2,\ldots,k\}$; note that $|V(D)|=2^{\cO(\sqrt{pk})}$. We create arcs going from $((V_1,V_2),j,\ell)$ to $((V_1',V_2'),j+1,\ell')$, where $V_1\subsetneq V_1'$ (hence $V_2\supsetneq V_2'$), $j\in \{0,1,2,\ldots,p-1\}$ and $\ell'=\ell+|E(V_1,V_1'\setminus V_1)|+|\overline{E}(V_1'\setminus V_1,V_1'\setminus V_1)|$ ($(V,\overline{E})$ is the complement of the graph $G$). The arcs can be constructed in $2^{\cO(\sqrt{pk})}$ time by checking for all the pairs of vertices whether they should be connected. We claim that the answer to the instance $(G,p,k)$ is equivalent to reachability of any of the vertices of form $((V,\emptyset),p,\ell)$ from the vertex $((\emptyset,V),0,0)$.

In one direction, if there is a path from $((\emptyset,V),0,0)$ to $((V,\emptyset),p,\ell)$ for some $\ell\leq k$, then the consecutive sets $V_1'\setminus V_1$ along the path form clusters $C_i$ of a cluster graph $K$, whose editing distance to $G$ is accumulated on the last coordinate, thus bounded by $k$. In the second direction, if there is a cluster graph $K$ with clusters $C_1,C_2,\ldots,C_p$ within editing distance at most $k$ from $G$, then vertices $\left(\left(\bigcup_{i=1}^j V(C_i),\bigcup_{i=j+1}^p V(C_i)\right),j,\ham\left(G\left[\bigcup_{i=1}^j V(C_i)\right],K\left[\bigcup_{i=1}^j V(C_i)\right]\right)\right)$ form a path from $((\emptyset,V),0,0)$ to $((V,\emptyset),p,\ham(G,K))$. Note that all these triples are indeed vertices of the graph $D$, as $\left(\bigcup_{i=1}^j V(C_i),\bigcup_{i=j+1}^p V(C_i)\right)$ are $k$-cuts of $G$.

Reachability in a directed graph can be tested in linear time with respect to the number of vertices and arcs. We can now apply this algorithm to the graph $D$ and conclude solving the \pclustering{} instance in $\cO(2^{\cO(\sqrt{pk})} + n + m)$ time.
\end{proof}

\newcommand{\vars}{\mathrm{Vars}}
\newcommand{\ipart}{r}
\newcommand{\isix}{\alpha}
\newcommand{\icyc}{\beta}
\newcommand{\ivcyc}{\gamma}
\newcommand{\icop}{\xi}
\newcommand{\ivar}{\eta}
\newcommand{\lsgn}{\mathrm{sgn}}
\newcommand{\setQ}{\mathcal{Q}}
\newcommand{\setX}{\mathcal{W}}
\newcommand{\setS}{\mathcal{S}}
\newcommand{\kQQ}{k_{\setQ-\setQ}}
\newcommand{\kQXS}{k_{\setQ-\setX\setS}}
\newcommand{\kXSXS}{k_{\setX\setS-\setX\setS}^\mathrm{all}}
\newcommand{\kcut}{k_{\setX\setS-\setX\setS}^\mathrm{exist}}
\newcommand{\kXX}{k_{\setX-\setX}^\mathrm{save}}
\newcommand{\kXS}{k_{\setX-\setS}^\mathrm{save}}
\newcommand{\lQQ}{\ell_{\setQ-\setQ}}
\newcommand{\lQXS}{\ell_{\setQ-\setX\setS}}
\newcommand{\lXSXS}{\ell_{\setX\setS-\setX\setS}^\mathrm{all}}
\newcommand{\lXX}{\ell_{\setX-\setX}^\mathrm{save}}
\newcommand{\lXS}{\ell_{\setX-\setS}^\mathrm{save}}
\newcommand{\XSalone}{n^\mathrm{alone}}
\newcommand{\opvar}{\mathrm{var}}

\section{Multivariate lower bound: proof of Theorem \ref{thm:multivariate-reduction}}\label{sec:multi}

This section contains the proof of Theorem \ref{thm:multivariate-reduction}. The proof consists of four parts. In Section \ref{app:multi-preprocess} we preprocess the input formula $\Phi$ to make it more regular.
Section \ref{app:multi-construction} contains the details of the construction of the graph $G$.
In Section \ref{app:multi-1} we show how to translate a satisfying assignment of $\Phi$ into a $6p$-cluster graph
$G_0$ close to $G$ and we provide a reverse implication in Section \ref{app:multi-2}.
In the proof we treat $\varepsilon$ as a constant and hide the factors depending on it in the $\cO$-notation.
That is, the constants in the $\cO$-notation correspond to the factor $\delta$ in the statement of Theorem \ref{thm:multivariate-reduction}.

\subsection{Preprocessing of the formula}\label{app:multi-preprocess}

We start with a step that regularizes the input formula $\Phi$, while increasing its size only by a constant factor.
The purpose of this step is to ensure that, when we translate a satisfying assignment of $\Phi$ into a cluster graph $G_0$ in
the completeness step, the clusters are of the same size, and therefore contain the minimum possible number of edges.
This property is used in the argumentation of the soundness step.

\begin{lemma}\label{lem:preparation}
For any fixed $\varepsilon > 0$, there exists a polynomial-time algorithm that, given a $3$-CNF formula $\Phi$ with
$n$ variables and $m$ clauses and an integer $p$, $\varepsilon p \leq n$, constructs a $3$-CNF formula $\Phi'$ with $n'$ variables and $m'$ clauses together with
a partition of the variable set $\vars(\Phi')$ into $p$ parts $\vars^\ipart$, $1 \leq \ipart \leq p$, such that following properties hold:
\begin{itemize}
\item[(a)] $\Phi'$ is satisfiable iff $\Phi$ is;
\item[(b)] in $\Phi'$ every clause contains exactly three literals corresponding to different variables;
\item[(c)] in $\Phi'$ every variable appears exactly three times positively and exactly three times negatively;
\item[(d)] $n'$ is divisible by $p$ and, for each $1 \leq \ipart \leq p$, $|\vars^\ipart| = n'/p$
(i.e., the variables are split evenly between the parts $\vars^\ipart$);
\item[(e)] if $\Phi'$ is satisfiable, then there exists a satisfying assignment of $\vars(\Phi')$ with the property
that in each part $\vars^\ipart$ the numbers of variables set to true and to false are equal.
\item[(f)] $n'+m'=\cO(n+m)$, where the constant hidden in the $\cO$-notation depends on $\varepsilon$.
\end{itemize}
\end{lemma}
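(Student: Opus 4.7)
The plan is to apply a sequence of well-known regularization gadgets, each blowing up the formula by at most a constant factor, and to thread through them a ``twin'' pairing of variables from which property~(e) falls out automatically. First, to obtain~(b), I normalize each clause: any clause with fewer than three distinct literals is replaced by a constant number of clauses with three distinct literals, using fresh auxiliary variables (e.g., the clause $(a \vee b)$ becomes $(a \vee b \vee c) \wedge (a \vee b \vee \neg c)$ with $c$ fresh). This step preserves satisfiability and blows the formula up by at most a constant factor.

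For the combination of (c) and~(e), I introduce for each variable $x$ of the current formula a \emph{twin variable} $\bar x$ together with a small three-literal gadget forcing $\bar x \leftrightarrow \neg x$; any satisfying assignment then has exactly one of $\{x,\bar x\}$ true. If every $\vars^\ipart$ in the final partition contains only complete twin pairs, each part necessarily has equal numbers of true and false variables in any satisfying assignment, so~(e) is automatic once~(d) is done compatibly with the pairing. To enforce~(c), I apply the standard ``equivalence cycle'' construction: each variable $y$ of the enlarged formula is split into one copy per occurrence, the copies are connected into a cyclic chain of equivalence constraints (each equivalence $y^{(i)} \leftrightarrow y^{(i+1)}$ written as a pair of three-literal clauses in the style of step~(b)), and a final round of balancing dummy clauses, each consisting of fresh padding literals wired into a trivially-satisfiable gadget, is used to bring the positive and negative occurrence counts of every variable to exactly three each.

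For (d), I pad the variable set with additional inert twin-pair gadgets, each contributing a constant number of variables that are paired with complementary twins and automatically satisfy the $(3,3)$ count, until the total variable count $n'$ is divisible by $p$; the hypothesis $n \geq \varepsilon p$ guarantees that at most $p \leq n/\varepsilon$ such extra pairs are added, keeping the total blow-up linear. I then distribute the final variables into the $p$ parts, keeping every twin pair inside a single part; since $n'$ is divisible by $p$ and each part has the same number of twin pairs, $|\vars^\ipart| = n'/p$ as required. Property~(a) is immediate because each gadget is satisfiability-equivalent to the subformula it replaces, and~(f) holds because only constantly many transformations are applied, each with linear blow-up. The main obstacle I expect is the careful bookkeeping in the balancing round of step two: one must verify that the dummy clauses used to fix the off-by-one imbalances left by the equivalence cycle do not themselves introduce new imbalances, which amounts to pairing copies with complementary deficits and routing them through shared fresh variables that occur exactly three times positively and three times negatively across the whole gadget.
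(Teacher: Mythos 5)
There is a genuine gap, concentrated in how you obtain property (e). Your plan hangs (e) on a twin pairing $x \leftrightarrow \neg\bar x$ introduced \emph{before} the occurrence-regularization step, but the later transformations destroy exactly the invariant you need. After the equivalence-cycle splitting, the variable $x$ is replaced by one copy per occurrence and $\bar x$ likewise; since $x$ and $\bar x$ occur different numbers of times (the original occurrences of $x$ versus essentially only the twin gadget for $\bar x$), the two families of copies have different cardinalities and cannot be matched into complementary pairs. Moreover, the final variable set also contains the cycle auxiliaries and the fresh variables from your clause-padding and ``balancing dummy clauses,'' none of which are claimed to sit in forced-complementary pairs; any part $\vars^\ipart$ containing them breaks (e). Note also that (e) only asks for the \emph{existence} of one balanced satisfying assignment, so forcing balance in every assignment is unnecessarily strong and is what makes your construction fragile. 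The paper's route is to do all of (b)--(d) first and only then set $\Phi' = \Phi \wedge \overline{\Phi}$, where $\overline{\Phi}$ is a disjoint copy with every literal's sign flipped, placing each variable and its mirror copy in the same part; given any satisfying assignment of $\Phi$ one copies it and reverses it on $\overline{\Phi}$, which yields (e) for free and is robust to whatever auxiliary variables the earlier steps created.

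A secondary, smaller gap is that your step for (c) is not actually carried out: as you note, padding each equivalence $y^{(i)} \rightarrow y^{(i+1)}$ with a fresh variable leaves every copy with a $3$-versus-$2$ imbalance and leaves the padding variables at one positive and one negative occurrence, and ``a final round of balancing dummy clauses'' is precisely the part that needs a construction. The paper sidesteps this by first duplicating clauses to make all occurrence counts even and balanced (using two shared auxiliaries $q,r$), then tripling all clauses, and then building the cycle with clauses $(\neg x_i \vee x_{i+1} \vee y_{\lceil i/3\rceil})$ and $(\neg x_i \vee x_{i+1} \vee \neg y_{\lceil i/3\rceil})$, where each copy $x_i$ absorbs exactly one positive and one negative original occurrence and each auxiliary $y_j$ serves exactly six cycle clauses; every variable then lands at exactly three positive and three negative occurrences with no residual correction needed. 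You would need to supply an equally explicit arithmetic to close your version.
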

\begin{proof}
We modify $\Phi$ while preserving satisfiability, consecutively ensuring that properties (b), (c), (d), and (e) are satisfied. Satisfaction of (f) will follow directly from the constructions used.

First, delete every clause that contains two different literals corresponding to the same variable, as they are always satisfied. Remove copies of the same literals inside clauses. Until all the clauses have at least two literals, remove every clause containing one literal, set the value of this literal so that the clause is satisfied and propagate this knowledge to the other clauses. At the end, create a new variable $p$ and for every clause $C$ that has two literals replace it with two clauses $C\vee p$ and $C\vee \neg p$. All these operations preserve satisfiability and at the end all the clauses consist of exactly three different literals.

Second, duplicate each clause so that every variable appears an even number of times. Introduce two new variables $q,r$. Take any variable $x$, assume that $x$ appears positively $k^+$ times and negatively $k^-$ times. If $k^+<k^-$, introduce clauses $(x\vee q\vee r)$ and $(x\vee \neg q\vee \neg r)$, each $\frac{k^- - k^+}{2}$ times, otherwise introduce clauses $(\neg x\vee q\vee r)$ and $(\neg x\vee \neg q\vee \neg r)$, each $\frac{k^+ - k^-}{2}$ times. These operations preserve satisfiability (as the new clauses can be satisfied by setting $q$ to true and $r$ to false) and, after the operation, every variable appears the same number of time positively as negatively (including the new variables $q,r$).

Third, copy each clause three times. For each variable $x$, replace all occurrences of the variable $x$ with a cycle of implications in the following way. Assume 
that $x$ appears $6d$ times (the number of appearances is divisible by six due to the modifications in the previous paragraph and the copying step). Introduce new variables
$x_i$ for $1 \leq i \leq 3d$, $y_i$ for $1 \leq i \leq d$ and clauses $(\neg x_i \vee x_{i+1} \vee y_{\lceil i/3 \rceil})$ and $(\neg x_i \vee x_{i+1} \vee \neg y_{\lceil i/3 \rceil})$
for $1 \leq i \leq 3d$ (with $x_{3d+1} = x_1$).
Moreover, replace each occurrence of the variable $x$ with one of the variables $x_i$ in such a way that each variable $x_i$ is used once in a positive literal and once in a negative one.
In this manner each variable $x_i$ and $y_i$ is used exactly three times in a positive literal and three times in a negative one. Moreover, the new clauses form an implication cycle
$x_1 \Rightarrow x_2 \Rightarrow \ldots \Rightarrow x_{3d} \Rightarrow x_1$, ensuring that all the variables $x_i$ have equal value in any satisfying assignment of the formula.

Fourth, to make $n'$ divisible by $p$ we first copy the entire formula three times, creating a new set of variables for each copy. In this way we ensure that the number
of variables is divisible by three. Then we add new variables in triples to make the number of variables divisible by $p$.
For each triple $x,y,z$ of new variables, we introduce six new clauses: all possible clauses that contain one literal for each variable $x$, $y$ and $z$ except for
$(x \vee y \vee z)$ and $(\neg x \vee \neg y \vee \neg z)$. Note that the new clauses are easily satisfied by setting all new variables to true, while all new variables
appear exactly three times positively and three times negatively. Moreover, as initially $\varepsilon p \leq n$, this step increases the size of the formula only by a constant factor.

Finally, to achieve (d) and (e) take $\Phi'=\Phi\wedge\overline{\Phi}$, where $\overline{\Phi}$ is a copy of $\Phi$ on a disjoint copy of the variable set and with all literals reversed, i.e., positive occurrences are replaced by negative ones and vice versa. Of course, if $\Phi'$ is satisfiable then $\Phi$ as well, while if $\Phi$ is satisfiable, then we can copy the assignment to the copies of variables and reverse it, thus obtaining a feasible assignment for $\Phi'$.
Recall that before this step the number of variables was divisible by $p$. We can now partition the variable set into $p$ parts, such that whenever we include a variable into one part,
we include its copy in the same part as well. 
In order to prove that the property (e) holds, take any feasible solution to $\Phi'$, truncate the evaluation to $\vars(\Phi)$ and copy it while reversing on $\overline{\Phi}$.
\end{proof}

\subsection{Construction}\label{app:multi-construction}

In this section we show how to compute the graph $G$ and the integer $k'$
from the formula $\Phi'$ given by Lemma \ref{lem:preparation}.
As Lemma \ref{lem:preparation} increases the size of the formula
by a constant factor, we have that $n',m' = \cO(\sqrt{pk})$ and $|\vars^\ipart| = n'/p = \cO(\sqrt{k/p})$
for $1 \leq \ipart \leq p$.

Observe that in the statement of the Theorem \ref{thm:multivariate-reduction} we can safely assume that $\varepsilon\leq 1$, as the assumptions become more and more restricted as $\varepsilon$ becomes smaller. From now on we assume that $\varepsilon\leq 1$.

Let $L = 1000\cdot\left(1+ \frac{n'}{p\varepsilon}\right) = \cO(\sqrt{k/p})$.
For each part $\vars^\ipart$, $1 \leq \ipart \leq p$,
we create six cliques $Q_\isix^\ipart$, $1 \leq \isix \leq 6$, each of size $L$.
Let $\setQ$ be the set of all vertices of all cliques $Q_\isix^\ipart$.
In this manner we have $6p$ cliques. Intuitively, if we seek for a $6p$-cluster graph close to $G$, then the cliques are large
enough so that merging two cliques is expensive --- in the intended solution we have exactly one clique in each cluster.

For every variable $x \in \vars^\ipart$, we create six vertices
$w_{1,2}^x, w_{2,3}^x,\ldots,w_{5,6}^x,w_{6,1}^x$. Connect them into a cycle in this order; this cycle is called a {\em{$6$-cycle for the variable $x$}}.
Moreover, for each $1 \leq \isix \leq 6$ and $v \in V(Q_\isix^\ipart)$,
create edges $vw_{\isix-1,\isix}^x$ and $vw_{\isix,\isix+1}^x$
(we assume that the indices behave cyclically,
i.e., $w_{6,7}^x = w_{6,1}^x$, $Q_7^\ipart = Q_1^\ipart$ etc.).
Let $\setX$ be the set of all vertices $w_{\isix,\isix+1}^x$ for all variables $x$.
Intuitively, the cheapest way to cut the $6$-cycle for variable $x$ is to assign the vertices
$w_{\isix,\isix+1}^x$, $1 \leq \isix \leq 6$ all either to the clusters
with cliques with only odd indices or only with even indices. Choosing even indices corresponds to setting $x$ to false, while choosing odd ones corresponds to setting $x$ to true.

Let $\ipart(x)$ be the index of the part that contains the variable $x$,
    that is, $x \in \vars^{\ipart(x)}$.

In each clause $C$ we (arbitrarily) enumerate variables: for $1 \leq \ivar \leq 3$, let $\opvar(C,\ivar)$ be the variable in the $\ivar$-th literal of $C$,
and $\lsgn(C,\ivar) = 0$ if the $\ivar$-th literal is negative and $\lsgn(C,\ivar)=1$ otherwise.

For every clause $C$ create nine vertices: $s_{\icyc,\icop}^C$ for $1 \leq \icyc,\icop \leq 3$.
The edges incident to the vertex $s_{\icyc,\icop}^C$ are defined as follows:
\begin{itemize}
\item for each $1 \leq \ivar \leq 3$ create an edge $s_{\icyc,\icop}^Cw_{2\icyc+2\ivar-3,2\icyc+2\ivar-2}^{\opvar(C,\ivar)}$;
\item if $\icop = 1$, for each $1 \leq \ivar \leq 3$ connect $s_{\icyc,\icop}^C$ to all vertices of one
of the cliques the vertex $w_{2\icyc+2\ivar-3,2\icyc+2\ivar-2}^{\opvar(C,\ivar)}$ is adjacent to depending
on the sign of the $\ivar$-th literal in $C$, that is, the clique $Q_{2\icyc + 2\ivar - 2 - \lsgn(C,\ivar)}^{\ipart(\opvar(C,\ivar))}$;
\item if $\icop > 1$, for each $1 \leq \ivar \leq 3$ connect $s_{\icyc,\icop}^C$ to all vertices of both
cliques the vertex $w_{2\icyc+2\ivar-3,2\icyc+2\ivar-2}^{\opvar(C,\ivar)}$ is adjacent to, that is,
the cliques $Q_{2\icyc + 2\ivar - 3}^{\ipart(\opvar(C,\ivar))}$ and $Q_{2\icyc + 2\ivar - 2}^{\ipart(\opvar(C,\ivar))}$.
\end{itemize}
We note that for a fixed vertex $s_{\icyc,\icop}^C$, the aforementioned cliques $s_{\icyc,\icop}^C$ is adjacent to
are pairwise different, and they have pairwise different subscripts (but may have equal superscripts, i.e., belong to the same part).
See Figure \ref{fig:example-construction} for an illustration.

Let $\setS$ be the set of all vertices $s_{\icyc,\icop}^C$ for all clauses $C$.
If we seek a $6p$-cluster graph close to the graph $G$,
it is reasonable
to put a vertex $s_{\icyc,\icop}^C$ in a cluster together with one of the cliques
this vertex is attached to. If $s_{\icyc,\icop}^C$ is put in a cluster together with
one of the vertices $w_{2\icyc+2\ivar-3,2\icyc+2\ivar-2}^{\opvar(C,\ivar)}$ for $1 \leq \ivar \leq 3$,
we do not need to cut the appropriate edge. The vertices $s_{\icyc,1}^C$ verify
the assignment encoded by the variable vertices $w_{\isix,\isix+1}^x$;
the vertices $s_{\icyc,2}^C$ and $s_{\icyc,3}^C$ help us to make all clusters
be of equal size (which is helpful in the soundness argument).


\begin{figure}[htbp]
\centering
\includegraphics[scale=1]{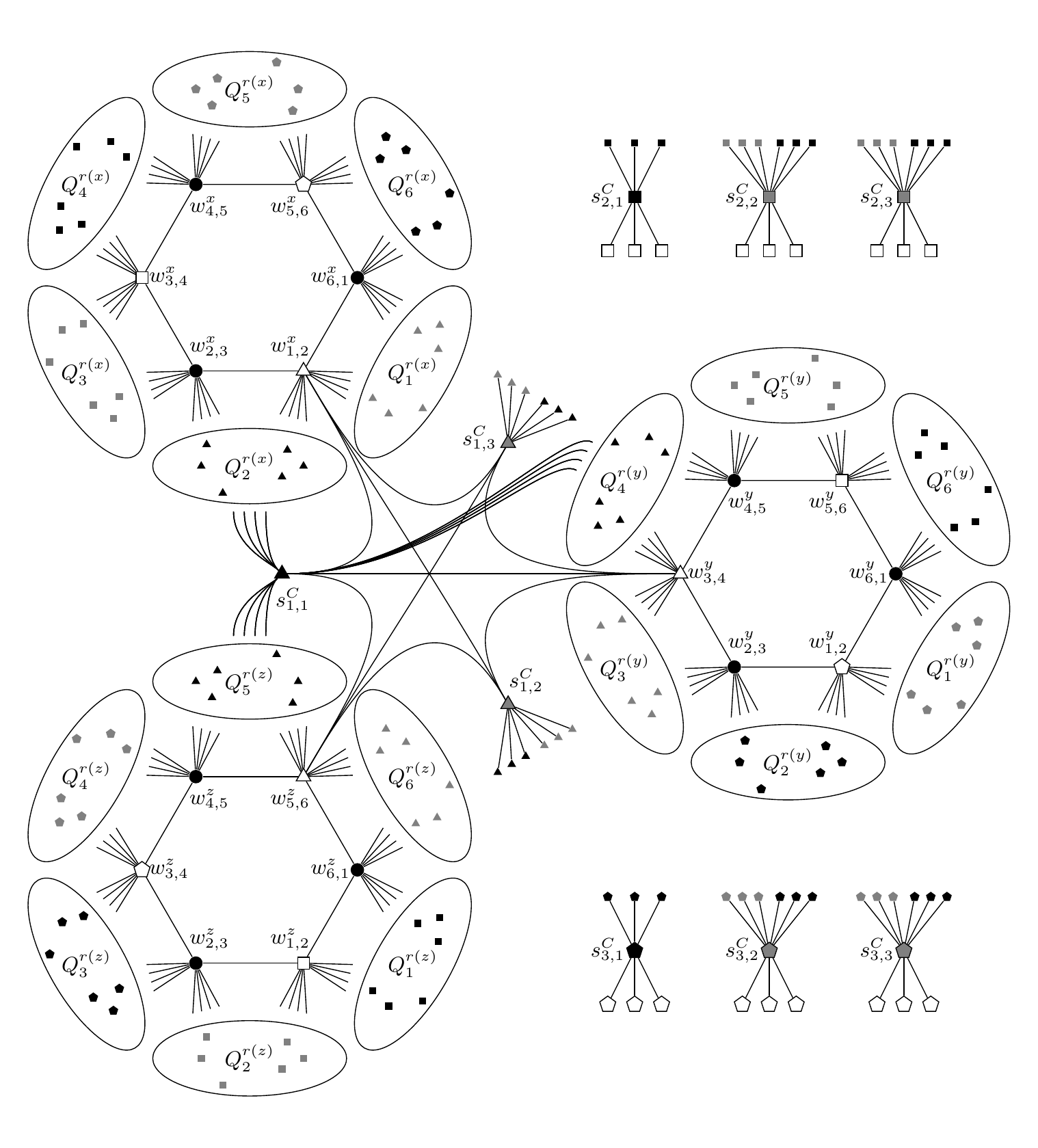}
\caption{A part of the graph $G$ created for the clause $C=(\neg x \vee \neg y \vee z)$, with
  $\opvar(C,1) = x$, $\opvar(C,2) = y$ and $\opvar(C,3) = z$.
    Note that the parts $\ipart(x)$, $\ipart(y)$ and $\ipart(z)$ may be not be pairwise distinct.
    However, due to the rotation index $\icyc$, in any case for a fixed vertex $s_{\icyc,\icop}^C$
    the cliques this vertex is adjacent to on this figure are pairwise distinct and have pairwise distinct subscripts.}
\label{fig:example-construction}
\end{figure}

We note that $|V(G)| = 6pL + \cO(n'+m') = \cO(\sqrt{pk})$.

We now define the budget $k'$ for edge editions. To make the presentation more clear,
we split this budget into few summands.
Let
\begin{align*}
\kQQ &= 0, & \kQXS &= (6n' + 36m')L, & \kXSXS &= 6p \binom{\frac{6n' + 9m'}{6p}}{2}, \\
\kcut &= 6n' + 27m', & \kXX &= 3n', & \kXS &= 9m' \\
 \end{align*}
and finally
$$k' = \kQQ + \kQXS + \kXSXS + \kcut - 2\kXX - 2\kXS.$$
Note that, as $p \leq k$, $L = \cO(\sqrt{k/p})$ and $n',m' = \cO(\sqrt{pk})$, we have
$k' = \cO(k)$.

The intuition behind this split is as follows.
The intended solution for the \pclustering{} instance $(G,6p,k')$
creates no edges between the cliques $Q_\isix^\ipart$, each clique is contained
in its own cluster, and $\kQQ = 0$. For each $v \in \setX \cup \setS$,
the vertex $v$ is assigned to a cluster with one clique $v$ is adjacent to;
$\kQXS$ accumulates the cost of removal of other edges in $E(\setQ,\setX \cup \setS)$.
Finally, we count the editions in $(\setX \cup \setS) \times (\setX \cup \setS)$ in
an indirect way. First we cut all edges of $E(\setX \cup \setS, \setX \cup \setS)$
(summand $\kcut$). We group the vertices of $\setX \cup \setS$ into clusters
and add edges between vertices in each cluster; the summand $\kXSXS$
corresponds to the cost of this operation when all the clusters are of the same size
(and the number of edges is minimum possible). Finally, in summands
$\kXX$ and $\kXS$ we count how many edges are removed and then added again
in this process: $\kXX$ corresponds to saving three edges from each
$6$-cycle in $E(\setX,\setX)$ and $\kXS$ corresponds to saving one
edge in $E(\setX,\setS)$ per each vertex $s_{\icyc,\icop}^C$.

\subsection{Completeness}\label{app:multi-1}

We now show how to translate a satisfying assignment of the input formula
$\Phi$ into a $6p$-cluster graph close to $G$.
\begin{lemma}
If the input formula $\Phi$ is satisfiable, then there exists a $6p$-cluster graph
$G_0$ on vertex set $V(G)$ such that $\ham(G,G_0) = k'$.
\end{lemma}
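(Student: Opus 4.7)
The plan is to build $G_0$ from a satisfying assignment $\phi$ of $\Phi'$ enjoying the ``half-true in each part'' property guaranteed by Lemma~\ref{lem:preparation}(e). I will use $6p$ clusters $K_\isix^\ipart$ ($1 \le \isix \le 6$, $1 \le \ipart \le p$), each seeded with the clique $V(Q_\isix^\ipart)$, and then partition the vertices of $\setX \cup \setS$ among them so that every cluster ends up with exactly the uniform size $L + n'/p + 3m'/(2p)$.

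For each variable $x \in \vars^\ipart$, if $\phi(x)$ is true I place each $w_{\isix,\isix+1}^x$ into whichever of $K_\isix^\ipart, K_{\isix+1}^\ipart$ has odd subscript, and if $\phi(x)$ is false I choose the even-subscript one instead. Then the six $w$-vertices of $x$ occupy exactly three clusters (two per cluster), precisely three of the six cycle edges stay inside clusters, and summing over all variables recovers the $3n' = \kXX$ saved cycle edges. By Lemma~\ref{lem:preparation}(e), exactly half the variables in $\vars^\ipart$ are true, so each cluster $K_\isix^\ipart$ receives exactly $n'/p$ $w$-vertices.

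For each clause $C$ I pick an index $\ivar(C) \in \{1,2,3\}$ for which the corresponding literal is satisfied by $\phi$ and place, for every $\icyc \in \{1,2,3\}$, the vertex $s_{\icyc,1}^C$ into the unique clique cluster it is adjacent to via $\ivar = \ivar(C)$. A direct parity check (positive literal paired with a true variable, or negative with false) shows this cluster is precisely the one holding $w_{2\icyc+2\ivar-3,2\icyc+2\ivar-2}^{\opvar(C,\ivar)}$, so the edge between them is saved. Every remaining free vertex $s_{\icyc,\icop}^C$ with $\icop \in \{2,3\}$ has six admissible clusters, and any choice saves exactly one $s$--$w$ edge, cumulating to $9m' = \kXS$ total. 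The delicate point is to pick the free placements so that every cluster reaches exactly the uniform size; otherwise convexity of $\binom{\cdot}{2}$ strictly inflates the ``addition'' cost above $\kXSXS$. I would set up a bipartite $b$-matching between free $s$-vertices and clusters, observe that the fractional assignment sending each free vertex with weight $\tfrac{1}{6}$ to each of its six admissible clusters satisfies all residual demands on average, and then invoke a Hall-type condition (exploiting the rotation symmetry across $\icyc$ and the evenness from Lemma~\ref{lem:preparation}(e)) to extract an integral balanced placement. I expect this balancing step to be the main technical obstacle; everything else is bookkeeping.

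Once the partition is fixed, $\ham(G,G_0)$ splits into four contributions that match the summands defining $k'$: zero edits inside $\setQ \times \setQ$; exactly $(6n' + 36m')L = \kQXS$ deletions in $E(\setQ,\setX \cup \setS)$, since each $w$-vertex retains $L$ of its $2L$ $\setQ$-neighbors, each $s_{\icyc,1}^C$ retains $L$ of $3L$, and each free $s$-vertex retains $L$ of $6L$; and inside $\setX \cup \setS$, thanks to uniform cluster sizes, exactly $\kcut - \kXX - \kXS$ original edges to delete and $\kXSXS - \kXX - \kXS$ missing edges to add. Summing the four contributions yields exactly $k'$, as required.
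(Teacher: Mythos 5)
There is a genuine gap, and it sits exactly where you flagged it: the balanced placement of the vertices $s_{\icyc,\icop}^C$ with $\icop\in\{2,3\}$. Two things go wrong. First, these vertices do not have six admissible clusters each of which saves one $\setX$--$\setS$ edge: $s_{\icyc,\icop}^C$ is attached to six cliques, but each of its three $w$-neighbours lies in only one of the two clique-clusters attached to it (the one selected by the truth value of the corresponding variable), so exactly three of the six clusters contain a $w$-neighbour; placing the vertex in one of the other three saves no edge and destroys the $\kXS=9m'$ part of the accounting. Second, the fractional-feasibility premise of your Hall-type argument is false. The residual demand of a cluster after the $\icop=1$ vertices are placed depends on how many clauses have their designated satisfying literal $\ivar(C)$ pointing to a variable housed in that cluster, and a given variable can be the designated satisfier of anywhere between $0$ and $3$ clauses; the uniform fractional spread of the free vertices, on the other hand, deposits the same weight on every cluster because it is computed from the fixed attachment structure alone. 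These two quantities do not agree in general, so the $b$-matching you need can fail to exist, and no averaging over $\icyc$ or appeal to Lemma~\ref{lem:preparation}(e) repairs this, since the skew comes from the arbitrary choice of $\ivar(C)$.

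The paper sidesteps the matching entirely with a deterministic rotation: it sets $\ivar=\ivar(C)+\icop-1$ (cyclically in $\{1,2,3\}$) and places $s_{\icyc,\icop}^C$ in the cluster of its neighbour $w_{2\icyc+2\ivar-3,2\icyc+2\ivar-2}^{\opvar(C,\ivar)}$. This is legal for $\icop\ge 2$ precisely because those vertices are attached to \emph{both} cliques of each neighbouring variable, so they can follow that variable whichever truth value it takes (only the $\icop=1$ vertices must follow a satisfying variable); and it is automatically balanced because each clause then sends exactly one of its nine $s$-vertices to each of the three clusters of each of its three variables, so every cluster receives one $\setS$-vertex per pair (variable housed in it, clause containing that variable) --- a count fixed by the regularization of Lemma~\ref{lem:preparation} rather than by the assignment. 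To salvage your route you would have to prove a Hall condition that accounts for the skew introduced by the $\icop=1$ placement, which seems harder than adopting the rotation; the rest of your write-up (the $w$-placement, the $\icop=1$ placement, and the four-part edit accounting) matches the paper.
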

\begin{proof}
Let $\phi'$ be a satisfying assignment of the formula $\Phi'$ as guaranteed by Lemma \ref{lem:preparation}. Recall that in each
part $\vars^\ipart$, the assignment $\phi'$ sets the same number
of variables to true as to false.

To simplify the presentation, we identify the range of $\phi'$ with integers: $\phi'(x) = 0$ if $x$ is evaluated to false in $\phi'$ and $\phi'(x) = 1$ otherwise.
Moreover, for a clause $C$ by $\ivar(C)$ we denote the index of an arbitrarily chosen literal that satisfies $C$ in the assignment $\phi'$.

We create $6p$ clusters $K_\isix^\ipart$, $1 \leq \ipart \leq p$, $1 \leq \isix \leq 6$,
as follows:
\begin{itemize}
\item $Q_\isix^\ipart \subseteq K_\isix^\ipart$ for $1 \leq \ipart \leq p$, $1 \leq \isix \leq 6$;
\item for $x\in \vars(\Phi')$, if $\phi'(x)=1$ then $w_{6,1}^x,w_{1,2}^x\in K_1^{\ipart(x)}$, $w_{2,3}^x,w_{3,4}^x\in K_3^{\ipart(x)}$, $w_{4,5}^x,w_{5,6}^x\in K_5^{\ipart(x)}$;
\item for $x\in \vars(\Phi')$, if $\phi'(x)=0$ then $w_{1,2}^x,w_{2,3}^x\in K_2^{\ipart(x)}$, $w_{3,4}^x,w_{4,5}^x\in K_4^{\ipart(x)}$, $w_{5,6}^x,w_{6,1}^x\in K_6^{\ipart(x)}$;
\item for each clause $C$ of $\Phi'$ and $1 \leq \icyc,\icop \leq 3$ we define $\ivar = \ivar(C) + \icop - 1$ and we assign the vertex $s_{\icyc,\icop}^C$ to the cluster
   $K_{2\icyc+2\ivar - 2 - \phi'(\opvar(C,\ivar))}^{\ipart(\opvar(C,\ivar))}$.
\end{itemize}
Note that in this way $s_{\icyc,\icop}^C$ belongs to the same cluster as its neighbor $w_{2\icyc+2\ivar - 3,2\icyc+2\ivar - 2}^{\opvar(C,\ivar)}$.
See Figure \ref{fig:example-assignment} for an illustration.

\begin{figure}[htbp]
\centering
\includegraphics[scale=1]{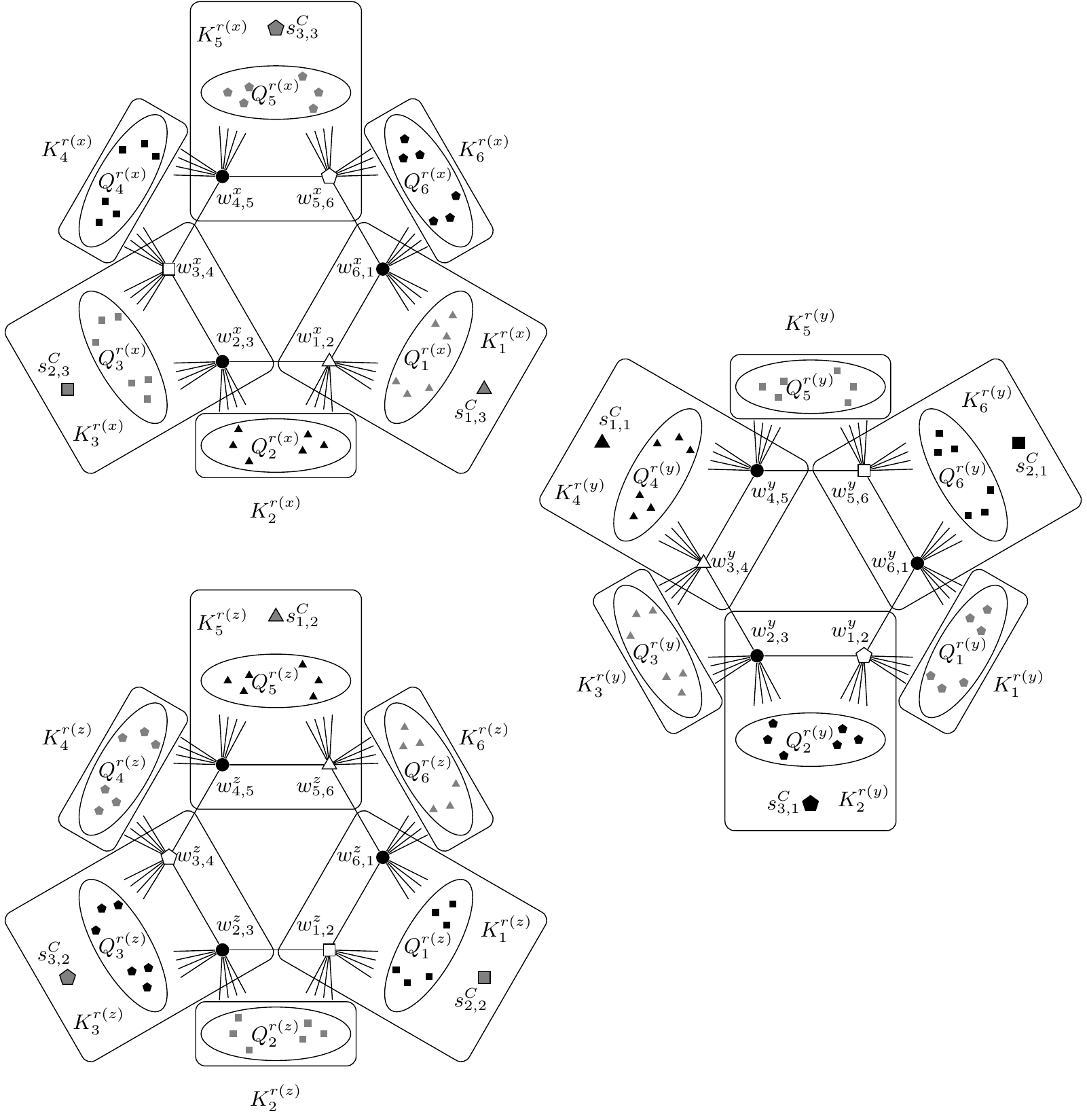}
\caption{Parts of clusters for variables $x$, $y$ and $z$ with $\phi'(x) = 1$, $\phi'(y) = 0$, $\phi'(z) = 1$,
  and a clause $C=(\neg x \vee \neg y \vee z)$ with $\opvar(C,1) = x$, $\opvar(C,2) = y$, $\opvar(C,3) = z$ and $\ivar(C) = 2$
    (note that both $y$ and $z$ satisfy $C$ in the assignment $\phi'$, but $y$ was chosen as a representative).}
\label{fig:example-assignment}
\end{figure}

Let us now compute $\ham(G,G_0)$.
We do not need to add nor delete any edges in $G[\setQ]$.
We note that each vertex $v \in \setX \cup \setS$ is assigned to a cluster with one clique $Q_\isix^\ipart$
it is adjacent to. Indeed, this is only non-trivial for vertices $s_{\icyc,1}^C$ for clauses $C$ and $1 \leq \icyc \leq 3$.
Note that this vertex belongs to the same cluster as the vertex $w_{2\icyc + 2\ivar(C)- 2 - \phi'(\opvar(C,\ivar(C)))}^{\opvar(C,\ivar(C))}$,
and, since the $\ivar(C)$-th literal of $C$ satisfies $C$ in the assignment $\phi'$, $s_{\icyc,1}^C$ is adjacent to all vertices
of the clique $Q_{2\icyc+2\ivar(C) - 2 - \phi'(\opvar(C,\ivar(C)))}^{\ipart(\opvar(C,\ivar(C)))}$.

Therefore we need to cut $\kQXS = (6n'+36m')L$
edges in $E(\setQ,\setX \cup \setS)$:
$L$ edges adjacent to each vertex $w_{\isix,\isix+1}^x$, 
$2L$ edges adjacent to each vertex $s_{\icyc,1}^C$, and $5L$ edges adjacent to
each vertex $s_{\icyc,2}^C$ and $s_{\icyc,3}^C$. We do not add any new edges between
$\setQ$ and $\setX \cup \setS$.

To count the number of editions in $G[\setX \cup \setS]$, let us first verify that
the clusters $K_\isix^\ipart$ are of equal sizes.
Fix cluster $K_\isix^\ipart$, $1 \leq \isix \leq 6$, $1 \leq \ipart \leq p$.
$K_\isix^\ipart$ contains two vertices $w_{\isix-1,\isix}^x$
and $w_{\isix,\isix+1}^x$ for each variable $x$ with $\phi'(x) + \isix$ being even.
Since $\phi'$ evaluates the same number
of variables in $\vars^\ipart$ to true as to false, we infer that each cluster $K_\isix^\ipart$
contains exactly $n'/p$ vertices from $\setX$, corresponding to $n'/(2p) = |\vars^\ipart|/2$
variables
of $\vars^\ipart$.

For $1 \leq \isix \leq 6$, let $\vars^\ipart_\isix = \phi^{-1}(0) \cap \vars^\ipart$ if $\isix$ is even
and $\phi^{-1}(1) \cap \vars^\ipart$ if $\isix$ is odd. That is, $x \in \vars^\ipart_\isix$ if and only if
$w_{\isix-1,\isix}^x,w_{\isix,\isix+1}^x \in K_\isix^\ipart$.
By the properties of $\Phi'$, for each $x \in \vars^\ipart_\isix$ the variable $x$ appears in three clauses positively
and in three clauses negatively; in particular, it satisfies exactly three clauses in
the assignment $\phi'$. We claim that $K_\isix^\ipart \cap \setS$ consists of $3|\vars^\ipart_\isix| = \frac{3}{2}|\vars^\ipart|$
vertices, that is, for each variable $x \in \vars^\ipart_\isix$, for each clause $C$ (out of three) that $x$ satisfies in the assignment
$\phi'$, $K_\isix^\ipart$ contains exactly one (out of nine) vertex $s_{\icyc,\icop}^C$, and no more vertices of $\setS$.

In one direction, take a variable $x \in \vars^\ipart_\isix$ and a clause $C$ that is satisfied by $x$ in the assignment $\phi'$.
Let $\isix' = 2\lceil \isix/2 \rceil$, so that $w_{\isix'-1,\isix'}^x \in \{w_{\isix-1,\isix}^x,w_{\isix,\isix+1}^x\}$ is the
vertex with first subscript odd and the second even. Take $\ivar$ such that $x = \opvar(C,\ivar)$ and $\icyc = \isix'/2 - \ivar + 1$.
Then $\isix' = 2\icyc + 2\ivar - 2$, and the three vertices $s_{\icyc,\icop}^C$ for $1\leq \icop\leq 3$ are adjacent to $w_{\isix'-1,\isix'}^x$.
Now let $\icop = \ivar - \ivar(C) + 1$; then $s_{\icyc,\icop}^C$ is assigned to the same cluster as $w_{\isix'-1,\isix'}^x$
since $\ivar = \ivar(C) + \icop - 1$. Since $x \in \vars^\ipart_\isix$, then $s_{\icyc,\icop}^C\in K_\isix^\ipart$.

In the other direction, let $s_{\icyc,\icop}^C \in K_\isix^\ipart$ for some clause $C$ and $1 \leq \icyc,\icop \leq 3$.
Recall that $s_{\icyc,\icop}^C$ belongs to the same cluster as one of its three neighbors in $\setX$.
Therefore there exists $w_{\isix'-1,\isix'}^x$ adjacent to $s_{\icyc,\icop}^C$ that belongs to $K_\isix^\ipart$; note that $\isix'$ is even. Moreover, as $s_{\icyc,\icop}^C$ and $w_{\isix'-1,\isix'}^x$ are assigned to the same cluster, we infer that $x$ satisfies $C$.
As $w_{\isix'-1,\isix'}^x \in K_\isix^\ipart$, then $x \in \vars^\ipart_\isix$. Let $\ivar$ be such that $x = \opvar(C,\ivar)$.
As $s_{\icyc,\icop}^Cw_{\isix'-1,\isix'}^x \in E(G)$, we have $\isix'/2 = \icyc + \ivar - 1$, that is, $\icyc = \isix'/2 - \ivar + 1$.
Recall that the neighbors of $s_{\icyc,\icop}^C$ from $\setX$ have pairwise different subscripts; that is, $s_{\icyc,\icop}^C$
is adjacent to $w_{\isix'-1,\isix'}^x$, $w_{\isix'+1,\isix'+2}^{\opvar(C,\ivar+1)}$ and $w_{\isix'+3,\isix'+4}^{\opvar(C,\ivar+2)}$.
Therefore the cliques that are adjacent to $w_{\isix'+1,\isix'+2}^{\opvar(C,\ivar+1)}$ and $w_{\isix'+3,\isix'+4}^{\opvar(C,\ivar+2)}$
are different from $Q_\isix^\ipart$, and these vertices do not belong to $K_\isix^\ipart$. We infer that if $s_{\icyc,\icop}^C \in K_\isix^\ipart$,
that is, $s_{\icyc,\icop}^C$ belongs to the same cluster as $w_{\isix'-1,\isix'}^x$, then $\ivar = \ivar(C) + \icop - 1$; equivalently,
$\icop = \ivar - \ivar(C) + 1$. Hence, $s_{\icyc,\icop}^C\in K_\isix^\ipart$ only if $C$ is satisfied by a variable in $\vars^\ipart_\isix$ and, providing this, for at most one choice of the indices $1 \leq \icyc,\icop \leq 3$. 
This concludes the proof of the claim.

We now count the number of editions in $G[\setX \cup \setS]$ as sketched
in the construction section.
The subgraph $G[\setX \cup \setS]$ contains $6n' + 27m'$ edges:
one $6$-cycle for each variable and three edges incident to each of the nine vertices
$s_{\icyc,\icop}^C$ for each clause $C$.
Each cluster $K_\isix^\ipart$ contains $n'/p$ vertices from
$\setX$ and $\frac{3m'}{2p}$ vertices from $\setS$.
If we deleted all edges in $G[\setX \cup \setS]$ and then added all the missing edges
in the clusters, we would make $\kcut + \kXSXS$ editions, due to the clusters being equal-sized. However, in this manner
we sometimes delete an edge and then introduce it again; thus, for each edge
of $G[\setX \cup \setS]$ that is contained in one cluster $K_\isix^\ipart$,
we should subtract $2$ in this counting scheme.

For each variable $x$, exactly three edges of the form $w_{\isix-1,\isix}^xw_{\isix,\isix+1}^x$
are contained in one cluster; this gives a total of $\kXX = 3n'$ saved edges.
For each clause $C$ each vertex $s_{\icyc,\icop}^C$ is assigned to a cluster with one of the vertices
$w_{2\icyc+2\ivar-3,2\icyc+2\ivar-2}^{\opvar(C,\ivar)}$, $1 \leq \ivar \leq 3$,
thus exactly one of the edges incident to $s_{\icyc,\icop}^C$ is contained
in one cluster. This sums up to $\kXS=9m'$ saved edges,
and we infer that the $6p$-cluster graph $G_0$ can be obtained from $G$ by exactly
$k' = \kQQ + \kQXS + \kcut + \kXSXS - 2\kXX - 2\kXS$ editions.
\end{proof}

\subsection{Soundness}\label{app:multi-2}

We need the following simple bound on the number of edges of a cluster graph.
\begin{lemma}\label{lem:even}
Let $a,b$ be positive integers and $H$ be a cluster graph with $ab$ vertices and at most $a$ clusters. Then $|E(H)| \geq a\binom{b}{2}$ and equality holds if and only if $H$ is an $a$-cluster graph and each cluster of $H$ has size exactly $b$.
\end{lemma}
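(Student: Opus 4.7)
The plan is to argue by convexity of the function $t \mapsto \binom{t}{2}$ over the cluster sizes. Suppose $H$ has connected components (clusters) of sizes $c_1, c_2, \ldots, c_{a'}$ with $a' \leq a$ and $\sum_{i=1}^{a'} c_i = ab$. Then $|E(H)| = \sum_{i=1}^{a'} \binom{c_i}{2}$. To put all size sequences on equal footing, I would extend the sequence by $a - a'$ additional zeros, obtaining nonnegative integers $c_1, \ldots, c_a$ summing to $ab$, without changing the value of $\sum \binom{c_i}{2}$. The task reduces to showing that, over nonnegative integer sequences of length $a$ summing to $ab$, the sum $\sum \binom{c_i}{2}$ is minimized uniquely by the all-$b$ sequence.

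For this I would use a standard smoothing (exchange) argument. If some $c_i < c_j - 1$, then replacing the pair $(c_i, c_j)$ by $(c_i + 1, c_j - 1)$ changes the objective by
\[
\binom{c_i+1}{2} + \binom{c_j-1}{2} - \binom{c_i}{2} - \binom{c_j}{2} = c_i - c_j + 1 \leq -1,
\]
which is strictly negative. Hence any sequence in which two entries differ by at least $2$ is not optimal. Iterating, every optimal sequence has all entries within $1$ of each other; combined with the sum constraint $\sum c_i = ab$ this forces every $c_i$ to equal $b$ exactly, giving the lower bound $|E(H)| \geq a\binom{b}{2}$.

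For the equality case, the strict negativity above shows the minimum is attained only by the sequence $c_1 = \cdots = c_a = b$. In particular, no entry is $0$, so the padding with zeros was vacuous, meaning $a' = a$, i.e., $H$ is an $a$-cluster graph, and every cluster has size exactly $b$. Conversely, if these two conditions hold, then $|E(H)| = a\binom{b}{2}$ directly. No step here should present any substantive obstacle; the only minor care needed is to handle the possibility $a' < a$, which is dealt with cleanly by the zero-padding trick.
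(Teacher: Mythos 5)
Your proof is correct and follows essentially the same route as the paper: both use an exchange/smoothing argument showing that moving a vertex from a larger cluster to a smaller (or empty, i.e.\ new) one strictly decreases the edge count, which yields both the bound and the uniqueness of the equality case. Your zero-padding formalization of the ``fewer than $a$ clusters'' case is a clean way to phrase what the paper handles implicitly.
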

\begin{proof}
It suffices to note that if not all clusters of $H$ are of size $b$, there is one of size at least $b+1$ and one of size at most $b-1$ or the number of clusters is less than $a$;
then, moving a vertex from the largest cluster of $H$ to a new or the smallest cluster strictly decreases the number of edges of $H$.
\end{proof}

We are now ready to show how to translate a $p'$-cluster graph $G_0$ with $p' \leq 6p$, $\ham(G_0,G) \leq k'$ into a satisfying assignment of the input formula $\Phi$.
\begin{lemma}
If there exists a $p'$-cluster graph $G_0$ with $V(G) = V(G_0)$, $p' \leq 6p$, $\ham(G,G_0) \leq k'$,
   then the formula $\Phi$ is satisfiable.
\end{lemma}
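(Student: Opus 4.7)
The plan is to establish the matching lower bound $\ham(G, G_0) \geq k'$, tight only when $G_0$ has a structure that directly encodes a satisfying assignment $\phi'$ of $\Phi'$; Lemma~\ref{lem:preparation} then transfers satisfiability to $\Phi$.

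First I would apply Lemma~\ref{lem:twins} to each clique $Q_\isix^\ipart$ (whose $L$ vertices form a twin class in $G$) and assume without loss of generality that every $Q_\isix^\ipart$ is contained in a single cluster of $G_0$. Let $c_K$ denote the number of cliques inside cluster $K$, $n_K = |K \cap (\setX \cup \setS)|$, and for $v \in \setX \cup \setS$ let $\alpha_v$ be the number of cliques adjacent to $v$ and $d_v$ the number of those contained in $v$'s own cluster. Then $\ham(G, G_0)$ decomposes as the sum of three contributions: \emph{inside} $\setQ$, equal to $L^2 \sum_K \binom{c_K}{2}$, which is purely the cost of merging cliques; \emph{between} $\setQ$ and $\setX \cup \setS$, equal to $\sum_v L(\alpha_v + c_{K_v} - 2 d_v)$; and \emph{inside} $\setX \cup \setS$, equal to $|E^*| + \sum_K \binom{n_K}{2} - 2 |E_{\mathrm{in}}|$, where $E^* = E(G) \cap \binom{\setX \cup \setS}{2}$ has $6n' + 27m'$ elements and $|E_{\mathrm{in}}|$ counts the $E^*$-edges whose both endpoints lie in the same cluster.

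The next step is to lower-bound each contribution. The merging term is nonnegative, vanishing iff no cluster contains two cliques. A direct calculation shows that the cross-term is at least $\kQXS$, with equality iff $c_{K_v} = d_v = 1$ for every $v$ (i.e., every $v$ shares a cluster with exactly one adjacent clique). By Lemma~\ref{lem:even}, $\sum_K \binom{n_K}{2} \geq \kXSXS$, tight iff $p' = 6p$ and all $n_K$ equal $(2n' + 3m')/(2p)$. The remaining ingredient is the bound $|E_{\mathrm{in}}| \leq 3n' + 9m'$, which follows from two local observations. For each variable $x$, two consecutive cycle vertices $w_{\isix-1,\isix}^x$ and $w_{\isix,\isix+1}^x$ share a cluster only if both chose their common adjacent clique $Q_\isix^{\ipart(x)}$, so the $6$-cycle contributes at most three in-edges, with equality iff all six $w$-vertices pick cliques of a single parity (which defines $\phi'(x)$); and for each clause $C$, each $s_{\icyc, \icop}^C$ has three $\setX$-neighbors whose pairs of adjacent cliques are pairwise disjoint (since the subscript pairs $\{2\icyc + 2\ivar - 3, 2\icyc + 2\ivar - 2\}$ are disjoint for $\ivar = 1, 2, 3$), so $s_{\icyc,\icop}^C$ contributes at most one in-edge. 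Summing yields $\ham(G, G_0) \geq k'$, and tightness forces all four equality conditions: $p' = 6p$ with one clique per cluster, every $v$ aligned with an adjacent clique, every $6$-cycle parity-consistent (defining $\phi'$), and every $s_{\icyc, 1}^C$ sharing a cluster with one of its $w$-neighbors. Since $s_{\icyc, 1}^C$ is connected to exactly one clique $Q_{2\icyc + 2\ivar - 2 - \lsgn(C,\ivar)}^{\ipart(\opvar(C,\ivar))}$ whose parity reflects the sign of the $\ivar$-th literal, such an alignment means the corresponding literal of $C$ is satisfied by $\phi'$; hence $\phi'$ satisfies every clause of $\Phi'$.

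The hard part will be making the four equality conditions truly simultaneous in the presence of deviations, since in principle one could trade a violation of one condition for savings in another. In particular, merging two cliques in a common cluster inflates $L^2 \sum_K \binom{c_K}{2}$, but might seem to allow a larger $|E_{\mathrm{in}}|$ by threading longer runs of a $6$-cycle through a single cluster. The choice $L = 1000\bigl(1 + n'/(p\varepsilon)\bigr)$ resolves this: any two cliques $Q, Q'$ have $|N(Q) \cap N(Q')| = \cO(n'/p) \ll L$, so a single pair-merge costs $L^2$ yet can save only $\cO(L \cdot n'/p) = o(L^2)$ in the cross-term and only $\cO(n'/p) = o(L)$ in $|E_{\mathrm{in}}|$; an analogous estimate rules out ``satellite'' clusters that contain no clique at all. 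Careful bookkeeping that distributes this slack across the three cost components will then upgrade each of the four equality conditions into a rigid structural requirement, completing the argument.
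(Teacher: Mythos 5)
Your proposal follows essentially the same route as the paper's proof: the twin reduction for the cliques, the same three-way decomposition of $\ham(G,G_0)$ mirroring the summands $\kQQ,\kQXS,\kXSXS,\kcut,\kXX,\kXS$, Lemma~\ref{lem:even} for the balanced-cluster bound, and the observation that a merge costs $L^2$ while saving only $\cO(Ln'/p)$ across $E(\setQ,\setX\cup\setS)$ and $\cO(n'/p)$ inside $\setX\cup\setS$, so the size of $L$ forces rigidity. One caution: the inequalities you assert in your second paragraph (cross-term $\geq \kQXS$, $|E_{\mathrm{in}}|\leq 3n'+9m'$) are \emph{false} as stated — e.g.\ a vertex whose two attached cliques are merged into its own cluster contributes only $L(\alpha_v-2)$ — and the ``careful bookkeeping'' you defer to the end is exactly where the paper spends most of its effort: it introduces global counters $a$ (crowded cliques) and $\XSalone$ (alone vertices), bounds every deficit term linearly in these, and shows the total excess is at least $aL^2/2+\XSalone L$ minus $\cO((a\,n'/p+\XSalone)L)$, hence nonnegative and zero only when $a=\XSalone=0$. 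You have identified all the quantitative ingredients for that accounting, so the gap is one of execution rather than of ideas.
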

\begin{proof}
By Lemma \ref{lem:twins}, we may assume that each clique $Q_\isix^\ipart$ is contained in one cluster in $G_0$.
Let $F = E(G_0) \triangle E(G)$ be the editing set, $|F| \leq k'$.

Before we start, we present some intuition. The cluster graph $G_0$ may differ from the one constructed
in the completeness step in two significant ways, both leading to some savings in the edges of $G[\setX \cup \setS]$
that may not be included in $F$.
First, it may not be true that each cluster contains exactly one clique
$Q_\isix^\ipart$. However, since the number of cliques is at most $6p$, this may happen only if some clusters contain
more than one clique $Q_\isix^\ipart$, and we need to add $L^2$ edges to merge each pair of cliques that belong to the same cluster.
Second, a vertex $v \in \setX \cup \setS$ may not be contained in a cluster together with one of the cliques it is adjacent to.
However, as each such vertex needs to be separated from {\em{all}} its adjacent clusters (compared to {\em{all but one}} in the completeness step),
this costs us additional $L$ edges to remove.
The large constant in front of the definition of $L$ ensures us that in both these ways we pay more than we save on the edges of
$G[\setX \cup \setS]$.
We now proceed to the formal argumentation.

We define the following quantities.
\begin{align*}
\lQQ &= |F \cap (\setQ \times \setQ)|, & \lQXS &= |F \cap E_G(\setQ,\setX \cup \setS)|, \\
\lXSXS &= |E(G_0(\setX \cup \setS))|, \\
\lXX &= |E_G(\setX,\setX) \cap E_{G_0}(\setX,\setX)|& \lXS &= |E_G(\setX,\setS) \cap E_{G_0}(\setX,\setS)|. \\
 \end{align*}
Recall that $\kcut = |E(G(\setX \cup \setS))| = 6n' + 27m'$. Similarly as in the completeness proof, we have that
$$|F| \geq \lQQ + \lQXS + \lXSXS + \kcut - 2\lXX - 2\lXS.$$
Indeed, $\lQQ$ and $\lQXS$ count (possibly not all) edges of $F$ that are incident to the vertices of $\setQ$.
The edges of $F \cap ((\setX \cup \setS) \times (\setX \cup \setS))$ are counted in an indirect way: each edge
of $G[\setX \cup \setS]$ is deleted ($\kcut$) and each edge of $G_0[\setX \cup \setS]$ is added ($\lXSXS$).
Then, the edges that are counted twice in this manner are subtracted ($\lXX$ and $\lXS$).

We say that a cluster is {\em{crowded}} if it contains at least two cliques $Q_\isix^\ipart$ and
{\em{proper}} if it contains exactly one clique $Q_\isix^\ipart$.
A clique $Q_\isix^\ipart$ that is contained in a crowded (proper) cluster
is called a {\em{crowded}} ({\em{proper}}) clique.

Let $a$ be the number of crowded cliques. Note that
$$\lQQ-\kQQ = |F \cap (\setQ \times \setQ)|-0 \geq aL^2/2,$$
as each vertex in a crowded clique needs to be connected to at least one other
crowded clique.

We say that a vertex $v \in \setX \cup \setS$ is {\em{attached}} to a clique $Q_\isix^\ipart$,
   if it is adjacent to all vertices of the clique in $G$. Moreover, we say that a vertex $v \in \setX \cup \setS$ is {\em{alone}} if it is contained in a cluster
in $G_0$ that does not contain any clique $v$ is attached to. Let $\XSalone$ be the number of alone vertices.


Let us now count the number of vertices a fixed clique $Q_\isix^\ipart$ is attached to.
Recall that $|\vars^\ipart| = n'/p$. For each variable $x \in \vars^\ipart$ the clique $Q_\isix^\ipart$ is
attached to two vertices $w_{\isix-1,\isix}^x$ and $w_{\isix,\isix+1}^x$.
Moreover, each variable $x \in \vars^\ipart$ appears in exactly six clauses: thrice positively and thrice negatively.
For each such clause $C$, $Q_\isix^\ipart$ is attached to the vertex $s_{\icyc,2}^C$ for exactly one choice of the value
$1 \leq \icyc \leq 3$ and to the vertex $s_{\icyc,3}^C$ for exactly one choice of the value $1 \leq \icyc \leq 3$.
Moreover, if $x$ appears in $C$ positively and $\isix$ is odd, or if $x$ appears in $C$ negatively and $\isix$ is even,
then $Q_\isix^\ipart$ is attached to the vertex $s_{\icyc,1}^C$ for exactly one choice of the value $1 \leq \icyc \leq 3$.
We infer that the clique $Q_\isix^\ipart$ is attached to exactly fifteen vertices from $\setS$ for each variable $x \in \vars^\ipart$.
Therefore, there are exactly $17|\vars^\ipart| = 17n'/p$ vertices of $\setX \cup \setS$ attached to $Q_\isix^\ipart$: $2n'/p$ from $\setX$ and $15n'/p$ from $\setS$.

Take an arbitrary vertex $v \in \setX \cup \setS$ and assume that $v$ is attached to $b_v$ cliques,
and $a_v$ out of them are crowded. As $F$ needs to contain all edges of $G$ that connect $v$ with cliques
that belong to a different cluster than $v$, we infer that $|F \cap E_G(\{v\},\setQ)| \geq (b_v - \max(1,a_v))L$.
Moreover, if $v$ is alone, $|F \cap E_G(\{v\},\setQ)| \geq b_vL\geq 1\cdot L + (b_v - \max(1,a_v))L$. Hence 
\begin{eqnarray*}
\lQXS= |F \cap E_G(\setQ,\setX \cup \setS)| & \geq & \XSalone L + \sum_{v \in \setX \cup \setS} (b_v - \max(1,a_v))L \\
& \geq & \XSalone L + \sum_{v \in \setX \cup \setS} (b_v - 1)L - \sum_{v \in \setX \cup \setS} a_vL.
\end{eqnarray*}
Recall that $\sum_{v \in \setX \cup \setS} (b_v-1)L = \kQXS$. Therefore, using the fact that each clique
is attached to exactly $17n'/p$ vertices of $\setX \cup \setS$, we obtain that
$$\lQXS-\kQXS = |F \cap E_G(\setQ,\setX \cup \setS)| - \kQXS \geq \XSalone L - \sum_{v \in \setX \cup \setS} a_vL \geq \XSalone L - 17aLn'/p.$$

In $G_0$, the vertices of $\setX \cup \setS$ are split between $p' \leq 6p$ clusters and there are $6n' + 9m'$ of them.
By Lemma \ref{lem:even}, the minimum number of edges of $G_0[\setX \cup \setS]$ is attained when all clusters are of equal size
and the number of clusters is maximum possible. We infer that $\lXSXS \geq \kXSXS$.

We are left with $\lXX$ and $\lXS$.
Recall that $\kXX$ counts three edges out of each $6$-cycle constructed per variable of $\Phi'$, $|\kXX| = 3n'$,
whereas $\kXS$ counts one edge per each vertex $s_{\icyc,\icop}^C \in \setS$, $\kXS = 9m' = |\setS|$.

Consider a crowded cluster $K$ with $c>1$ crowded cliques. We say that $K$ {\em{interferes}} with a vertex $v \in \setX \cup \setS$
if $v$ is attached to a clique in $K$. As each clique is attached to exactly $17n'/p$ vertices of $\setX \cup \setS$, $2n'/p$ belonging to $\setX$ and $15n'/p$ to $\setS$, in total at most $2an'/p$ vertices of $\setX$ interfere with a crowded cluster and at most $15an'/p$ vertices of $\setS$.

Fix a variable $x \in \vars(\Phi')$.
If none of the vertices $w_{\isix,\isix+1}^x \in \setX$ interferes with any crowded cluster $K$, then
all the cliques $Q_{\isix'}^{\ipart(x)}$, $1 \leq \isix' \leq 6$, are proper cliques, each contained in a different cluster in $G_0$.
Moreover, if additionally no vertex $w_{\isix,\isix+1}^x$, $1 \leq \isix \leq 6$, is alone,
then in the $6$-cycle constructed for the variable $x$ at most three edges are not in $F$.
On the other hand, if some of the vertices $w_{\isix,\isix+1}^x \in \setX$ interfere with a crowded cluster $K$, or at least one of them is alone,
it may happen that all six edges of this $6$-cycle are contained in one cluster of $G_0$.
The total number of $6$-cycles that contain either alone vertices or vertices interfering with crowded clusters is bounded by $\XSalone + an'/p$, as every clique is attached to exactly $n'/p$ $6$-cycles. In $\kXX$ we counted three edges per a $6$-cycle, while in $\lXX$ we counted at most three edges per every $6$-cycles except $6$-cycles that either contain alone vertices or vertices attached to crowded cliques, for which we counted at most six edges. Hence, we infer that
$$\lXX - \kXX \leq 3(\XSalone + an'/p).$$

We claim that if a vertex $s_{\icyc,\icop}^C \in \setS$ (i) is not alone, and (ii) is not attached to a crowded clique, and (iii) is not adjacent to any alone vertex in $\setX$, then at most one edge from $E(\{s_{\icyc,\icop}^C\},\setX)$ may not be in $F$. Recall that $s_{\icyc,\icop}^C$ has exactly three neighbors in $\setX$, each of them attached to exactly two cliques and all these six cliques are pairwise distinct; moreover, $s_{\icyc,\icop}^C$ is attached only to these six cliques, if $\icyc=2,3$, or only to three out of these six, if $\icyc=1$. Observe that (i) and (ii) imply that $s_{\icyc,\icop}^C$ is in the same cluster as exactly one of the six cliques attached to his neighbors in $\setX$, so if it was in the same cluster as two of his neighbors in $\setX$, then at least one of them would be alone, contradicting (iii).
However, if at least one of (i), (ii) or (iii) is not satisfied, then all three edges incident to $s_{\icyc,\icop}^S$ may be contained in one cluster. As each vertex in $\setX$ is adjacent to at most $18$ vertices in $\setS$ (at most $3$ per every clause in which the variable is present), there are at most $18\XSalone$ vertices $s_{\icyc,\icop}^C$ that are alone
or adjacent to an alone vertex in $\setX$. Note also that the number of vertices of $\setS$ interfering with crowded clusters is bounded by $15an'/p$, as each of $a$ crowded cliques has exactly $15n'/p$ vertices of $\setS$ attached. Thus, we are able to bound the number of vertices of $\setS$ for which (i), (ii) or (iii) does not hold. As in $\kXS$ we counted one edge per every vertex of $\setS$, while in $\lXS$ we counted at most one edge per every vertex of $\setS$ except vertices not satisfying (i), (ii), or (iii), for which we counted at most three edges, we infer that
$$\lXS - \kXS \leq 2 (18\XSalone + 15an'/p).$$

Summing up all the bounds:
\begin{align*}
|F| - k' & \geq (\lQQ-\kQQ) + (\lQXS-\kQXS) + (\lXSXS-\kXSXS) \\
    &\qquad \qquad -  2(\lXX-\kXX) - 2(\lXS-\kXS) \\
    &\geq aL^2/2 + \XSalone L - 17aLn'/p + 0 - 6(\XSalone + an'/p) - 4(15\XSalone + 18an'/p) \\
    &\geq a+\XSalone \geq 0\\
\end{align*}
The second to last inequality follows from the choice of the value of $L$, $L = 1000\cdot\left(1+ \frac{n'}{p\varepsilon}\right)$; note that in particular $L \geq 1000$.

We infer that $a = 0$, that is, each clique $Q_\isix^\ipart$ is contained in a different cluster of $G_0$, and each cluster of $G_0$ contains exactly one such clique.
Moreover, $\XSalone = 0$, that is, each vertex $v \in \setX \cup \setS$ is contained in a cluster with at least one clique $v$ is attached to; as
all cliques are proper, $v$ is contained in a cluster with exactly one clique $v$ is attached to and $\lQXS = \kQXS$.

Recall that $|F \cap ((\setX \cup \setS) \times (\setX \cup \setS))| = \lXSXS + \kcut - 2\lXX - 2\lXS$. As each clique is now proper and no vertex is alone,
for each variable $x$ at most three edges out of the $6$-cycle $w_{\isix,\isix+1}^x$, $1 \leq \isix \leq 6$, are not in $F$, that is, $\lXX \leq \kXX$.
Moreover, for each vertex $s_{\icyc,\icop}^C \in \setS$,
the three neighbors of $s_{\icyc,\icop}^C$ are contained in different clusters and at most one edge incident to $s_{\icyc,\icop}^C$ is not in $F$, that is, $\lXS \leq \kXS$.
As $|F| \leq k'$, these inequalities are tight: exactly three edges out of each $6$-cycle are not in $F$, and
exactly one edge adjacent to a vertex in $\setS$ is not in $F$.

Consider an assignment $\phi'$ of $\vars(\Phi')$ that assigns $\phi'(x) = 1$ if the vertices
$w_{\isix,\isix+1}^x$, $1 \leq \isix \leq 6$ are contained in clusters with cliques
$Q_1^{\ipart(x)}$, $Q_3^{\ipart(x)}$, and $Q_5^{\ipart(x)}$
(i.e., the edges $w_{6,1}^xw_{1,2}^x$, $w_{2,3}^xw_{3,4}^x$ and $w_{4,5}^xw_{5,6}^x$ are not in $F$),
  and $\phi'(x) = 0$ otherwise (i.e., if the vertices $w_{\isix,\isix+1}^x$, $1 \leq \isix \leq 6$ are contained in clusters with cliques
$Q_2^{\ipart(x)}$, $Q_4^{\ipart(x)}$ and $Q_6^{\ipart(x)}$) --- a direct check shows that these are the only ways to save $3$ edges inside a $6$-cycle.
We claim that $\phi'$ satisfies $\Phi'$. Consider a clause $C$. The vertex $s_{1,1}^C$ is contained in a cluster with one of the three cliques it is attached to (as $\XSalone = 0$),
   say $Q_{\isix'}^\ipart$,
and with one of the three vertices of $\setX$ it is adjacent to, say $w_{\isix,\isix+1}^x$. Therefore $\ipart(x) = \ipart$,
    $w_{\isix,\isix+1}^x$ is contained in the same cluster as $Q_{\isix'}^\ipart$, and $\phi'(x)$ satisfies the clause $C$.
\end{proof}

\section{General clustering under ETH: proof of Theorem \ref{thm:eth}}\label{app:eth}

In this section we prove Theorem~\ref{thm:eth}, namely that the \clustering{} problem without
restriction on the number of clusters in the output
does not admit a $2^{o(k)} n^{\cO(1)}$ algorithm unless the Exponential Time Hypothesis fails.
\medskip
 
The following lemma provides a linear reduction from the problem of verifying satisfiability
of $3$-CNF formulas.

\begin{lemma}\label{lem:eth1}
There exists a polynomial-time algorithm that, given a $3$-CNF formula $\Phi$ with
$n$ variables and $m$ clauses, constructs a \clustering{} instance $(G,k)$
such that (i) $\Phi$ is satisfiable if and only if $(G,k)$ is a YES-instance,
and (ii) $|V(G)|+|E(G)|+k = \cO(n+m)$.
\end{lemma}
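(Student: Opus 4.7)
The plan is to reuse the construction from the proof of Theorem \ref{thm:multivariate-reduction}, but to replace the cliques of size $L = \Theta(\sqrt{k/p})$ with cliques of a fixed constant size, say $L_0$. Without the upper bound on the number of clusters that is imposed in \pclustering{}, there is no need to coordinate different variable gadgets by merging or splitting cliques, so each variable's gadget can be entirely independent of the others; this is what makes the reduction linear in $n+m$.

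First, I would preprocess $\Phi$ by standard manipulations analogous to Lemma \ref{lem:preparation}: remove tautological or duplicated literals, resolve or pad short clauses, and add auxiliary clauses so that in the resulting $3$-CNF formula $\Phi'$ each variable occurs exactly three times positively and exactly three times negatively and each clause has three literals on pairwise distinct variables. Unlike in the multivariate reduction, no partition of the variables into parts and no balancing of true/false evaluations is required, so the preprocessing is strictly simpler. These steps increase the formula size by only a constant factor, so $\Phi'$ has $n' = \cO(n+m)$ variables and $m' = \cO(n+m)$ clauses.

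Next, I would build $G$ exactly as in Section \ref{app:multi-construction}, with $L = L_0$ a sufficiently large constant: for every variable $x$, six cliques $Q_\isix^x$, $1 \leq \isix \leq 6$, of size $L_0$ together with a $6$-cycle $w_{\isix,\isix+1}^x$, where $w_{\isix,\isix+1}^x$ is adjacent to $Q_\isix^x \cup Q_{\isix+1}^x$; and for every clause $C$ the auxiliary vertices $s_{\icyc,\icop}^C$ attached to the appropriate cliques and cycle vertices encoding the literals of $C$, exactly as in Section \ref{app:multi-construction}. The budget $k$ is set to the cost of the canonical editing obtained from any fixed assignment, which decomposes into $\cO(L_0)$ contributions per variable and $\cO(L_0)$ per clause; hence $k = \cO(n'+m')$ and $|V(G)|+|E(G)|+k = \cO(n+m)$.

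The correctness argument is a scaled-down version of the completeness (Section \ref{app:multi-1}) and soundness (Section \ref{app:multi-2}) analyses: a satisfying assignment of $\Phi'$ is directly turned into a cluster graph achieving editing cost $k$, while Lemma \ref{lem:twins} guarantees that in any optimal editing each clique $Q_\isix^x$ stays intact, and choosing $L_0$ larger than the maximum ``clique-degree'' of any $w$- or $s$-vertex forces every such vertex into the cluster of exactly one of its attached cliques and prevents any two cliques from being merged. The main obstacle is precisely tuning this constant $L_0$ and the exact connections inside the clause gadget so that the savings available to a would-be cheating solution---merging two cliques, making a cycle edge internal to a cluster without a consistent choice, or leaving a clause vertex detached from all attached cliques---are strictly outweighed by the extra cuts they impose, mirroring the telescoping inequality at the end of Section \ref{app:multi-2}; the absence of a cluster-count constraint makes this accounting considerably less delicate than in the multivariate reduction.
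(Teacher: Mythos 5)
Your plan is a genuinely different route from the paper's. The paper does not recycle the multivariate gadget at all: for each variable $x$ it builds a cycle $A_x$ of length $4s_x$ (where $s_x$ is the number of occurrences of $x$), for each clause a six-vertex gadget $\{p_x,p_y,p_z,q_x,q_y,q_z\}$ attached to the cycles by two edges per literal, and sets $k=14m$; soundness is a purely local argument (a convexity bound $|F_x|\ge 2s_x$ on each cycle, tight only when $F_x$ is every second edge, plus a case analysis showing each clause gadget costs at least $8$, with equality only in configurations that read off a true literal). Your idea of shrinking the cliques of the multivariate construction to a constant size $L_0$ and instantiating the six cliques per variable rather than per part is plausible in outline, and the completeness direction and the size bounds do go through for constant $L_0$.

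There is, however, a concrete gap in the soundness part, and it is not where you locate it. You describe the remaining work as tuning $L_0$ so that the final chain of inequalities of Section \ref{app:multi-2} telescopes as before. But that chain has exactly one step that uses the hypothesis $p'\le 6p$, namely $\lXSXS\ge\kXSXS$, which is deduced from Lemma \ref{lem:even} by distributing the $6n'+9m'$ vertices of $\setX\cup\setS$ among at most $6p$ clusters. In \clustering{} there is no bound on the number of clusters, so this step fails outright: a solution may keep the vertices of $\setX\cup\setS$ in many small clusters so that $\lXSXS$ is far below $\kXSXS$, making the term $\lXSXS-\kXSXS$ negative by $\Theta(n'+m')$, while the only penalty terms in your inequality ($aL^2/2$ for crowded cliques and $\XSalone L$ for alone vertices) can simultaneously be zero --- a vertex of $\setX\cup\setS$ can sit in a cluster with one of its attached cliques and with no other vertex of $\setX\cup\setS$, so it is neither alone nor crowded yet contributes nothing to $\lXSXS$. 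Whether such configurations actually undercut the budget depends on the sign, per cluster, of the quantity $\kXSXS-2\kXX-2\kXS$, a comparison your proposal never makes; no choice of $L_0$ affects it, since all three summands are independent of $L_0$. To close the gap you must replace the global convexity argument by a local accounting of the editions inside each variable and clause gadget, which is essentially what the paper's direct construction does with far less machinery. Relatedly, discarding the balancing property (e) of Lemma \ref{lem:preparation} is only legitimate once the Lemma \ref{lem:even} argument has been abandoned and replaced, since that property exists solely to make the intended clusters equal-sized; as stated, your proof drops the hypothesis but keeps the conclusion.
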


\begin{proof}
By standard arguments, we may assume that each clause of $\Phi$ consists of exactly three
literals with different variables and each variable appears at least twice:
at least once in a positive literal and at least once in a negative one.
Let $\textrm{Vars}(\Phi)$ denote the set of variables of $\Phi$.
For a variable $x$, let $s_x$ be the number of appearances of $x$ in the formula $\Phi$.
For a clause $C$ with variables $x$, $y$, and $z$, we denote by
$l_{x,C}$ the literal of $C$ that contains $x$ (i.e., $l_{x,C} = x$ or $l_{x,C} = \neg x$).

\medskip
\noindent
\textbf{Construction.} We construct a graph $G=(V,E)$ as follows.
First, for each variable $x$ we introduce a cycle $A_x$ of length $4s_x$.
For each clause $C$ where $x$ appears we assign four consecutive vertices
$a_{x,C}^j$, $1 \leq j \leq 4$ on the cycle $A_x$.
If the vertices assigned to a clause $C'$ follow the vertices assigned to a clause $C$
on the cycle $A_x$, we let $a_{x,C}^5=a_{x,C'}^1$.

Second, for each clause $C$ with variables $x$, $y$, and $z$ we introduce
a gadget with $6$ vertices $V_C = \{p_x, p_y, p_z, q_x, q_y, q_z\}$
with all inner edges except for $q_xq_y$, $q_yq_z$, and $q_zq_x$
(see Figure \ref{fig:eth1}).
If $l_{x,C} = x$ then we connect $q_x$ to the vertices $a_{x,C}^1$ and
$a_{x,C}^2$, and if $l_{x,C} = \neg x$, we connect $q_x$ to $a_{x,C}^2$
and $a_{x,C}^3$. We proceed analogously for variables $y$ and $z$ in the clause $C$.
We set $k = 8m + 2\sum_{x \in \textrm{Vars}(\Phi)} s_x = 14m$.
This finishes the construction of the \clustering{} instance $(G,k)$. Clearly
$|V(G)| + |E(G)| + k = O(n+m)$. We now prove that $(G,k)$ is a YES-instance
if and only if $\Phi$ is satisfiable.

\begin{figure}
\centering
\includegraphics{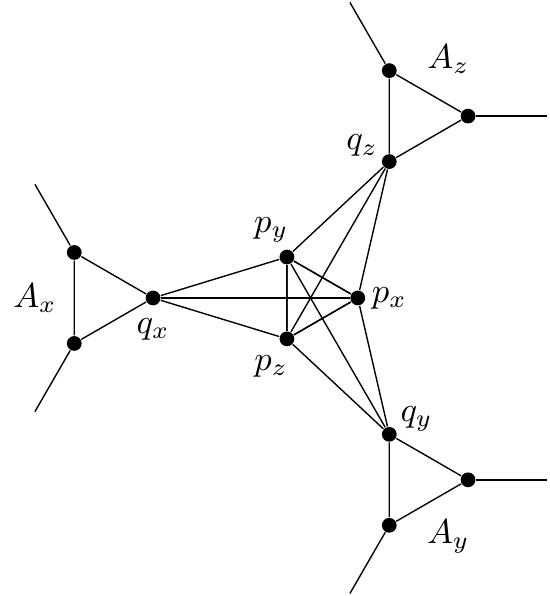}
\caption{The gadget for a clause $C$ with variables $x$, $y$ and $z$.}
  \label{fig:eth1}
  \end{figure}

\medskip
\noindent
\textbf{Completeness.} Assume that $\Phi$ is satisfiable, and let $\phi$
be a satisfying assignment for $\Phi$. We construct a set $F \subseteq V \times V$
as follows. First, for each variable $x$
we take into $F$ the edges $a_{x,C}^2a_{x,C}^3$, $a_{x,C}^4a_{x,C}^5$
for each clause $C$ if $\phi(x)$ is true and the edges $a_{x,C}^1a_{x,C}^2$, $a_{x,C}^3a_{x,C}^4$
for each clause $C$ otherwise. 
Second, let $C$ be a clause of $\Phi$ with variables $x$, $y$, and $z$
and, without loss of generality, assume that the literal $l_{x,C}$ satisfies $C$
in the assignment $\phi$.
For such a clause $C$ we add to $F$ eight elements:
the edges $q_xp_x$, $q_xp_y$, $q_xp_z$,
the four edges that connect $q_y$ and $q_z$ to the cycles $A_y$ and $A_z$,
and the non-edge $q_yq_z$.

Clearly $|F| = \sum_{x \in \textrm{Vars}(\Phi)} 2s_x + 8m = k$. We now verify that $G \triangle F$
is a cluster graph. For each cycle $A_x$, the removal of the edges in $F$ results
in splitting the cycle into $2s_x$ two-vertex clusters. For each clause $C$ with
variables $x$, $y$, $z$, satisfied by the literal $l_{x,C}$ in the
assignment $\phi$, the vertices $p_x$, $p_y$, $p_z$, $q_y$, and $q_z$ form a $5$-vertex
cluster. Moreover, since $l_{x,C}$ is true in $\phi$, the edge that connects
the two neighbors of $q_x$ on the cycle $A_x$ is not in $F$, thus $q_x$ and these
two neighbors form a three-vertex cluster.

\medskip
\noindent
\textbf{Soundness.} Let $F$ be a minimum size feasible solution to the \clustering{} instance
$(G,k)$. 
By Lemma \ref{lem:twins}, for each clause $C$ with variables $x$, $y$, and $z$,
the vertices $p_x$, $p_y$, and $p_z$ are contained in a single cluster in $G \triangle F$.
Denote the vertex set of this cluster by $Z_C$.
We choose $F$ (with minimum possible cardinality) such that the number of clusters
$Z_C$ that are contained in the vertex set $V_C$ is maximum possible.

Informally, we are going to show that the solution $F$ needs to look almost like
the one constructed in the proof of completeness. The crucial observation is
that if we want to create a six-vertex cluster $Z_C=V_C$
then we need to put nine (instead of eight) elements in $F$ that are incident to $V_C$.
Let us now proceed to the formal arguments.

Fix a variable $x$ and let $F_x = F \cap (V(A_x) \times V(A_x))$.
We claim that $|F_x| \geq 2s_x$ and, moreover, if $|F| = 2s_x$
then $F_x$ consists of every second edge of the cycle $A_x$.
Note that $A_x \triangle F_x$ is a cluster graph; assume that there are $\gamma$
clusters in $A_x \triangle F_x$ with sizes $\alpha_j$ for $1 \leq j \leq \gamma$.
If $\gamma = 1$ then, as $s_x \geq 2$,
\[
|F_x| = |\alpha_1| = \binom{4s_x}{2} - 4s_x = 8s_x^2 - 6s_x > 2s_x.
\]
Otherwise, in a cluster with $\alpha_j$ vertices we need to add at least
$\binom{\alpha_j}{2} - (\alpha_j - 1)$ edges and remove at least two
edges of $A_x$ leaving the cluster. Using~$\sum a_j=4s_x$, we infer that
\[
|F_x| \geq \gamma + \sum_{j=1}^\gamma \binom{\alpha_j}{2} - (\alpha_j - 1) = \frac{1}{2} \sum_{j=1}^\gamma \alpha_j^2 - 3\alpha_j + 4 = 2s_x + \frac{1}{2} \sum_{j=1}^\gamma (\alpha_j-2)^2.
\]
Thus, $|F_x| \geq 2s_x$ and $|F_x| = 2s_x$ only if for all $1 \leq j \leq \gamma$ we have
$\alpha_j=2$ and in each two-vertex cluster of $A_x \triangle F_x$, $F_x$ does not contain
the edge in this cluster and contains two edges of $A_x$ that leave this cluster. This
situation occurs only if $F_x$ consists of every second edge of the cycle $A_x$.

We now focus on a gadget for some clause $C$ with variables $x$, $y$, and $z$.
Let $F_C = F \cap (V_C\times (V_C \cup V(A_x) \cup V(A_y) \cup V(A_z)))$.
We claim that $|F_C| \geq 8$ and there are very limited ways in which we can obtain $|F_C| = 8$.

Recall that the vertices $p_x$, $p_y$, and $p_z$ are contained in a single cluster in $G \triangle F$
with vertex set $Z_C$.
We now distinguish subcases, depending on how many of the vertices
$q_x$, $q_y$, and $q_z$ are in $Z_C$.

If $q_x,q_y,q_z \notin Z_C$,
then $\{p_x,p_y,p_z\} \times \{q_x,q_y,q_x\} \subseteq F_C$ and $|F_C| \geq 9$.

If $q_x \in Z_C$, but $q_y,q_z \notin Z_C$,
then $\{p_x,p_y,p_z\} \times \{q_y,q_z\} \subseteq F_C$.
If there is a vertex $v \in Z_C \setminus V_C$,
then $F$ needs to contain three elements $vp_x$, $vp_y$, and $vp_z$.
In this case $F'$ constructed from $F$ by replacing all elements incident to $\{q_x,p_x,p_y,p_z\}$
with all eight edges of $G$ incident to this set is a feasible solution to $(G,k)$
of size smaller than $F$, a contradiction to the assumption of the minimality of $F$.
Thus, $Z_C = \{q_x,p_x,p_y,p_z\}$, and $F_C$ contains the eight edges of $G$ incident to 
$Z_C$.

If $q_x,q_y \in Z_C$ but $q_z \notin Z_C$, then $q_zp_x,q_zp_y,q_zp_z,q_xq_y \in F_C$.
If there is a vertex $v \in Z_C \setminus V_C$, then $F_C$ contains the three edges
$vp_x,vp_y,vp_z$ and at least one of the edges $vq_x$, $vq_y$.
In this case $F'$ constructed from $F$ by replacing
all elements incident to $\{p_x,p_y,p_z,q_x,q_y\}$ with all seven edges of $G$ incident
to this set and a non-edge $q_xq_y$ is a feasible solution to $(G,k)$ of size not greater than $F$,
with $Z_C \subseteq V_C$, a contradiction
to the choice of $F$. Thus $Z_C = \{p_x,p_y,p_z,q_x,q_y\}$ and
$F_C$ contains all seven edges incident to $Z_C$ and the non-edge $q_xq_y$.

In the last case, $V_C \subseteq Z_C$, and $q_xq_y,q_yq_z,q_zq_x \in F_C$.
There are six edges connecting $V_C$ and $V(A_x) \cup V(A_y) \cup V(A_z)$ in $G$, and all these edges
are incident to different vertices of $V(A_x) \cup V(A_y) \cup V(A_z)$. Let $uv$ be one of these six edges,
$u \in V_C$, $v \notin V_C$. If $v \in Z_C$ then $F$ contains five non-edges
connecting $v$ to $V_C \setminus \{u\}$. Otherwise, if $v \notin Z_C$,
$F$ contains the edge $uv$. We infer that $F_C$ contains at least six elements
that have exactly one endpoint in $V_C$  and $|F_C| \geq 9$.

We now note that the sets $F_C$ for all clauses $C$ and the sets $F_x$ for all variables $x$
are pairwise disjoint. Recall that $|F_x| \geq 2s_x$ for any variable $x$ and
$|F_C| \geq 8$ for any clause $C$. As $|F| \leq 14m = 8m + \sum_x 2s_x$, we infer
that $|F_x| = 2s_x$ for any variable $x$, $|F_C| = 8$ for any clause $C$
and $F$ contains no elements that are not in any set $F_x$ or $F_C$.

As $|F_x| = 2s_x$ for each variable $x$, the set $F_x$ consists of every second edge
of the cycle $A_x$. We construct an assignment $\phi$ as follows: $\phi(x)$ is true if for all clauses $C$ where $x$ appears
we have $a_{x,C}^2a_{x,C}^3,a_{x,C}^4a_{x,C}^5 \in F$ and $\phi(x)$ is false
if $a_{x,C}^1a_{x,C}^2,a_{x,C}^3a_{x,C}^4 \in F$. We claim that
$\phi$ satisfies $\Phi$. Consider a clause $C$ with variables $x$, $y$, and $z$.
As $|F_C| = 8$, by the analysis above one of two situations occur: $|Z_C| = 4$,
   say $Z_C = \{p_x,p_y,p_z,q_x\}$, or $|Z_C| = 5$, say $Z_C = \{p_x,p_y,p_z,q_x,q_y\}$.
In both cases, $F_C$ consists only of all edges of $G$ that connect $Z_C$ with $V(G) \setminus Z_C$
and the non-edges of $G[Z_C]$. Thus, in both cases the two edges that connect $q_z$
with the cycle $A_z$ are not in $F$. Thus, the two neighbors of $q_z$ on the cycle $A_z$
are connected by an edge not in $F$, and $\phi(z)$ satisfies the clause $C$.
\end{proof}


Lemma~\ref{lem:eth1} directly implies the proof of Theorem~\ref{thm:eth}

\begin{proof}[Proof of Theorem~\ref{thm:eth}]
A subexponential algorithm for \clustering{}, combined with the reduction
shown in Lemma~\ref{lem:eth1}, would give a subexponential (in the number of variables
and clauses) algorithm for verifying satisfiability of $3$-CNF formulas.
An existence of such algorithm is known to violate ETH \cite{ImpagliazzoPZ01}.
\end{proof}

We note that the graph constructed in the proof of Lemma \ref{lem:eth1} is of maximum
degree $5$. Thus our reduction shows that sparse instances of \clustering{}
where in the output the clusters are of constant size are hard.

\section{Conclusion and open questions}\label{sec:conclusions}
We gave an algorithm that solves  \textsc{$p$-Cluster Editing} in time $\cO(2^{\cO(\sqrt{pk})} +n+m)$ and complemented it by a multivariate lower bound, which shows that the running time of our algorithm is asymptotically tight for all $p$ sublinear in $k$.

In our multivariate lower bound it is crucial that the cliques and clusters are arranged in groups of six. However, the drawback of this construction is that Theorem \ref{thm:multivariate-reduction} settles the time complexity of \pclustering problem only for $p \geq 6$ (Corollary \ref{cor:fixed-p}). It does not seem unreasonable that, for example, the $2$-\textsc{Cluster Editing} problem, already NP-complete \cite{ShamirST04}, may have enough structure to allow an algorithm with running time $\cO(2^{o(\sqrt{k})}+n+m)$. Can we construct such an algorithm or refute its existence under ETH?

Secondly, we would like to point out an interesting link between the subexponential parameterized complexity of the problem and its approximability. When the number of clusters drops from linear to sublinear in $k$, we obtain a phase transition in parameterized complexity from exponential to subexponential. As far as approximation is concerned, we know that bounding the number of clusters by a constant allows us to construct a PTAS~\cite{GiotisG06}, whereas the general problem is APX-hard~\cite{CharikarGW05j}. The mutual drop of the parameterized complexity of a problem --- from exponential to subexponential --- and of approximability --- from APX-hardness to admitting a PTAS --- can be also observed for many hard problems when the input is constrained by additional topological bounds, for instance excluding a fixed pattern as a minor~\cite{DemaineFHT05jacm,DemaineH05a,FLRSsoda2011}. It is therefore an interesting question, whether \pclustering also admits a PTAS when the number of clusters is bounded by a non-constant, yet sublinear function of $k$, for instance $p=\sqrt{k}$.

\paragraph*{Acknowledgements} We thank Christian Komusiewicz for pointing us
to the recent results on \clustering{} \cite{bocker:iwoca,komusiewicz:sofsem} and his thesis
\cite{komusiewicz:thesis}. Moreover, we thank P\aa l Gr\o n\aa s Drange, M. S. Ramanujan and Saket Saurabh for helpful discussions.

\bibliographystyle{plain}
\bibliography{clustering}

\end{document}